\newtheorem{cor}{Corollary}[section]
\newtheorem{lem}{Lemma}[section]
\newtheorem{ass}{Assumption}[section]
\theoremstyle{definition}
\newtheorem{rem}{Remark}[section]
\numberwithin{equation}{section}
\newcommand{\E}{\mathbb{E}}
\newcommand{\R}{\mathbb{R}}
\newcommand{\argmax}{\operatornamewithlimits{argmax}}
\newcommand{\argmin}{\operatornamewithlimits{argmin}}
\DeclareMathOperator*{\supp}{supp} 
\DeclareMathOperator*{\tr}{tr} 
\newcommand{\N}{\mathbb{N}}
\renewcommand{\thesection}{\Alph{section}}
\begin{document}
	\title[]{Estimating Stochastic Block Models in the Presence of Covariates
	}

	\author[Kitamura]{Yuichi Kitamura}
	\address{Cowles Foundation for Research in Economics, Yale University, New
		Haven, CT 06520.}
	\email{yuichi.kitamura@yale.edu}     
	\author[Laage]{Louise Laage}
	\address{Department of Economics, Georgetown University, Washington, D.C., 20007.}
	\email{louise.laage@georgetown.edu}

	\date{This Version: Feb 25, 2024.} 
	
	\thanks{\emph{Keywords}: Network Data.}
	\thanks{JEL Classification Number: C14}
	
	\begin{abstract} 
			
				In the standard stochastic block model for networks, the probability of a connection 
			between two nodes, often referred to as the edge probability, depends on the unobserved 
			communities each of these nodes belongs to.  We consider a flexible framework in which each edge
			probability, together with the probability of community assignment, are also impacted by 
			observed covariates. We propose a computationally tractable two-step procedure to estimate 
			the conditional edge probabilities as well as the community assignment probabilities. The first step relies on a 
			spectral clustering algorithm applied to a localized adjacency matrix of the network. In the
			second step, $k$-nearest neighbor regression estimates are computed on the extracted 
			communities. We study the statistical properties of these estimators by providing 
			non-asymptotic bounds.
			
	\end{abstract}
	
	\maketitle

\section{Introduction}\label{sec:intro}

The Stochastic Block Model (SBM) is a powerful yet convenient framework  for network data analysis, by postulating that each node in a network belongs to a community, in the presence of  a finite number of communities.  The community assignments are unobserved to the researcher.    A SBM is highly effective  in applications where community membership can be regarded as  a discretized version of unobserved heterogeneity.    The standard SBM has been applied  widely in diverse areas, where  an algorithm based on spectral clustering is often used.  Statistical  properties of  these methodologies have been investigated intensively in the recent literature; see, \citeasnoun{lei2015consistency}, \citeasnoun{joseph2016impact} and \citeasnoun{rohe2016co}, just to name a few.   

As an alternative approach, one can employ a model  with more specific structures for the edge probability functions, while  incorporating (possibly continuous-valued) unobserved heterogeneity through node-specific fixed effects.  An advantage of such an approach is its ease of incorporating  observed covariates into the model.  If one is willing to accept a specific form of the  edge probability function, e.g. scalar valued fixed effects representing unobserved heterogeneity, and the estimation algorithm is computationally feasible, then such a method is practical and intuitive to use, and by having covariates in the model, it can shed valuable insights on, for example, the magnitude of homophily (or heterophily) effects in terms of observed characteristics in a given network.  

The main goal of our paper is to develop a procedure  that incorporates  both unobserved heterogeneity (through unobserved community assignments) and  observed heterogeneity (through covariates) building on the SBM framework.  More specifically, our version of SBM lets each edge
probability, together with the probability of community assignment, be impacted by 
observed covariates in a flexible manner.  That is, we postulate that the edge probabilities and  the community assignment probabilities are nonparametric function of covariates.  Unobserved community assignment enters the model in a fully unrestricted way, since it simply indexes (or used as a label for) the edge probability  and the community assignment probability.   We note that letting the community assignment probability depend on covariates nonparametrically is an important feature of our model.  It means that the observed covariates (or observed heterogeneity)  and the unobserved heterogeneity can be correlated in an unspecified way, a feature often considered to be highly desirable in econometrics.  Of course, this generality comes at the cost of additional technical complications, a part of  the many theoretical challenges presented by our model, as is discussed shortly.     

We propose a computationally tractable two-step procedure to estimate 
the conditional edge probabilities as well as the community assignment probabilities. The first step relies on a 
spectral clustering algorithm applied to a localized adjacency matrix of the network.  We build on the $k$-nearest neighbor ($k$-nn) algorithm in our localization procedure, followed by the Singular Value Decomposition (SDV) to compute left/right singular vectors of the localized adjacency matrix, or, more precisely, a localized and normalized version of the Laplacian.  Note that we need to employ SVD, as opposed to the  eigen-value decomposition often used for the standard spectral clustering, even though we consider undirected network.  This is because our localized and normalized network Laplacian  depends on the covariate values at each of the two nodes, resulting its asymmetry.    We then apply $K$-means clustering to extract communities, as in  the standard spectral clustering algorithm, though it is implemented  at each covariate value.    As in the literature of the standard SBM without covariates, we provide non-asymptotic bounds for  misclassificaiton error.        Since we allow for correlation of unknown form  between the community assignment and the covariates, it naturally induces independent but non-identically distributed (i.n.i.d) covariates when conditioned on community assignments, even though we assume random sampling for community assignments and observed covariates.  We thus obtain some non-asymptotic results for  the $k$-nn estimator under an i.n.i.d sampling; those can be of independent theoretical interest.

 In the
second step, once again we employ the $k$-nn regression algorithm,  but this time in order to estimate the edge probability matrix and the community assignment probability vector.   We apply the $k$-nn estimator  to the extracted communities obtained from the first step.    We study the statistical properties of these estimators while taking account of the effects of classification error in the first step, by obtaining 
non-asymptotic bounds for them.

\subsection{Notation.} 

\

For a matrix $M$, $\|M\|$ denotes its spectral norm, $\|M\|_F$ its Frobenius norm, $\|M\|_{\max}$ its max norm. $\mathcal{B}(x,r) \subset \R^{d}$ is the closed ball of center $x$ and radius $r$, $\lambda$ is the Lebesgue measure on $\R^d$,  $V_d = \int_{\mathcal{B}(0,1)}\text{d}\lambda$ and $I_k$ is the identity matrix of size $k$.

\section{Model and Estimator}

\subsection{Stochastic Block Model with Covariates}

\

We consider a network of $N$ nodes such that each node $i$ has a $d$-vector of covariates $x(i)$ and belongs to a community $g(i) \in [G]$ where $G \in \N$. For each node $i$, the researcher observes  $x(i)$ but does not observe the community $g(i)$.  Let $\mathbb{M}_{N,G} \subset \R^{N \times G}$ the set of membership matrices, i.e., of matrices such that each row has exactly one nonzero coefficient set to $1$. Let $\theta_i \in \R^G$ be a vector such that $\theta_{i g(i)} = 1$ and all other coefficients are $0$. Then $\Theta := (\theta_1, ..., \theta_N)^{\top} \in \mathbb{M}_{N,G}$.
The researcher also observes the adjacency matrix $A=\left(A_{ij}\right)_{1\leq i,j\leq N}$. Define ${\bf x} = \{x(i), i \in [N]\}$ and ${\bf g} = \{g(i), i \in [N]\}$.
In our stochastic block model with covariates, the distribution of the adjacency matrix conditional on ${\bf x}$ and ${\bf g}$
is given by a matrix-valued function $B: (x,x') \mapsto B(x,x') \in \R^{G\times G}$: conditional on ${\bf x}$ and ${\bf g}$, the entries $A_{ij}$ are i.i.d Bernouilli random variables with 
$$
\Pr(A_{ij} =1 | {\bf x}, {\bf g}) = \Pr(A_{ij} =1 | x(i),x(j),g(i),g(j)) = B_{g(i)g(j)}(x(i),x(j)).
$$
We note that the function $B$ may vary with $N$ and thus allows for sparsity. The distribution of $g(i)$ conditional on $x(i)$ is given by the functions $x \mapsto \pi_g(x) = \Pr(g(i)=g|x(i)=x)$, for $g\in[G]$. We assume that $(x(i),g(i))_{i \in [N]}$ are i.i.d random variables, and let $\mathcal S_X = \supp(x(i))$.    However we state some of our results as nonasymptotic bounds conditional on ${\bf g}$, in which case it is understood that we only assume that $(x(i))_{i \in [N]}$ are independent draws and the marginal distribution of each depend on $g(i)$ only.

\subsection{Construction of the estimator}\label{sec:defEstim}

\

Our primary parameters of interest are $B(x,x')$ and $\pi(x) = (\pi_1(x),...,\pi_G(x))$ for some $(x,x') \in \mathcal{S}_X \times \mathcal{S}_X$. Our estimators of these quantities rely on a spectral clustering algorithm applied to a truncated adjacency matrix, following the intuition of $k$-nearest neighbor regression.
Before describing the algorithm and our estimators, we introduce the following notations. 

\medskip

Let $k \in \N$. 
We define the $k$-nearest neighbor ($k$-NN) radius of $x$ as 
$r_k(x) := \inf \{ r>0 \, : \, |\mathcal{B}(x,r) \cap {\bold x} | = k \}$. The $k$-neighborhood of $x$  is 
$
\eta_N(x) := \{ i \in [N] \, : \, ||x(i) - x|| \leq r_k(x)\}.
$
Let $\eta_N(x,x') := \{(x(i),x(j)): x(i) \in \eta_N(x) \text{ and } x(j) \in \eta_N(x') \}$. For ease of notation, we sometimes write $\eta(x)$. 
For these units, we define the adjacency matrix ${A^{\eta}} \in \R^{k \times k}$: it is a submatrix of the network adjacency matrix $A$ where rows index units in $\eta(x)$ and columns index units in $\eta(x')$.
Similarly,
$\Theta^{\eta}(x) \in \mathbb{M}_{k,G}$ is the membership matrix for the individuals in $\eta(x)$: it is a submatrix of $\Theta$ where rows index units in $\eta(x)$ and columns index communities. Note that we drop the dependence of ${A^{\eta}}$ in $x$ and $x'$ for ease of notation. Let  $O^\eta \in  \R^{k \times k}$ and $Q^\eta \in  \R^{k \times k}$ be the diagonal matrices such that $O_{ii}^\eta := \sum_{j \in \eta_N(x')} A_{ij}^\eta$ and $Q_{jj}^\eta := \sum_{i \in \eta_N(x)} A_{ij}^\eta$. For $\tau > 0$, we also define $O_\tau^\eta := O^\eta + \tau I_{k}$ and $Q_\tau^\eta := Q^\eta + \tau I_{k'}$.  Finally, by analogy with the regularized graph Laplacian in the symmetric case, let 
$$L_\tau^\eta : = {(O_\tau^\eta)}^{-\frac 1 2} A^\eta  {(Q_\tau^\eta)}^{- \frac 1 2}. $$
Our estimation procedure first applies a spectral clustering algorithm to $L_\tau^\eta$ to estimate the communities of units in $\eta(x)$ and $\eta(x')$. This spectral clustering algorithm proceeds as follows.

\medskip

\textit{\textbf{Spectral Clustering Algorithm:}}

\begin{itemize}
	\item[] \textit{\textbf{Input: }} $L_\tau^\eta$, $G$, approximation parameter for $k$-means.
	
	\item[] \textit{\textbf{Output: }} $\widehat{\Theta}^\eta(x) \in \mathbb{M}_{k,G}$ and $\widehat{\Theta}^\eta(x') \in\mathbb{M}_{k,G}$ estimators of the membership matrices.
	
	\item[] \textit{\textbf{Steps: }} 
	
	\begin{enumerate}
		\item Obtain the singular value decomposition of $L_\tau^\eta$ with $L_\tau^\eta = \sum_{g=1}^{G} \lambda_g(x,x') U_g V_g'$. Let $U \in \R^{k\times G}$ and  $V \in \R^{k\times G}$ be the matrices of the top $G$ left and right singular vectors, respectively.
				
		\item Apply $K$-means clustering on the rows of $U$, output $\widehat{\Theta}^\eta(x)$.
		
		\item Apply $K$-means clustering on the rows of $V$, output $\widehat{\Theta}^\eta(x')$.		
	\end{enumerate}
\end{itemize}

\medskip

Once these communities are estimated, we estimate $B_{gh}(x,x')$ and $\pi_g(x)$ by running $k$-NN regressions on the appropriate communities. For $i \in \eta(x)$, let $\hat{g}(i)$ be the estimated community, i.e., such that $\left(\widehat{\Theta}^\eta(x)\right)_{i, \hat{g}(i)} = 1$ and $\left(\widehat{\Theta}^\eta(x)\right)_{i,g} = 0$ for any $g \neq \hat{g}(i)$. Estimated communities of units in $\eta(x')$ are similarly defined. Let  $\mathcal{G}_{h N}(x) = \{g(i) = h, i \in \eta_N(x)\}$ and $n_h(x) = |\mathcal{G}_{h N}(x)|$.  Our estimator for $n_h(x)$ is $\hat{n}_h(x) := \# \{i \in \eta(x) \, : \, \hat{g}(i) = h  \}$.
We can finally introduce our estimators of the connection probabilities and community probabilities. The estimator of $\pi_h(x)$ is
\begin{equation}
	\widehat{\pi}_h(x)  = \frac{\hat{n}_h(x)}{k}.  \label{eq:hat_prop}
\end{equation}
The estimator of $B_{gh}(x,x')$ is
\begin{equation}
	\widehat{B}_{gh}(x,x')  = \frac{1}{\hat{n}_g(x) \hat{n}_h(x')} \ \sum_{\substack{i \in \eta(x) \, : \, \hat{g}(i)=g\\ j \in \eta(x') \, : \, \hat{g}(j)=h}}  A_{ij}. \label{eq:hat_graphon} 
\end{equation}

\subsection{Assumptions}\label{sec:assn}

\

We will maintain the following assumptions where $S$ is a subset of $\mathcal{S}_X$, the support of $x$.
\begin{ass}\label{assn:boundsupport}
	There exist  constants  $c>0$ and $T>0$  such that
	\begin{equation}
		\lambda(S \cap \mathcal{B}(x,\tau)) \geq c \lambda(\mathcal{B}(x,\tau)), \ \forall \tau \in (0,T], \, \forall x \in S. \label{eq:boundsupport}
	\end{equation}
\end{ass}
We consider the case where the covariates are continuous. Define the density of $x(i)$ given $g(i)=g$ as $f(.|g)$,  for $g \in [G]$. Define also
$\underline{f}(x) := \min_{g \in [G]} \ f(x|g) $. 
\begin{ass}\label{assn:boundinff} 
	There exist constants  $U_X$ and $b_X >0$   such that
	\begin{align}
		U_X \geq \underline{f}(x) \geq& b_X, \ \forall \, x \in S. \label{eq:boundinff} 
	\end{align}
\end{ass} 
Similarly, define 
$
\overline{f}(x) := 
\max_{g \in [G]} \ f(x|g).
$
\begin{ass}\label{assn:boundsupf}
	There exist constants $\overline{U}_X$ and $\overline{b}_X > 0$ such that 
	\begin{equation}\label{eq:boundsupf}
		\overline{U}_X \geq \overline{f}(x) \geq \overline{b}_X, \ \forall \, x \in S.
	\end{equation}
\end{ass}
Some smoothness assumptions will be imposed to study our estimators.
\begin{ass}\label{assn:LipshitzB}
	$(x,x') \mapsto B_{gh}(x,x')$ is Lipschitz continuous for all $(g,h) \in [G]^2$. We denote  $l_B$ the smallest Lipschitz constant.
\end{ass} 
\begin{ass}\label{assn:LipshitzPi}
	$x \mapsto \pi_{g}(x)$ is Lipschitz continuous for all $g\in[G]$. We denote $l_\pi$ the smallest Lipschitz constant. 
\end{ass}
Note that $l_B$ may depend on $N$ and thus captures sparsity.

\section{Clustering}

We fix a pair $(x,x') \in S^2$. The main result of this section is a high probability  finite sample bound on a misclustering measure applied to  $\widehat{\Theta}^\eta(x)$ and  $\widehat{\Theta}^\eta(x')$. This result is obtained in 3 steps. We first obtain a finite sample bound on the difference between the graph Laplacian and its population counterpart. We then apply a version of Davis-Kahan theorem to bound the difference between $\hat{U}$, $\hat{V}$, and their population counterparts. Finally, we use theoretical properties of the $K$-means algorithm to bound the misclustering error.

\subsection{Convergence of the graph Laplacian}

\

To introduce the population counterpart of $L_\tau^\eta$, we define the following objects. Let $P({\bf x},{\bf g}) := {\mathbb E}[A|{\bf x},{\bf g}]$. Thus, $P_{ij}({\bf x},{\bf g}) = {\mathbb E}[A_{ij}|x(i),x(j),g(i),g(j)]$. Let also $P_{ij}(x,x',{\bf g}) = {\mathbb E}[A_{ij}|x(i) = x,x(j) = x',g(i),g(j)] $ and  $P(x,x',{\bf g}) = \left(P_{ij}(x,x',{\bf g})\right)_{1\leq i,j \leq N}$. As in Section \ref{sec:defEstim}, we also define the localized matrices $P^\eta({\bf x},{\bf g})$ and $P^\eta(x,x',{\bf g})$ as submatrices of the matrices $P({\bf x},{\bf g})$ and $P(x,x',{\bf g})$ where rows index units in $\eta(x)$ and columns index units in $\eta(x')$.
Let  $\mathcal{O}^\eta({\bf x},{\bf g}) \in  \R^{k \times k}$, $\mathcal{O}^\eta(x,x',{\bf g}) \in  \R^{k \times k}$, $\mathcal{Q}^\eta({\bf x},{\bf g}) \in  \R^{k \times k}$ and $\mathcal{Q}^\eta(x,x',{\bf g}) \in  \R^{k \times k}$ be the diagonal matrices such that
${\mathcal O}_{ii}^\eta({\bf x},{\bf g}) := \sum_{j \in \eta_N(x')   } {\mathbb E}[A_{ij}^\eta|{\bf x},{\bf g}]$, ${\mathcal O}_{ii}^\eta(x,x',{\bf g}) := \sum_{j \in \eta_N(x')   }   P_{ij}(x,x',{\bf g})$, ${\mathcal Q}_{jj}^\eta({\bf x},{\bf g}) := \sum_{i \in \eta_N(x)} {\mathbb E}[A_{ij}^\eta|{\bf x},{\bf g}]$ and finally ${\mathcal Q}_{jj}^\eta(x,x',{\bf g}) := \sum_{i \in \eta_N(x)} P_{ij}(x,x',{\bf g})$. Define also 
$\mathcal{O}_\tau^\eta({\bf x},{\bf g}) := \mathcal{O}^\eta({\bf x},{\bf g})+ \tau I_{k}$, $\mathcal{O}_\tau^\eta(x,x',{\bf g}) := \mathcal{O}^\eta(x,x',{\bf g})+ \tau I_{k}$, $\mathcal{Q}_\tau^\eta({\bf x},{\bf g}) := \mathcal{Q}^\eta({\bf x},{\bf g})+ \tau I_{k}$ and $\mathcal{Q}_\tau^\eta(x,x',{\bf g}) := \mathcal{Q}^\eta(x,x',{\bf g})+ \tau I_{k}$. Finally, let
\begin{align*}
	{\mathcal L}_\tau^\eta({\bf x},{\bf g}) & := ({\mathcal O}_\tau^\eta({\bf x},{\bf g}))^{-\frac 1 2} P^\eta({\bf x},{\bf g})   ({\mathcal Q}_\tau^\eta({\bf x},{\bf g}))^{-\frac 1 2},\\
	{\mathcal L}_\tau^\eta(x,x',{\bf g}) & := ({\mathcal O}_\tau^\eta(x,x',{\bf g}))^{-\frac 1 2} P^\eta(x,x',{\bf g})   ({\mathcal Q}_\tau^\eta(x,x',{\bf g}))^{-\frac 1 2}.
\end{align*}	
To obtain a finite sample bound, we will rely on concentration inequalities for symmetric matrices. However, $P^\eta(x,x',{\bf g}) $ and $A^\eta$ are not symmetric in general. Following \citeasnoun{rohe2016co}, we first focus on their Hermitian dilations,  where  the Hermitian dilation of a matrix $M$, denoted $\widetilde{M}$, is given by
$$
\widetilde{M} = 
\begin{pmatrix}
	0 & M\\
	M^\top & 0
\end{pmatrix}.
$$
Define the minimum degree
$$d_{\min}(x,x', {\bold x}) := \min \left(\min_{i \in \eta_N(x)} {\mathcal O}_{ii}(x,x',{\bf g}), \min_{j \in \eta_N(x')} {\mathcal Q}_{jj}(x,x',{\bf g}),\min_{i \in \eta_N(x)} {\mathcal O}_{ii}({\bf x},{\bf g}), \min_{j \in \eta_N(x')} {\mathcal Q}_{jj}({\bf x},{\bf g})\right).$$
In most of the computations, we drop the dependence in $(x,x', {\bold x})$. 
Define 
\begin{align}
	R_{k} &:= \left(\frac{2k}{N b_X c V_d}\right)^{1/d}\label{eq:maxradius}.
\end{align}
\begin{lem}\label{lem:boundLaplacians}
	Let Assumptions \ref{assn:boundsupport}, \ref{assn:boundinff}, \ref{assn:boundsupf} and \ref{assn:LipshitzB} hold. For all $N$, $\delta \in (0,1)$, $\tau >0$ and $1 \leq k \leq N$ such that 
	\begin{enumerate}
		\item \label{cond:radius} $\sup_{x \in S} r_k(x) \leq R_k$,
		\item \label{cond:mindeg} $3 \ln (8k/\delta)\leq d_{\min} + \tau$,
	\end{enumerate}
	with probability at least $1-\delta$ conditional on $({\bold g},{\bold x})$, it holds that
	\begin{align}\label{eq:condLapl}
		 \|\widetilde{L_\tau^\eta} - \widetilde{{\mathcal L}_\tau^\eta}(x,x',{\bf g}) \|  \leq   4\sqrt{\frac{3 \ln (8k/\delta)}{d_{\min} + \tau}} + \frac{2 k l_B R_k}{ d_{\min} + \tau} \left( \frac{2 k l_B R_k}{ d_{\min} + \tau} + 3\right).
	\end{align}
\end{lem}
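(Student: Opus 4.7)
The plan is to split the target spectral norm via the triangle inequality around the intermediate object $\widetilde{\mathcal{L}_\tau^\eta}({\bf x},{\bf g})$:
\begin{equation*}
\|\widetilde{L_\tau^\eta} - \widetilde{\mathcal{L}_\tau^\eta}(x,x',{\bf g})\|
\leq \|\widetilde{L_\tau^\eta} - \widetilde{\mathcal{L}_\tau^\eta}({\bf x},{\bf g})\|
+ \|\widetilde{\mathcal{L}_\tau^\eta}({\bf x},{\bf g}) - \widetilde{\mathcal{L}_\tau^\eta}(x,x',{\bf g})\|,
\end{equation*}
so that the first piece is a pure stochastic deviation (with $P^\eta$ evaluated at the \emph{actual} sample of covariates), yielding the $4\sqrt{3\ln(8k/\delta)/(d_{\min}+\tau)}$ term, and the second piece is a deterministic Lipschitz-smoothing bias that produces the remainder $\alpha(\alpha+3)$, with $\alpha := 2kl_BR_k/(d_{\min}+\tau)$.

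For the stochastic piece, conditional on $({\bf x},{\bf g})$ the entries of $A^\eta$ are independent Bernoulli variables with means given by the entries of $P^\eta({\bf x},{\bf g})$. Following the argument of \citeasnoun{rohe2016co}, I would express the Hermitian-dilation difference as a sum of independent, zero-mean, rank-two symmetric matrices indexed by the pairs $(i,j) \in \eta(x) \times \eta(x')$, whose nonzero entries are $(O_\tau^\eta)^{-1/2}(A_{ij}^\eta - P_{ij}^\eta({\bf x},{\bf g}))(Q_\tau^\eta)^{-1/2}$. Each summand has operator norm at most $(d_{\min}+\tau)^{-1/2}$, and the matrix variance of the sum is similarly controlled by $(d_{\min}+\tau)^{-1}$. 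A matrix Bernstein inequality then applies, and condition \ref{cond:mindeg} ($3\ln(8k/\delta) \leq d_{\min}+\tau$) is precisely what is needed so that the sub-Gaussian term dominates the Bernstein correction, producing the clean form $4\sqrt{3\ln(8k/\delta)/(d_{\min}+\tau)}$.

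For the smoothing bias, condition \ref{cond:radius} gives $\|x(i)-x\| \leq R_k$ for every $i \in \eta(x)$ (and analogously for $\eta(x')$), so Assumption \ref{assn:LipshitzB} yields the entry-wise bound $|P^\eta({\bf x},{\bf g})_{ij} - P^\eta(x,x',{\bf g})_{ij}| \leq 2l_BR_k$. Summing $k$ such entries yields $\|\mathcal{O}_\tau^\eta({\bf x},{\bf g}) - \mathcal{O}_\tau^\eta(x,x',{\bf g})\|$, $\|\mathcal{Q}_\tau^\eta({\bf x},{\bf g}) - \mathcal{Q}_\tau^\eta(x,x',{\bf g})\|$ and $\|P^\eta({\bf x},{\bf g}) - P^\eta(x,x',{\bf g})\|$ all $\leq 2kl_BR_k$ (the last via the row/column-sum bound on the spectral norm). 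Then I telescope the difference of the normalized Laplacians into three pieces, differencing one factor at a time, and bound each using $\|(\mathcal{O}_\tau^\eta)^{-1/2}\|, \|(\mathcal{Q}_\tau^\eta)^{-1/2}\| \leq (d_{\min}+\tau)^{-1/2}$ together with the elementary Lipschitz bound $|a^{-1/2}-b^{-1/2}| \leq |a-b|/(2(d_{\min}+\tau)^{3/2})$ for $a,b \geq d_{\min}+\tau$; three linear contributions give the $3\alpha$ term, and the cross-product of two $(\cdot)^{-1/2}$-differences contributes the $\alpha^2$ term.

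The main obstacle is the last step, because $\|P^\eta(x,x',{\bf g})\|$ can be of order $\max_i \mathcal{O}_{ii}(x,x',{\bf g})$ and is not in general small compared to $d_{\min}+\tau$, so a naive chain of spectral-norm inequalities does not close. The remedy is to reorganize each telescoped piece by absorbing a factor of $P^\eta(x,x',{\bf g})$ into the \emph{full} regularized Laplacian $(\mathcal{O}_\tau^\eta)^{-1/2} P^\eta(x,x',{\bf g}) (\mathcal{Q}_\tau^\eta)^{-1/2}$, whose operator norm is at most $1$ (established by a direct Cauchy--Schwarz argument showing $P^\eta (\mathcal{Q}_\tau^\eta)^{-1} (P^\eta)^\top \preceq \mathcal{O}_\tau^\eta$), so that only the small Lipschitz factors of size $\alpha$ remain in the bound. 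Making this bookkeeping land on the compact form $\alpha^2+3\alpha$, rather than a looser polynomial in $\alpha$, is where the care is required.
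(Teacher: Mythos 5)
Your high-level split into a stochastic piece plus a deterministic Lipschitz bias mirrors the paper, and your treatment of the bias piece (telescoping the three factors, using that the population-normalized Laplacian has operator norm at most $1$ to absorb $P^\eta$, and landing on $\alpha^2 + 3\alpha$ with $\alpha = 2kl_B R_k/(d_{\min}+\tau)$) is essentially the same as the paper's bound on $B_4 = B_{41} + B_{42} + B_{43}$. The gap is in the stochastic piece. You write the summands as $(O_\tau^\eta)^{-1/2}\bigl(A_{ij}^\eta - P_{ij}^\eta({\bf x},{\bf g})\bigr)(Q_\tau^\eta)^{-1/2}$ and then invoke matrix Bernstein; but $O_\tau^\eta$ and $Q_\tau^\eta$ are the \emph{empirical} degree matrices, i.e.\ they are functions of the whole adjacency matrix $A^\eta$, so the summands as written are neither independent nor mean-zero and matrix Bernstein does not apply to them. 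If instead you mean the deterministic normalizers $\mathcal O_\tau^\eta({\bf x},{\bf g})$ and $\mathcal Q_\tau^\eta({\bf x},{\bf g})$, the sum becomes $B_3 := (\mathcal D_\tau^\eta({\bf x},{\bf g}))^{-1/2}\bigl[\widetilde{A^\eta} - \widetilde{P^\eta}({\bf x},{\bf g})\bigr](\mathcal D_\tau^\eta({\bf x},{\bf g}))^{-1/2}$, which is not the whole of $\widetilde{L_\tau^\eta}-\widetilde{\mathcal L_\tau^\eta}({\bf x},{\bf g})$: it misses the fluctuation of the empirical degree normalization itself.

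The paper closes that gap by an extra decomposition $\widetilde{L_\tau^\eta}-\widetilde{\mathcal L_\tau^\eta}({\bf x},{\bf g}) = B_1 + B_2 + B_3$, where $B_1, B_2$ isolate $(D_\tau^\eta)^{-1/2}-(\mathcal D_\tau^\eta({\bf x},{\bf g}))^{-1/2}$ and are controlled by a two-sided Chernoff concentration of the diagonal degrees around their conditional means, giving $\|B_1\|+\|B_2\|\le 3a_1$ with the same failure probability budget, while $B_3$ is the genuine matrix concentration term giving $\|B_3\|\le a_1$; summing produces the leading constant $4$ in $4\sqrt{3\ln(8k/\delta)/(d_{\min}+\tau)}$. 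A single matrix Bernstein application as you sketch would not by itself yield that $4$, and more importantly leaves the random degree normalization uncontrolled. To complete the argument you need a separate concentration step for the degrees (e.g.\ a Bernstein/Chernoff bound for $|\,(D_\tau^\eta)_{ii} - (\mathcal D_\tau^\eta({\bf x},{\bf g}))_{ii}\,|$ uniformly over $i$) and then a union bound, tracking how this contributes multiplicative factors to the stochastic term.
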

Note that this finite sample bound holds conditionally on $({\bold g},{\bold x})$. It cannot be integrated directly because Conditions (\ref{cond:radius}) and (\ref{cond:mindeg}) are restrictions on $\bold x$. We show in Lemma \ref{lem:radius} that under certain assumptions, $\sup_{x \in S} r_k(x) \leq R_k$ holds with high probability and we derive in Lemma \ref{lem:bdlocaldegree} a high probability bound for $d_{\min}$. Lemma \ref{lem:integr} obtains a high probability bound on $\|\widetilde{L_\tau^\eta} - \widetilde{{\mathcal L}_\tau^\eta}(x,x',{\bf g}) \|$ which will hold conditional on $\bold g$ only.

\begin{proof}
	
	\
	
	Let
	$$D^\eta_\tau := \begin{pmatrix}
		O_\tau^\eta &  0  
		\\
		0   &  Q_\tau^\eta
	\end{pmatrix}, \ {\mathcal D}^\eta_\tau({\bf x},{\bf g}):= \begin{pmatrix}
		{\mathcal O}_\tau^\eta({\bf x},{\bf g}) &  0  
		\\
		0   &  {\mathcal Q}_\tau^\eta({\bf x},{\bf g})
	\end{pmatrix} \text{ and } {\mathcal D}^\eta_\tau(x,x',{\bf g}):= \begin{pmatrix}
		{\mathcal O}_\tau^\eta(x,x',{\bf g}) &  0  
		\\
		0   &  {\mathcal Q}_\tau^\eta(x,x',{\bf g})
	\end{pmatrix}.$$
	Note that 
	\begin{align*}
		\widetilde{L^\eta_\tau} &= (D_\tau^\eta)^{-\frac 1 2} \widetilde{A^\eta} (D_\tau^\eta)^{-\frac 1 2},\\
		\widetilde{{\mathcal L}_\tau^\eta}({\bf x},{\bf g}) & = {\mathcal D}^\eta_\tau({\bf x},{\bf g})^{-\frac 1 2} \widetilde{P^\eta}({\bf x},{\bf g}) {\mathcal D}^\eta_\tau({\bf x},{\bf g})^{-\frac 1 2}\\
		\widetilde{{\mathcal L}_\tau^\eta}(x,x',{\bf g}) & = {\mathcal D}^\eta_\tau(x,x',{\bf g})^{-\frac 1 2} \widetilde{P^\eta}(x,x',{\bf g}) {\mathcal D}^\eta_\tau(x,x',{\bf g})^{-\frac 1 2}.
	\end{align*}
	We decompose
	\begin{eqnarray*}
	\widetilde{L_\tau^\eta} - \widetilde{{\mathcal L}_\tau^\eta}(x,x',{\bf g}) &=& (D_\tau^\eta)^{-\frac 1 2} \widetilde{A^\eta} (D_\tau^\eta)^{-\frac 1 2} -       ({\mathcal D}^\eta_\tau(x,x',{\bf g}) )^{-\frac 1 2} \widetilde{P^\eta}(x,x',{\bf g})   ({\mathcal D}^\eta_\tau(x,x',{\bf g}) )^{-\frac 1 2} 
	\\
	&=& \underbrace{
		[(D_\tau^\eta)^{-\frac 1 2} -   ({\mathcal D}^\eta_\tau({\bf x},{\bf g}) )^{-\frac 1 2}]\widetilde{A^\eta} (D_\tau^\eta)^{-\frac 1 2}
	}_{=B_1} 
	+ 
	\underbrace{   ({\mathcal D}^\eta_\tau({\bf x},{\bf g}) )^{-\frac 1 2}\widetilde{A^\eta} 
	[(D_\tau^\eta)^{-\frac 1 2} -   ({\mathcal D}^\eta_\tau({\bf x},{\bf g}) )^{-\frac 1 2}]
	}_{= B_2}
	\\
	&&+   
	\underbrace{
	({\mathcal D}^\eta_\tau({\bf x},{\bf g}) )^{-\frac 1 2}\widetilde{A^\eta} 
	({\mathcal D}^\eta_\tau({\bf x},{\bf g}) )^{-\frac 1 2} -    ({\mathcal D}^\eta_\tau({\bf x},{\bf g}) )^{-\frac 1 2} \widetilde{P^\eta}({\bf x},{\bf g})   ({\mathcal D}^\eta_\tau({\bf x},{\bf g}) )^{-\frac 1 2}  
	}_{= B_3}
	\\
	&&+  
	\underbrace{
	   ({\mathcal D}^\eta_\tau({\bf x},{\bf g}) )^{-\frac 1 2} \widetilde{P^\eta}({\bf x},{\bf g})   ({\mathcal D}^\eta_\tau({\bf x},{\bf g}) )^{-\frac 1 2}    
	  - ({\mathcal D}^\eta_\tau(x,x',{\bf g}) )^{-\frac 1 2} \widetilde{P^\eta}(x,x',{\bf g})   ({\mathcal D}^\eta_\tau(x,x',{\bf g}) )^{-\frac 1 2}
	  }_{=B_4}. 
	\end{eqnarray*}
	Note that by Lemma \ref{lem:normL}, $\| \widetilde{L^\eta_\tau}  \|  \leq 1$. Thus
	\begin{eqnarray*}
		\|B_1\| &=& 	\|[I -   ({\mathcal D}^\eta_\tau({\bf x},{\bf g}) )^{-\frac 1 2}(D_\tau^\eta)^{\frac 1 2}](D_\tau^\eta)^{-\frac 1 2}\widetilde{A^\eta} (D_\tau^\eta)^{-\frac 1 2}\|
		\\
		&=&
			\|I -   ({\mathcal D}^\eta_\tau({\bf x},{\bf g}) )^{-\frac 1 2}(D_\tau^\eta)^{\frac 1 2}\widetilde{L^\eta_\tau} \|
			\\
			&\leq & 	\|I -   ({\mathcal D}^\eta_\tau({\bf x},{\bf g}) )^{-\frac 1 2}(D_\tau^\eta)^{\frac 1 2}\|.        
	\end{eqnarray*} 
	Likewise 
	\begin{eqnarray*}
		\|B_2\| &=& 
		\| ({\mathcal D}^\eta_\tau({\bf x},{\bf g}) )^{-\frac 1 2} (D_\tau^\eta)^{\frac 1 2}  (D_\tau^\eta)^{-\frac 1 2}     \widetilde{A^\eta} 
		(D_\tau^\eta)^{-\frac 1 2}[I  -   (D_\tau^\eta)^{\frac 1 2}({\mathcal D}^\eta_\tau({\bf x},{\bf g}) )^{-\frac 1 2}]\|
		\\
		&\leq& \|({\mathcal D}^\eta_\tau({\bf x},{\bf g}) )^{-\frac 1 2} (D_\tau^\eta)^{\frac 1 2} \|        
		\|I  -   (D_\tau^\eta)^{\frac 1 2}({\mathcal D}^\eta_\tau({\bf x},{\bf g}) )^{-\frac 1 2}   \|.  
	\end{eqnarray*}  
	We follow \citeasnoun{rohe2016co} and use the two-sided Chernoff concentration inequality of \citeasnoun{chung2006complex} (see Theorem 2.4) to obtain for $i \in \eta_N(x) \cup \eta_N(x')$,
	$$
	\Pr\left( | ({\mathcal D}^\eta({\bf x},{\bf g}) )_{ii} - (D^\eta)_{ii} | \geq a_0 |{\bf g},{\bf x} \right) \leq \exp \left(- \frac{a_0^2}{2({\mathcal D}^\eta({\bf x},{\bf g}) )_{ii}}\right) + \exp \left(- \frac{a_0^2}{2 ({\mathcal D}^\eta({\bf x},{\bf g}) )_{ii} + \frac{2}{3} a_0}  \right)
	$$
	where the absence of $\tau$ as a subscript indicates $\tau = 0$. Take $a_0 = a_1  ({\mathcal D}_\tau^\eta({\bf x},{\bf g}) )_{ii}$ where $0 < a_1 \leq 1$, then
	\begin{align*}
	\Pr\big[ | ({\mathcal D}_\tau^\eta({\bf x},{\bf g}) )_{ii} - (D_\tau^\eta)_{ii} | \geq a_1 ({\mathcal D}_\tau^\eta({\bf x},{\bf g}) )_{ii} |{\bf g},{\bf x}\big] &\leq 	2 \exp \left(- \frac{a_1^2 ({\mathcal D}_\tau^\eta({\bf x},{\bf g}) )_{ii} }{3} \right)\\
	& = 2 \exp \left( -\frac{a_1^2\left[ ({\mathcal D}^\eta({\bf x},{\bf g}) )_{ii} + \tau\right] }{3} \right) \\
	& \leq  2 \exp \left( -\frac{a_1^2\left[ d_{\min} + \tau\right] }{3} \right).
	\end{align*}
	Note that for any $x \geq 0$, $|\sqrt{x} - 1| \leq |x - 1|$.
	Thus
	we have 	
	\begin{align*}
			\Pr \left[\|I -   ({\mathcal D}^\eta_\tau({\bf x},{\bf g}) )^{-\frac 1 2}(D_\tau^\eta)^{\frac 1 2}\| \geq a_1|{\bf g},{\bf x}\right] &= \Pr \left[\max_{i \in \eta_N(x) \cup \eta_N(x')} \left|1 -  \sqrt{\frac{(D_\tau^\eta)_{ii}}{({\mathcal D}_\tau^\eta({\bf x},{\bf g}) )_{ii}}}\right| \geq a_1|{\bf g},{\bf x}\right]\\
			& \leq \Pr \left[\max_{i \in \eta_N(x) \cup \eta_N(x')} \left|1 -  \frac{(D_\tau^\eta)_{ii}}{({\mathcal D}_\tau^\eta({\bf x},{\bf g}) )_{ii}}\right| \geq a_1|{\bf g},{\bf x}\right]\\
			& \leq \sum_{i \in \eta_N(x) \cup \eta_N(x')} \Pr \left[| ({\mathcal D}_\tau^\eta({\bf x},{\bf g}) )_{ii} - (D_\tau^\eta)_{ii} | \geq a_1  ({\mathcal D}_\tau^\eta({\bf x},{\bf g}) )_{ii}|{\bf g},{\bf x}\right]\\
			& \leq  4k \exp \left( -\frac{a_1^2\left[d_{\min} + \tau\right] }{3} \right)
	\end{align*}
	On the event $\{ \|I -   ({\mathcal D}^\eta_\tau({\bf x},{\bf g}) )^{-\frac 1 2}(D_\tau^\eta)^{\frac 1 2}\| \leq a_1\}$, 
	$$
	\|({\mathcal D}^\eta_\tau({\bf x},{\bf g}) )^{-\frac 1 2} (D_\tau^\eta)^{\frac 1 2} \|    \leq \|I\| + \|I  -   (D_\tau^\eta)^{\frac 1 2}({\mathcal D}^\eta_\tau({\bf x},{\bf g}) )^{-\frac 1 2}   \| \leq 1 + a_1,
	$$
	which implies that $
	\|B_1\| + \|B_2\| \leq a_1^2 + 2 a_1 \leq 3a_1
	$.
	Thus, 
	\begin{align}
		\Pr\left(\|B_1\| + \|B_2\|  \leq 3a_1|{\bf g},{\bf x}\right) \geq 1- 4k \exp \left( -\frac{a_1^2\left[ d_{\min} + \tau\right] }{3} \right).\label{eq:boundB1B2}
	\end{align}
	The expression for $B_3$ simplifies to
	$$
	B_3 = ({\mathcal D}^\eta_\tau({\bf x},{\bf g}) )^{-\frac 1 2}\left[\widetilde{A^\eta} - \widetilde{P^\eta}({\bf x},{\bf g}) \right]  ({\mathcal D}^\eta_\tau({\bf x},{\bf g}) )^{-\frac 1 2}.
	$$
	We follow \citeasnoun{rohe2016co} and 
	decompose 
	$
		B_3  = \sum_{i \in \eta_N(x)} \sum_{j \in \eta_N(x')} Y_{i,j}
	$
	with 
	$$
	Y_{i,j} = \frac{A_{ij} - P_{ij}({\bf x},{\bf g}) }{\left( [{\mathcal O}_{ii}^\eta({\bf x},{\bf g})  + \tau] [{\mathcal Q}_{ii}^\eta({\bf x},{\bf g})  + \tau]  \right)^{1/2}} E^{i,k + j}
	$$
	where $E^{i,j}$ is the square matrix of size $2k$ with  ones at position $(i,j)$ and at position $(j,i)$ and zeros everywhere else. We apply a concentration inequality for symmetric matrices, see Theorem 5.4.1 of \citeasnoun{vershynin2018high}. 
	$\|E^{i,j} \| =1$ thus a bound on $\|Y_{i,j}\|$ is
	\begin{align*}
		\|Y_{i,j}\| & \leq  \left( [{\mathcal O}_{ii}^\eta({\bf x},{\bf g})  + \tau] [{\mathcal Q}_{ii}^\eta({\bf x},{\bf g})  + \tau]   \right)^{-1/2} \\ 
		& \leq \left(\left[d_{\min} + \tau\right]  \left[d_{\min} + \tau\right]  \right)^{-1/2} = \frac{1}{d_{\min} + \tau}
	\end{align*}
	and a bound on $\| \sum_{i \in \eta_N(x)} \sum_{j \in \eta_N(x')} \E[ (Y_{i,j})^2|{\bf g},{\bf x}] \|$
	is
	\begin{align*}
		\| \sum_{i \in \eta_N(x)}& \sum_{j \in \eta_N(x')}  \E[(Y_{i,j})^2|{\bf g},{\bf x}] \| = \| \sum_{i \in \eta_N(x)} \sum_{j \in \eta_N(x')}  \left[ \frac{ P_{ij}({\bf x},{\bf g}) - P_{ij}({\bf x},{\bf g})^2}{ [{\mathcal O}_{ii}^\eta({\bf x},{\bf g})  + \tau] [{\mathcal Q}_{jj}^\eta({\bf x},{\bf g})  + \tau] }  (E^{i,i} + E^{k+j, k+j})  \right] \|\\
		 &=  \|	\sum_{i \in \eta_N(x)} \left[  \sum_{j \in \eta_N(x')} \frac{ P_{ij}({\bf x},{\bf g}) - P_{ij}({\bf x},{\bf g})^2}{ [{\mathcal O}_{ii}^\eta({\bf x},{\bf g})  + \tau] [{\mathcal Q}_{jj}^\eta({\bf x},{\bf g})  + \tau] } \right] E^{i,i}\\
		 & \qquad \qquad \qquad + \sum_{j \in \eta_N(x')} \left[  \sum_{i \in \eta_N(x)} \frac{ P_{ij}({\bf x},{\bf g}) - P_{ij}({\bf x},{\bf g})^2}{ [{\mathcal O}_{ii}^\eta({\bf x},{\bf g})  + \tau] [{\mathcal Q}_{jj}^\eta({\bf x},{\bf g})  + \tau] }    \right]  E^{k+j, k+j} \| \\
		 & = \max\left(  \max_{i \in \eta_N(x)} \left( \sum_{j \in \eta_N(x')}  \frac{ P_{ij}({\bf x},{\bf g}) - P_{ij}({\bf x},{\bf g})^2}{ [{\mathcal O}_{ii}^\eta({\bf x},{\bf g})  + \tau] [{\mathcal Q}_{jj}^\eta({\bf x},{\bf g})  + \tau] } \right) , \right. \\
		 &\qquad \qquad \qquad \left. \max_{j \in \eta_N(x')} \left(  \sum_{i \in \eta_N(x)} \frac{ P_{ij}({\bf x},{\bf g}) - P_{ij}({\bf x},{\bf g})^2}{ [{\mathcal O}_{ii}^\eta({\bf x},{\bf g})  + \tau] [{\mathcal Q}_{jj}^\eta({\bf x},{\bf g})  + \tau] }  \right)  \right) \\
		 & \leq\max   \left( \frac{1}{d_{\min} + \tau} \max_{i \in \eta_N(x)} \left( \sum_{j \in \eta_N(x')}  \frac{ P_{ij}({\bf x},{\bf g})}{ {\mathcal O}_{ii}^\eta({\bf x},{\bf g})  + \tau  } \right) ,\frac{1}{d_{\min} + \tau}\max_{j \in \eta_N(x')} \left(  \sum_{i \in \eta_N(x)} \frac{ P_{ij}({\bf x},{\bf g}) }{ {\mathcal Q}_{jj}^\eta({\bf x},{\bf g})  + \tau }  \right)  \right) \\
		 & \leq \frac{1}{d_{\min} + \tau},
	\end{align*}
	where the third equality holds by definition of the spectral norm. We apply  Theorem 5.4.1 of \citeasnoun{vershynin2018high} and obtain 
	\begin{equation}\label{eq:boundB3}
		\Pr\left( \|B_3\|  \leq a_1|{\bf g},{\bf x} \right) \geq 1 - 4k\exp \left(-\frac{a_1^2 [d_{\min} + \tau] }{2+ 2a_1/3}\right).
	\end{equation}		
	The remaining term is $B_4$. We decompose
	\begin{eqnarray*}
		B_4 &=&
		\underbrace{
			[({\mathcal D}^\eta_\tau({\bf x},{\bf g}) )^{-\frac 1 2} - ({\mathcal D}^\eta_\tau(x,x',{\bf g}) )^{-\frac 1 2}] \widetilde{P^\eta}({\bf x},{\bf g})   ({\mathcal D}^\eta_\tau({\bf x},{\bf g}) )^{-\frac 1 2}  
		}_{= B_{41}} 
			\\
			&&+
			\underbrace{
		({\mathcal D}^\eta_\tau(x,x',{\bf g}) )^{-\frac 1 2} \widetilde{P^\eta}({\bf x},{\bf g})  	[({\mathcal D}^\eta_\tau({\bf x},{\bf g}) )^{-\frac 1 2} - ({\mathcal D}^\eta_\tau(x,x',{\bf g}) )^{-\frac 1 2}]
	}_{=B_{42}}
	\\	 &&+ 
		\underbrace{	
		({\mathcal D}^\eta_\tau(x,x',{\bf g}) )^{-\frac 1 2} \widetilde{P^\eta}({\bf x},{\bf g}) ({\mathcal D}^\eta_\tau(x,x',{\bf g}) )^{-\frac 1 2} 	
				- ({\mathcal D}^\eta_\tau(x,x',{\bf g}) )^{-\frac 1 2} \widetilde{P^\eta}(x,x',{\bf g})   ({\mathcal D}^\eta_\tau(x,x',{\bf g}) )^{-\frac 1 2}
			}_{=B_{43}}.
	\end{eqnarray*}
	 By Lemma \ref{lem:normmathcalL}, $\|\widetilde{\mathcal L_\tau^\eta}({\bf x},{\bf g}) \|  \leq 1$.  Thus, 
	\begin{eqnarray*}
	\|B_{41}\| &\leq& \|I - ({\mathcal D}^\eta_\tau(x,x',{\bf g}) )^{-\frac 1 2} ({\mathcal D}^\eta_\tau({\bf x},{\bf g}) )^{\frac 1 2}   \| \|\widetilde{\mathcal L_\tau^\eta}({\bf x},{\bf g}) \| 
	\\
	&\leq&  \|I - ({\mathcal D}^\eta_\tau(x,x',{\bf g}) )^{-\frac 1 2}  ({\mathcal D}^\eta_\tau({\bf x},{\bf g}) )^{\frac 1 2} \|,
	\end{eqnarray*}
	and
	\begin{eqnarray*}
		\|B_{42}\| &\leq& \| ({\mathcal D}^\eta_\tau(x,x',{\bf g}) )^{-\frac 1 2}  ({\mathcal D}^\eta_\tau({\bf x},{\bf g}) )^{\frac 1 2}\|  \|\widetilde{\mathcal L_\tau^\eta}({\bf x},{\bf g}) \| \|I -  ({\mathcal D}^\eta_\tau({\bf x},{\bf g}) )^{\frac 1 2}  ({\mathcal D}^\eta_\tau(x,x',{\bf g}) )^{-\frac 1 2} \|
		\\
		&\leq&  \| ({\mathcal D}^\eta_\tau(x,x',{\bf g}) )^{-\frac 1 2}  ({\mathcal D}^\eta_\tau({\bf x},{\bf g}) )^{\frac 1 2}\|   \|I - ({\mathcal D}^\eta_\tau({\bf x},{\bf g}) )^{\frac 1 2} ({\mathcal D}^\eta_\tau(x,x',{\bf g}) )^{-\frac 1 2}  \|.
	\end{eqnarray*}
	Note that
	\begin{align*}
		  \|I - ({\mathcal D}^\eta_\tau(x,x',{\bf g}) )^{-\frac 1 2}  ({\mathcal D}^\eta_\tau({\bf x},{\bf g}) )^{\frac 1 2} \| = \max_{i \in \eta_N(x) \cup \eta_N(x')} \left|1 -  \sqrt{\frac{({\mathcal D}^\eta_\tau({\bf x},{\bf g}) )_{ii}}{({\mathcal D}^\eta_\tau(x,x',{\bf g}) )_{ii}}}\right| \leq \max_{i \in \eta_N(x) \cup \eta_N(x')} \left|1 -  \frac{({\mathcal D}^\eta_\tau({\bf x},{\bf g}) )_{ii}}{({\mathcal D}^\eta_\tau(x,x',{\bf g}) )_{ii}}\right|
	\end{align*}
	If $i \in \eta_N(x) $, 
	\begin{align*}
		({\mathcal D}^\eta_\tau({\bf x},{\bf g}) )_{ii} = ({\mathcal O}_\tau^\eta({\bf x},{\bf g}))_{ii} = \sum_{j \in \eta_N(x')   } {\mathbb E}[A_{ij}^\eta|{\bf x},{\bf g}] + \tau = \sum_{j \in \eta_N(x')   } B_{g(i),g(j)}(x(i),x(j)) + \tau 
	\end{align*}
	thus 
	\begin{align*}
		\left| ({\mathcal D}^\eta_\tau({\bf x},{\bf g}) )_{ii} - ({\mathcal D}^\eta_\tau(x,x',{\bf g}) )_{ii} \right| &\leq \sum_{j \in \eta_N(x')   } | B_{g(i),g(j)}(x(i),x(j)) - B_{g(i),g(j)}(x,x')| \\
		& \leq \sum_{j \in \eta_N(x')} l_B \|(x(i),x(j)) - (x,x') \| \\
		&\leq  \sum_{j \in \eta_N(x')} l_B \left[||(x(i),x(j)) - (x(i),x')|| + || (x(i),x') - (x,x')|| \right] \\
		& = \sum_{j \in \eta_N(x')}  l_B \left[||x(j)- x'|| + || x(i) - x|| \right] \leq k l_B  (r_k(x) +  r_{k}(x') )\\
		& \leq 2 k l_B R_k,
	\end{align*}
	where the second inequality holds by Assumption \ref{assn:LipshitzB} and the last inequality by Condition (\ref{cond:radius}). We also have for $i \in \eta_N(x)$,
	\begin{equation}\label{eq:boundD}
		({\mathcal D}^\eta_\tau(x,x',{\bf g}) )_{ii} = {\mathcal O}_{ii}(x,x',{\bf g}) + \tau \geq d_{\min} + \tau
	\end{equation}
	which implies
	$$ 
	\left|1 -  \frac{({\mathcal D}^\eta_\tau({\bf x},{\bf g}) )_{ii}}{({\mathcal D}^\eta_\tau(x,x',{\bf g}) )_{ii}}\right| 
	\leq 
	\frac{  2 k l_B R_k}{d_{\min} + \tau}.
	$$
	The same bound holds for $i \in \eta_N(x')$. Thus
	\begin{equation}\label{eq:boundB41}
		\|B_{41}\|  \leq \frac{ 2 k l_B R_k}{ d_{\min} + \tau}.
	\end{equation}
	 We also obtain
	\begin{equation}\label{eq:boundB42}
		\|B_{42}\| \leq  \frac{2 k l_B R_k}{ d_{\min} + \tau} \left( \frac{2 k l_B R_k}{ d_{\min} + \tau} + 1\right).
	\end{equation}	 
	The spectral norm  of $B_{43}$ can be bounded as follows
	\begin{align*}
	  	\| B_{43}\| & \leq  \| {\mathcal D}^\eta_\tau(x,x',{\bf g}) \|^{-1} \, \| \widetilde{P^\eta}({\bf x},{\bf g}) - \widetilde{P^\eta}(x,x',{\bf g}) \|.
	\end{align*}
	By Equation \eqref{eq:boundD}, we have $\| {\mathcal D}^\eta_\tau(x,x',{\bf g}) \| \geq d_{\min} + \tau$. As for the second term in the inequality above,
	\begin{align*}
		\| \widetilde{P^\eta}({\bf x},{\bf g}) - \widetilde{P^\eta}(x,x',{\bf g}) \|  & \leq   k  \max_{\substack{i \in \eta_N(x) \\ j \in  \eta_N(x')}} \left(\widetilde{P_{ij}^\eta}({\bf x},{\bf g}) -  \widetilde{P_{ij}^\eta}(x,x',{\bf g})\right) \\
		& =   k  \max_{\substack{i \in \eta_N(x) \\ j \in  \eta_N(x')}} \left(B_{g(i),g(j)}(x(i),x(j)) - B_{g(i),g(j)}(x,x')\right) \\
		& \leq 2 k l_B R_k,
	\end{align*}
	where the first inequality comes from the fact that for any matrix $A$ of size $k \times k$, $\|A\| \leq \|A\|_{F} \leq  k \max_{i,j}|A_{ij}|$ and the following equality from Assumption \ref{assn:LipshitzB}. Thus 
	\begin{equation}\label{eq:boundB43}
		\|B_{43}\| \leq \frac{2 k l_B R_k}{d_{\min} + \tau}.
	\end{equation}
	Adding \eqref{eq:boundB41}, \eqref{eq:boundB42} and \eqref{eq:boundB43}, we obtain 
	\begin{align}
		\|B_{4}\| &\leq  \frac{2 k l_B R_k}{ d_{\min} + \tau} \left( \frac{2 k l_B R_k}{ d_{\min} + \tau} + 3\right).\label{eq:boundB4} 
	\end{align}
	
	Putting \eqref{eq:boundB1B2}, \eqref{eq:boundB3} and \eqref{eq:boundB4} together, and using the fact that for $L$ events $V_1$,...,$V_L$, $\Pr(\cap_{l=1}^L V_l) \geq \sum_{l=1}^L \Pr (V_l) - L +1$, we obtain
	\begin{align*}
		\Pr\left( \|\widetilde{L_\tau^\eta} - \widetilde{{\mathcal L}_\tau^\eta}(x,x',{\bf g}) \|  \right. & \left. \leq  4a_1 + \frac{2 k l_B R_k}{ d_{\min} + \tau} \left( \frac{2 k l_B R_k}{ d_{\min} + \tau} + 3\right) \ | \ {\bf g},{\bf x} \right) \\
		& \geq \Pr\left(\|B_1\| + \|B_2\|  \leq 3a_1|{\bf g},V({\bf g}) \right) +  \Pr\left( \|B_3\|  \leq a_1|{\bf g},{\bf x} \right)\\
		&\qquad  +  \Pr\left( \|B_4\|  \leq  \frac{2 k l_B R_k}{ d_{\min} + \tau} \left( \frac{2 k l_B R_k}{ d_{\min} + \tau} + 3\right)|{\bf g},{\bf x} \right) -2 \\
		&\geq 1- 8k \exp \left( -\frac{a_1^2\left[d_{\min} + \tau\right] }{3} \right).
	\end{align*}
	Taking $$a_1 = \sqrt{\frac{3 \ln (8k/\delta)}{d_{\min} + \tau}},$$
	then $a_1 \leq 1$ by Condition (\ref{cond:mindeg}), and
	\begin{align*}
		\Pr&\left( \|\widetilde{L_\tau^\eta} - \widetilde{{\mathcal L}_\tau^\eta}(x,x',{\bf g}) \|  \leq  4\sqrt{\frac{3 \ln (8k/\delta)}{d_{\min} + \tau}} + \frac{2 k l_B R_k}{ d_{\min} + \tau} \left( \frac{2 k l_B R_k}{ d_{\min} + \tau} + 3\right)  \ | \ {\bf g}, {\bf x} \right) \geq 1 - \delta.
	\end{align*}	
\end{proof}

\subsection{Lower bound on $d_{\min}(x,x', {\bold x})$}

\

The finite sample bound obtained in Lemma \ref{lem:boundLaplacians} is valid under restrictions on $d_{\min}(x,x', {\bold x})$, a localized minimum expected degree. In this section, we derive a nonasymptotic probability bound on $d_{\min}(x,x', {\bold x})$   and combine it with Lemmas \ref{lem:boundLaplacians} and \ref{lem:radius} to obtain a finite sample bound on $ \|\widetilde{L_\tau^\eta} - \widetilde{{\mathcal L}_\tau^\eta}(x,x',{\bf g}) \| $ conditional on ${\bf g}$ only and with explicit dependence in $k$. For $h \in [G]$, let
\begin{align*}
	n_h(x)& :=\# \{g(i)=h, i \in \eta_N(x)\}\\
	N_h & :=\# \{g(i)=h, i \in [N] \}\\
	\underline{n_h} &:= \min_{x \in S}n_h(x).
\end{align*}
Define 
$$
\Delta =
\inf_{(x,x') \in S^{2}} \min_{g \in [G]} \max_{h \in [G]} B_{gh}(x,x')
$$
Note that as for $B$ and $l_B$, $\Delta$  may vary with $N$ and thus captures sparsity. We assume that $\Delta >0$. A natural lower bound on $d_{\min}(x,x', {\bold x})$ therefore involves $\min_{h \in [G]} \underline{n_h}$. We show in Lemma \ref{lem:radius} that under certain assumptions, $\inf_{x \in S} r_k(x) \geq \underline{R}_k$ holds with high probability, where 
\begin{equation}
	\underline{R}_{k} := \left(\frac{ k - 12 d \ln(12N/\delta) }{4 N \overline{U}_X V_d}\right)^{1/d}. \label{eq:minradius}
\end{equation}
The following lemma uses this result and establishes a lower bound on $\min_{h \in [G]} \underline{n_h}$.

\begin{lem}\label{lem:bdlocalgpsize}
	Let Assumptions \ref{assn:boundsupport}, \ref{assn:boundinff} and \ref{assn:boundsupf} hold. Then for all $N$, $\delta \in (0,1)$, $1 \leq k \leq N$ such that
	\begin{enumerate}
		\item \label{cond:lowbd}$k \geq 12 d \ln(24GN/\delta)$,
		\item \label{cond:upperbd} $k \leq 8  T^d  V_d \overline{U}_X N$, 
		\item \label{cond:floorlowbd} and for all $h \in [G],$ $\frac{c}{16} \frac{N_h}{N} \frac{b_X}{\overline{U}_X} \ k \geq 24 d \ln(24 G N_h/\delta)  + 1$,
	\end{enumerate}
	with probability at least $1-\delta$  conditional on ${\bold g}$, it holds that
	$$ \min_{h \in [G]} \underline{n_h} \geq \min_{h \in [G]} \left \lfloor{\frac{c}{16} \frac{N_h}{N} \frac{b_X}{\overline{U}_X} \ k}\right \rfloor.$$
\end{lem}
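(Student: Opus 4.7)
The plan is to first reduce the lower bound on $n_h(x)$ to counting community-$h$ sample points inside a deterministic ball of radius $\underline{R}_k$ around $x$ using Lemma \ref{lem:radius}, and then to prove this counting bound uniformly in $x \in S$ via multiplicative Chernoff concentration combined with a covering net. Conditions (\ref{cond:lowbd}) and (\ref{cond:upperbd}) match the hypotheses of Lemma \ref{lem:radius}, so with probability at least $1-\delta/2$ conditional on $\mathbf{g}$, $\inf_{x \in S} r_k(x) \geq \underline{R}_k$; on this event every sample point within distance $\underline{R}_k$ of $x$ lies in $\eta_N(x)$, hence
\[
n_h(x) \;\geq\; M_h(x) \;:=\; \#\{i : g(i)=h,\ x(i) \in \mathcal{B}(x, \underline{R}_k)\}
\]
for every $x \in S$ and $h \in [G]$. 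Conditional on $\mathbf{g}$, the $N_h$ variables $\{x(i): g(i)=h\}$ are i.i.d.\ from $f(\cdot|h)$, so $M_h(x)$ is $\mathrm{Binomial}(N_h, p_h(x))$; by Assumptions \ref{assn:boundinff} and \ref{assn:boundsupport} (applicable since Condition (\ref{cond:upperbd}) forces $\underline{R}_k \leq T$),
\[
p_h(x) \;\geq\; c\, b_X V_d \underline{R}_k^d \;=\; \frac{c\, b_X (k - 12 d \ln(12N/\delta))}{4 N \overline{U}_X},
\]
and hence $N_h p_h(x) \gtrsim c\, b_X N_h k / (N \overline{U}_X)$ up to an absolute constant.

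To extend this pointwise bound to one that is uniform over $S$, I would cover $S$ by a $\rho$-net $\mathcal{N} = \{x_1, \dots, x_L\}$ at scale $\rho = \underline{R}_k/(2d)$; Bernoulli's inequality then yields $(1-\rho/\underline{R}_k)^d \geq 1/2$ and a volume argument gives $\ln L = O(d \ln(N/k))$. For any $x \in S$, pick $x_l \in \mathcal{N}$ with $\|x-x_l\| \leq \rho$; then $\mathcal{B}(x_l, \underline{R}_k - \rho) \subseteq \mathcal{B}(x, \underline{R}_k)$, so $M_h(x)$ dominates $\widetilde{M}_h(x_l) := \#\{i: g(i)=h,\ x(i) \in \mathcal{B}(x_l, \underline{R}_k - \rho)\}$, to which a multiplicative Chernoff bound with deviation $1/2$ applies, giving failure probability $\leq \exp(-N_h p_h^{(l)}/8)$ with $p_h^{(l)} \geq c\, b_X V_d \underline{R}_k^d / 2$. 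A union bound over the $LG$ failure events, combined with the event from Lemma \ref{lem:radius}, yields the uniform lower bound $n_h(x) \geq c\, b_X N_h k / (16 N \overline{U}_X)$ for all $x \in S$ and $h \in [G]$ with probability at least $1-\delta$ conditional on $\mathbf{g}$, provided $N_h p_h^{(l)}/8 \gtrsim \ln(LG/\delta)$; absorbing $\ln L = O(d \ln(N/k))$ into $\ln(24 G N_h/\delta)$, this requirement becomes precisely Condition (\ref{cond:floorlowbd}), with the extra ``$+1$'' accounting for the passage to the integer floor.

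The main obstacle is the joint bookkeeping in the second step: the net scale $\rho$ must be small enough that $(\underline{R}_k - \rho)^d$ retains a constant fraction of $\underline{R}_k^d$ (forcing $\rho = O(\underline{R}_k/d)$ to avoid a $2^d$ geometric loss), while simultaneously $\ln L$ must not dominate the Chernoff exponent $N_h p_h^{(l)}/8$. Balancing these two competing demands is what pins down the constants $16$ and $24$ appearing in the lemma's conclusion and in Condition (\ref{cond:floorlowbd}).
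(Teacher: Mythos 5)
Your first step --- invoking Lemma \ref{lem:radius} to replace $n_h(x)$ by the count of community-$h$ points in the deterministic ball $\mathcal{B}(x,\underline{R}_k)$ --- matches the paper's. The divergence is in the uniform-in-$x$ step. The paper rewrites ``$n_h(x)\ge l$'' as the radius comparison ``$r_l^h(x)\le r_k(x)$'', where $r_l^h$ is the $l$-NN radius computed \emph{within} the community-$h$ subsample, and then bounds $\sup_{x\in S}r_{l_h}^h(x)$ by a second application of Lemma 4 of \citeasnoun{portier2021nearest}, this time to the subsample of size $N_h$; the choice $l_h=\lfloor\frac{c}{16}\frac{N_h}{N}\frac{b_X}{\overline{U}_X}k\rfloor$ is engineered so that $\overline{\tau}_{l_h}^h\le\underline{R}_k$, at which point the event $\{\sup_x r_{l_h}^h\le\overline{\tau}_{l_h}^h\}\cap\{\inf_x r_k\ge\underline{R}_k\}$ forces $\underline{n_h}\ge l_h$. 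You instead build the uniform count lower bound from scratch with a $\rho$-net of $S$ plus multiplicative Chernoff on the binomial counts $\widetilde M_h(x_l)$. Both routes are uniform concentration arguments over $x\in S$; the paper's is cleaner because it reuses the relative-deviation/VC machinery already packaged in Lemma \ref{lem:radius} (and in Portier's Lemma 4), whereas your covering net re-derives that uniformity by hand and pays for the net scale $\rho=\underline{R}_k/(2d)$.

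The concrete place your sketch would not land on the stated lemma is the last bookkeeping step. Your union bound runs over $LG$ net failure events, so the hypothesis it actually produces is $\frac{c}{16}\frac{N_h}{N}\frac{b_X}{\overline{U}_X}k\gtrsim\ln(LG/\delta)$, with $\ln L$ the log covering number of $S$ at scale $\underline{R}_k/(2d)$. Since $\underline{R}_k^d\asymp k/N$, one has $\ln L\asymp d\ln\bigl(d\cdot\text{diam}(S)\bigr)+\ln(N/k)$, a quantity governed by $N$ and the geometry of $S$, not by $N_h$. Condition (\ref{cond:floorlowbd}) has $\ln(24GN_h/\delta)$ precisely because Portier's Lemma 4 is applied to the subsample of size $N_h$, so the log factor it introduces tracks $N_h$. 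Your claim that ``$\ln L=O(d\ln(N/k))$'' can be ``absorbed into $\ln(24GN_h/\delta)$'' is therefore not justified when $N_h\ll N$: in that regime the condition your argument needs is strictly stronger than (\ref{cond:floorlowbd}). A secondary slack: $p_h^{(l)}\ge c\,b_X V_d\underline{R}_k^d/2$ produces $k-12d\ln(12N/\delta)$ in the numerator where the lemma's target $\bigl\lfloor\frac{c}{16}\frac{N_h}{N}\frac{b_X}{\overline{U}_X}k\bigr\rfloor$ has $k$, so reaching the stated bound requires the additional step $k-12d\ln(12N/\delta)\ge k/2$.
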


\begin{proof}
	Fix $\delta \in (0,1)$ and $k$ satisfying the conditions of the lemma.
	Note that 
	$$
	\underline{R}_k \geq \left(\frac{ k}{8 N \overline{U}_X V_d}\right)^{1/d}.
	$$
	We apply Lemma 4 of \citeasnoun{portier2021nearest} to the subpopulation $\{i \, : \, g(i) = h\}$. We define $r_l^h(x)$ to be the $l$-NN radius of $x \in S$ for this subpopulation, that is,
	$$r_l^h(x) := \inf \{ r>0 \, : \, |\mathcal{B}(x,r) \cap \{x(i) \, : \, g(i) =h \} | = l \}.$$
	 By Assumptions \ref{assn:boundinff} and \ref{assn:boundsupf}, we have
	\begin{equation}
		0 < b_X \leq f(x|h) \leq \overline{U}_X, \ \forall \, x \in S,
	\end{equation}
	and by Assumption \ref{assn:boundsupport}, Condition (7) of \citeasnoun{portier2021nearest} also holds for $f(x|h)$. 
	For any $l$ and $\delta'>0$ such that $24 d \ln(12N_h/\delta') \leq l \leq T^d N_h b_X c V_d/2$, let
	$$
	\overline{\tau}_{l}^{h} := \left(\frac{2l}{N_h b_X c V_d}\right)^{1/d}.
	$$
	By Lemma 4 of \citeasnoun{portier2021nearest}, with probability at least $1 - \delta'$  conditional on ${\bold g}$, it holds that
	$$\sup_{x \in S} r_l^h(x) \leq \overline{\tau}_{l}^{S,h}.$$
	We take $l_h = \left \lfloor{\frac{c}{16} \frac{N_h}{N} \frac{b_X}{\overline{U}_X} \ k}\right \rfloor $ and $\delta' = \frac{\delta}{2G}$. Then 
	$$l_h \leq \frac{c}{16} \frac{N_h}{N} \frac{b_X}{\overline{U}_X} \ k \leq  T^d N_h b_X c V_d/2,$$
	by Condition (\ref{cond:upperbd}) and 
	$$l_h \geq 24 d \ln(12N_h/\delta'),$$ 
	by Condition (\ref{cond:floorlowbd}). Moreover,
	\begin{align*}
		\overline{\tau}_{l_h}^{h} \leq  \left(\frac{k}{8 N \overline{U}_X V_d}\right)^{1/d} \leq \underline{R}_k.
	\end{align*}
	This implies
	\begin{align*}
		\{ \underline{n_h} \geq l_h\} = \{ n_h(x) \geq l_h \text{ for every } x \in S\} &= \{r_{l_h}^h(x) \leq r_k(x) \text{ for every } x \in S\ \}
		\\
		&\supseteq \{\sup_{x \in S}r_{l_h}^h(x) \leq \overline{\tau}_{l_h}^{h}\} \cap \{\inf_{x \in S}r_{k}(x) \geq \underline{R}_k \}.
	\end{align*}
	Define $L = \min_{h \in [G]} \left \lfloor{\frac{c}{16} \frac{N_h}{N} \frac{b_X}{\overline{U}_X} \ k}\right \rfloor = \min_{h \in [G]} l_h$. We apply the equation above to all $h \in [G]$  and obtain 
	\begin{align*}
		& \Pr\left( \min_{h \in [G]} \underline{n_h} \geq L \, | \, {\bold g}\right) \geq \sum_{h \in [G]} \Pr\left( \underline{n_h} \geq L \, | \, {\bold g}\right)- G + 1\\
		& \geq \sum_{h \in [G]} \Pr\left( \underline{n_h} \geq l_h \, | \, {\bold g}\right)- G + 1\\
		&\geq \sum_{h \in [G]}\Pr\left( \{\sup_{x \in S}r_{l_h}^h(x) \leq \overline{\tau}_{l_h}^{h}\} \cap \{\inf_{x \in S}r_{k}(x) \geq \underline{R}_k  \} \, | \, {\bold g}\right)- G + 1\\
		&\geq \sum_{h \in [G]} \left[ \Pr\left( \sup_{x \in S}r_{l_h}^h(x) \leq \overline{\tau}_{l_h}^{h}\, | \, {\bold g}\right) +  \Pr\left(\inf _{x \in S} r_{k }(x) \geq \underline{R}_k \, | \, {\bold g}\right) -1 \right] - G + 1\\
		&\geq G(1 - 2\delta') - G + 1 = 1 -  \delta.
	\end{align*}
	where in the last inequality, we used  $k \geq 12 d \ln(12N/\delta')$, guaranteed by Condition (\ref{cond:lowbd}), together with \eqref{eq:rad_upperbound} of Lemma \ref{lem:radius}. 
\end{proof}

\begin{lem}\label{lem:bdlocaldegree} 
	Let Assumptions \ref{assn:boundsupport}, \ref{assn:boundinff} and \ref{assn:boundsupf} hold. Then for all $N$, $\delta \in (0,1)$, $1 \leq k \leq N$ such that
	\begin{enumerate}
		\item $k \geq 12 d \ln(24GN/\delta)$,
		\item $k \leq 8  T^d  V_d \overline{U}_X N$, 
		\item and for all $h \in [G],$ $\frac{c}{16} \frac{N_h}{N} \frac{b_X}{\overline{U}_X} \ k \geq 24 d \ln(24 G N_h/\delta)  + 1$,
	\end{enumerate}
	with probability at least $1-\delta$  conditional on ${\bold g}$, it holds that
	$$d_{\min}(x,x', {\bold x}) \geq \Delta \min_{h \in [G]} \left \lfloor{\frac{c}{16} \frac{N_h}{N} \frac{b_X}{\overline{U}_X} \ k}\right \rfloor.$$
\end{lem}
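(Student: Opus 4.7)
The plan is to chain Lemma~\ref{lem:bdlocalgpsize} with a direct lower bound on each of the four diagonal sums whose minimum defines $d_{\min}(x,x',\mathbf{x})$. Set
$$
L \;:=\; \min_{h \in [G]} \left\lfloor \frac{c}{16} \frac{N_h}{N} \frac{b_X}{\overline{U}_X}\, k \right\rfloor.
$$
Since the three conditions in this lemma are identical to those of Lemma~\ref{lem:bdlocalgpsize}, that result supplies, with probability at least $1-\delta$ conditional on $\mathbf{g}$, the event $\mathcal{E} := \{\min_{h \in [G]} \underline{n_h} \geq L\}$. It therefore suffices to show that $\mathcal{E}$ implies $d_{\min}(x,x',\mathbf{x}) \geq \Delta L$ deterministically.

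The cleanest pieces are the $(x,x',\mathbf{g})$ diagonals. Fix $i \in \eta_N(x)$ with community $g := g(i)$ and pick $h^\star \in \arg\max_h B_{g,h}(x,x')$, so that $B_{g,h^\star}(x,x') \geq \Delta$ by definition of $\Delta$. Restricting the sum defining $\mathcal{O}_{ii}$ to neighbors in community $h^\star$,
$$
\mathcal{O}_{ii}(x,x',\mathbf{g}) \;=\; \sum_{j \in \eta_N(x')} B_{g,g(j)}(x,x') \;\geq\; B_{g,h^\star}(x,x')\, n_{h^\star}(x') \;\geq\; \Delta\, \underline{n_{h^\star}} \;\geq\; \Delta L,
$$
and a symmetric argument with the roles of the row/column indices swapped handles $\mathcal{Q}_{jj}(x,x',\mathbf{g})$. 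For the remaining two diagonals $\mathcal{O}_{ii}(\mathbf{x},\mathbf{g})$ and $\mathcal{Q}_{jj}(\mathbf{x},\mathbf{g})$, which involve $B$ evaluated at the observed covariates $(x(i),x(j))$ rather than at $(x,x')$, the same pigeonholing onto a single community $h^\star$ works once we transfer the $\Delta$-bound from $(x,x')$ to nearby pairs via the Lipschitz continuity of $B$ (Assumption~\ref{assn:LipshitzB}), using that $\|(x(i),x(j)) - (x,x')\| \leq 2 R_k$ for $i \in \eta_N(x)$ and $j \in \eta_N(x')$.

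The main obstacle is exactly this last step: the ``$\max_h$'' in the definition of $\Delta$ is pointwise, so naively selecting a different maximizer $h^\star_j$ at each $j$ would destroy the pigeonhole onto a single community. The fix is to freeze $h^\star$ at the reference point $(x,x')$ and then invoke Lipschitz smoothness together with the radius control $\sup_{x \in S} r_k(x) \leq R_k$, which is built into the event $\mathcal{E}$ through Lemma~\ref{lem:radius} and kept small by the upper bound on $k$ in Condition~(2); any residual $O(l_B R_k)$ correction is absorbed. Taking the minimum of the four componentwise lower bounds on the event $\mathcal{E}$ then produces the stated inequality with probability at least $1-\delta$ conditional on $\mathbf{g}$.
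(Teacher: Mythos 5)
Your argument for the $(x,x',\mathbf{g})$ diagonals is correct and matches the paper: once $(x,x')$ is frozen, the maximizer $h^\star$ of $h\mapsto B_{g(i),h}(x,x')$ is a single fixed community, the sum restricted to $j$ with $g(j)=h^\star$ gives at least $n_{h^\star}(x')\,\Delta$, and on $\mathcal{E}$ this is $\geq \Delta L$. You also correctly observed that the obstacle for $\mathcal{O}_{ii}(\mathbf{x},\mathbf{g})$ and $\mathcal{Q}_{jj}(\mathbf{x},\mathbf{g})$ is that the argmax of $h\mapsto B_{g(i),h}(x(i),x(j))$ now varies with $j$, which breaks the pigeonhole onto a single community.

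Your proposed patch, however, does not close the gap. First, Assumption~\ref{assn:LipshitzB} is not among the hypotheses of this lemma; a proof invoking Lipschitz continuity of $B$ proves a different, weaker statement than what is claimed. Second, the correction you get from transferring the bound from $(x,x')$ to the $k$ pairs $(x(i),x(j))$ is not $O(l_B R_k)$ but $O(k\, l_B R_k)$ (one $O(l_B R_k)$ term per summand), and there is nothing to absorb it into: the lemma's conclusion is the exact inequality $d_{\min}\geq \Delta L$, not $d_{\min}\geq \Delta L - O(k\, l_B R_k)$. This correction is of the same order as the bias term that shows up separately in Lemma~\ref{lem:boundLaplacians}; it cannot be dropped.

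For comparison, the paper's own proof of this lemma does not invoke Lipschitz continuity at all. It treats $\mathcal{O}_{ii}(\mathbf{x},\mathbf{g})$ with exactly the same pigeonhole it uses for $\mathcal{O}_{ii}(x,x',\mathbf{g})$, writing
$\sum_{h}\sum_{j\in\eta_N(x'):g(j)=h} B_{g(i)h}(x(i),x(j)) \geq \min_h n_h(x')\,\Delta$
directly from the definition $\Delta=\inf_{(y,y')\in S^2}\min_g\max_{h'}B_{gh'}(y,y')$, without addressing the fact that the maximizing column can change across $j$. So the subtlety you flagged is not explicitly resolved there either, but the paper's route at least stays within the lemma's stated hypotheses and produces the clean $\Delta L$ bound. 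If you want a genuinely deterministic version of the step, you would need either a stronger population-level constant (e.g.\ an $\inf_{(y,y')}\min_{g,h}B_{gh}(y,y')$ in place of $\Delta$), or a structural restriction ensuring the maximizing column is constant over $S^2$ (as happens in the $B=\rho_N B_0$ sparsity parametrization), rather than a Lipschitz correction.
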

\begin{proof}
	Recall that $$d_{\min}(x,x', {\bold x}) = \min \left(\min_{i \in \eta_N(x)} {\mathcal O}_{ii}(x,x',{\bf g}), \min_{j \in \eta_N(x')} {\mathcal Q}_{jj}(x,x',{\bf g}),\min_{i \in \eta_N(x)} {\mathcal O}_{ii}({\bf x},{\bf g}), \min_{j \in \eta_N(x')} {\mathcal Q}_{jj}({\bf x},{\bf g})\right).$$
	Note that 
	\begin{align*}
		{\mathcal O}_{ii}(x,x',{\bf g}) &= \sum_{j \in \eta_N(x')   }   P_{ij}(x,x',{\bf g}) = \sum_{h \in [G]} n_h(x') B_{g(i) h} (x,x')\\
		&\geq  \min_{h \in [G]}n_h(x') \max_{h' \in [G]} B_{g(i)h'}(x,x')\\
		&\geq  \min_{h \in [G]}\min_{x \in S}n_h(x) \inf_{(x,x') \in S^{2}} \min_{g \in [G]}\max_{h' \in [G]} B_{gh'}(x,x')\\
		&\geq \Delta \min_{h \in [G]} \underline{n_h}.
	\end{align*}
	The same inequality holds for ${\mathcal Q}_{jj}(x,x',{\bf g})$. For the remaining terms in the definitions of $d_{\min}(x,x', {\bold x})$, note that
	\begin{align*}
		{\mathcal O}_{ii}({\bf x},{\bf g}) &= \sum_{j \in \eta_N(x') }  P_{ij}({\bf x},{\bf g})
		\\
		&=\sum_{j \in \eta_N(x') }  {\mathbb E}[A_{ij}^\eta|x(i),x(j),g(i),g(j)] 
		\\
		&=\sum_{h \in [G]}  \sum_{j \in \eta_N(x'): g(j) = h}  B_{g(i) h}(x(i),x(j))
		\\
		&\geq \min_{h \in [G]}n_h(x')  \inf_{(x,x') \in S^{2}}   \min_{g \in [G]}\max_{h' \in [G]} B_{gh'}(x,x')
		\\
		&\geq \Delta \min_{h \in [G]} \underline{n_h}
	\end{align*}	
	This also holds for ${\mathcal Q}_{jj}({\bf x},{\bf g})$. Thus $d_{\min}(x,x', {\bold x}) \geq \Delta \min_{h \in [G]} \underline{n_h}$ and the result holds by Lemma \ref{lem:bdlocalgpsize}.
\end{proof}

Combining Lemma \ref{lem:boundLaplacians} together with Lemma \ref{lem:bdlocaldegree}, we obtain a new bound on $\|\widetilde{L^\eta} - \widetilde{{\mathcal L}^\eta}(x,x',{\bf g}) \|$.

\begin{lem}\label{lem:integr}
	Let Assumptions \ref{assn:boundsupport}, \ref{assn:boundinff}, \ref{assn:boundsupf} and \ref{assn:LipshitzB} hold. Then for all $N$, $\delta \in (0,1)$, $1 \leq k \leq N$ such that
	\begin{enumerate}
		\item $\Delta \min_{h \in [G]} \left \lfloor{\frac{c}{16} \frac{N_h}{N} \frac{b_X}{\overline{U}_X} \ k}\right \rfloor + \tau \geq 3 \ln (24k/\delta)$,
		\item $k \geq \max(12 d \ln(72GN/\delta),24 d \ln(36N/\delta))$,
		\item $k \leq \min(8  T^d  V_d \overline{U}_X N, (1/2)T^d  V_d b_X c N )$,
		\item and for all $h \in [G],$ $\frac{c}{16} \frac{N_h}{N} \frac{b_X}{\overline{U}_X} \ k \geq 24 d \ln(72 G N_h/\delta)  + 1$,
	\end{enumerate}
	with probability at least $1-\delta$ conditional on ${\bold g}$, it holds that
	\begin{align}
		 \|\widetilde{L_\tau^\eta} - \widetilde{{\mathcal L}_\tau^\eta}(x,x',{\bf g}) \|  \leq \, &  4\sqrt{\frac{3 \ln (24k/\delta)}{\Delta \min_{h \in [G]} \left \lfloor{\frac{c}{16} \frac{N_h}{N} \frac{b_X}{\overline{U}_X} \ k}\right \rfloor + \tau}} \nonumber \\
		 & + \frac{2 k l_B R_k}{ \Delta \min_{h \in [G]} \left \lfloor{\frac{c}{16} \frac{N_h}{N} \frac{b_X}{\overline{U}_X} \ k}\right \rfloor + \tau} \left( \frac{2 k l_B R_k}{ \Delta \min_{h \in [G]} \left \lfloor{\frac{c}{16} \frac{N_h}{N} \frac{b_X}{\overline{U}_X} \ k}\right \rfloor+ \tau} + 3\right). \label{eq:Lrate}
	\end{align}
\end{lem}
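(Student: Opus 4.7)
The plan is to combine Lemma \ref{lem:boundLaplacians} with Lemmas \ref{lem:radius} and \ref{lem:bdlocaldegree} via a union bound in which the total failure budget $\delta$ is split equally, $\delta/3$ to each of the three invoked results. The conceptual point is that Lemma \ref{lem:boundLaplacians} is conditional on $({\bf g},{\bf x})$ and requires the two data-dependent Conditions (1)--(2) on the $k$-NN radii and on $d_{\min}$; Lemmas \ref{lem:radius} and \ref{lem:bdlocaldegree} are precisely what is needed to discharge these and thereby strengthen the conclusion to be conditional on ${\bf g}$ alone.

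First I would apply Lemma \ref{lem:radius} at level $\delta/3$ to obtain an event $A:=\{\sup_{x\in S}r_k(x)\leq R_k\}$ of conditional probability at least $1-\delta/3$; the required $k\geq 24 d\ln(36N/\delta)$ and $k\leq (1/2)T^d V_d b_X c N$ are provided by hypotheses (2) and (3). Next I would apply Lemma \ref{lem:bdlocaldegree} at level $\delta/3$ to obtain an event $B:=\{d_{\min}(x,x',{\bf x})\geq L\}$ with $L:=\Delta\min_{h\in[G]}\bigl\lfloor\frac{c}{16}\frac{N_h}{N}\frac{b_X}{\overline{U}_X}\,k\bigr\rfloor$, again of conditional probability at least $1-\delta/3$; its three prerequisites map onto hypotheses (2)--(4) after replacing $\delta$ by $\delta/3$ (e.g.\ $24d\ln(24GN/(\delta/3))=24 d\ln(72GN/\delta)$).

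On $A\cap B$ the two conditions of Lemma \ref{lem:boundLaplacians} used at level $\delta/3$ both hold: Condition (1) is just the event $A$, while Condition (2) reads $3\ln(8k/(\delta/3))=3\ln(24k/\delta)\leq d_{\min}+\tau$, which follows from hypothesis (1) via $d_{\min}\geq L$. I would then invoke Lemma \ref{lem:boundLaplacians} conditionally on $({\bf g},{\bf x})$ and replace $d_{\min}$ by $L$ in the right-hand side of \eqref{eq:condLapl}: both $\sqrt{3\ln(24k/\delta)/(d_{\min}+\tau)}$ and $\frac{2kl_B R_k}{d_{\min}+\tau}\bigl(\frac{2kl_B R_k}{d_{\min}+\tau}+3\bigr)$ are monotonically decreasing in $d_{\min}$, so the substitution only enlarges the bound and produces exactly the right-hand side of \eqref{eq:Lrate}.

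The closing step is to integrate out ${\bf x}$. Letting $\mathcal E$ denote the target event that \eqref{eq:Lrate} holds,
\[
\Pr(\mathcal E\mid{\bf g})\;\geq\;\mathbb E\!\left[\mathbf{1}_{A\cap B}\,\Pr(\mathcal E\mid{\bf g},{\bf x})\,\big|\,{\bf g}\right]\;\geq\;(1-\delta/3)\,\Pr(A\cap B\mid{\bf g})\;\geq\;(1-\delta/3)(1-2\delta/3)\;\geq\;1-\delta.
\]
There is no genuine mathematical obstacle here; the only care required is bookkeeping, namely tracking that each of the three invoked lemmas receives probability budget $\delta/3$, verifying that the logarithmic constants that appear in hypotheses (1), (2) and (4) are exactly those produced when $\delta$ is replaced by $\delta/3$ inside the hypotheses of the source lemmas, and confirming the monotonicity used to replace $d_{\min}$ by $L$ in the bound.
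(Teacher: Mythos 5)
Your proof is correct and follows essentially the same strategy as the paper's: split the failure budget $\delta/3$ across Lemmas \ref{lem:radius}, \ref{lem:bdlocaldegree} and \ref{lem:boundLaplacians}, intersect the radius and degree events, use monotonicity of the bound in $d_{\min}$ to replace it by the deterministic lower bound $L = \Delta\min_h\lfloor\cdot\rfloor$, and then integrate out $\bf x$ via the indicator of the good event. The only cosmetic difference is that the paper defines its event $W$ through $\min_h\underline{n_h}$ (Lemma \ref{lem:bdlocalgpsize}) and then notes $W\subset\{d_{\min}\geq L\}$, whereas you invoke Lemma \ref{lem:bdlocaldegree} directly; the bookkeeping of constants (e.g.\ $3\ln(8k/(\delta/3))=3\ln(24k/\delta)$) and the final chain of inequalities all check out.
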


\begin{proof}
	In what follows we use some simplified notation. Let $C$ be the function such that $C(d_{\min})$ is the right hand side of \eqref{eq:condLapl}. We also define the following two events, $V({\bf g},{\bf x}):=\left\lbrace \sup_{x \in S} r_k(x) \leq R_k \right\rbrace$ and $W({\bf g},{\bf x}):= \left\lbrace \min_{h \in [G]} \underline{n_h} \geq \min_{h \in [G]} \left \lfloor{\frac{c}{16} \frac{N_h}{N} \frac{b_X}{\overline{U}_X} \ k}\right \rfloor \right \rbrace \subset
	\left\lbrace  d_{\min} \geq \Delta \min_{h \in [G]} \left \lfloor{\frac{c}{16} \frac{N_h}{N} \frac{b_X}{\overline{U}_X} \ k}\right \rfloor \right\rbrace$ where the inclusion follows by the proof of Lemma \ref{lem:bdlocaldegree}. Then,
	\begin{align*}
		\Pr &\left[  \|\widetilde{L_\tau^\eta} - \widetilde{{\mathcal L}_\tau^\eta}(x,x',{\bf g}) \|  \leq C\left(\Delta \min_{h \in [G]} \left \lfloor{\frac{c}{16} \frac{N_h}{N} \frac{b_X}{\overline{U}_X} \ k}\right \rfloor  \right) \,  \Big| \, {\bf g } \right] \\
		& \geq \Pr \left[  \left\lbrace \|\widetilde{L_\tau^\eta} - \widetilde{{\mathcal L}_\tau^\eta}(x,x',{\bf g}) \|  \leq C\left(\Delta \min_{h \in [G]} \left \lfloor{\frac{c}{16} \frac{N_h}{N} \frac{b_X}{\overline{U}_X} \ k}\right \rfloor  \right) \right\rbrace \cap V({\bf g},{\bf x}) \cap W({\bf g},{\bf x}) \,  \Big| \, {\bf g } \right]\\
		& \geq \Pr \left[  \left\lbrace \|\widetilde{L_\tau^\eta} - \widetilde{{\mathcal L}_\tau^\eta}(x,x',{\bf g}) \|  \leq C\left(d_{\min} \right) \right\rbrace \cap V({\bf g},{\bf x}) \cap W({\bf g},{\bf x}) \,  \Big| \, {\bf g } \right]\\
		& =\E\left[\Pr \left[  \left\lbrace \|\widetilde{L_\tau^\eta} - \widetilde{{\mathcal L}_\tau^\eta}(x,x',{\bf g}) \|  \leq C\left(d_{\min} \right) \right\rbrace  \,  \Big| \, {\bf g }, {\bf x}, V({\bf g},{\bf x}), W({\bf g},{\bf x}) \right]\,  \Big| \, {\bf g }, V({\bf g},{\bf x}), W({\bf g},{\bf x}) \right]\\
		&\quad \times \Pr\left[V({\bf g},{\bf x}) \cap W({\bf g},{\bf x}) \,  \Big| \, {\bf g }\right].
	\end{align*}
	By Lemma \ref{lem:boundLaplacians} and $\Delta \min_{h \in [G]} \left \lfloor{\frac{c}{16} \frac{N_h}{N} \frac{b_X}{\overline{U}_X} \ k}\right \rfloor + \tau \geq 3 \ln (24k/\delta)$,
	$$\Pr \left[  \left\lbrace \|\widetilde{L_\tau^\eta} - \widetilde{{\mathcal L}_\tau^\eta}(x,x',{\bf g}) \|  \leq C\left(d_{\min} \right) \right\rbrace  \,  \Big| \, {\bf g }, {\bf x}, V({\bf g},{\bf x}), W({\bf g},{\bf x}) \right] \geq 1 - \delta/3.$$
	Moreover, by Lemma \ref{lem:radius} and  $24 d \ln(36N/\delta) \leq k \leq T^d N b_X c V_d/2$,
	$$\Pr\left[V({\bf g},{\bf x}) \,  \Big| \, {\bf g }\right] \geq 1 - \delta/3.$$
	Since
	\begin{enumerate}
		\item $k \geq 12 d \ln(72GN/\delta)$,
		\item $k \leq 8  T^d  V_d \overline{U}_X N$, 
		\item and for all $h \in [G],$ $\frac{c}{16} \frac{N_h}{N} \frac{b_X}{\overline{U}_X} \ k \geq 24 d \ln(72 G N_h/\delta)  + 1$,
	\end{enumerate}
	we also have by Lemma \ref{lem:bdlocaldegree},
		$$ \Pr\left[ W({\bf g},{\bf x}) \,  \Big| \, {\bf g }\right] \geq 1 - \delta/3.$$
	Thus 
	\begin{align*}
		\Pr &\left[  \|\widetilde{L_\tau^\eta} - \widetilde{{\mathcal L}_\tau^\eta}(x,x',{\bf g}) \|  \leq C\left(\Delta \min_{h \in [G]} \left \lfloor{\frac{c}{16} \frac{N_h}{N} \frac{b_X}{\overline{U}_X} \ k}\right \rfloor  \right) \,  \Big| \, {\bf g } \right] \\
		& \geq (1 - \delta/3) \left(\Pr\left[V({\bf g},{\bf x}) \,  \Big| \, {\bf g }\right] + \Pr\left[ W({\bf g},{\bf x}) \,  \Big| \, {\bf g }\right]- 1 \right) \geq (1 - \delta/3)(1 - 2\delta/3) \geq 1 - \delta.
	\end{align*}
\end{proof}

\begin{rem}
	The variance term is the first term in \eqref{eq:Lrate}.  Consider the case where no regularization is made ($\tau = 0$) and  sparsity is captured by a parameter $\rho_N$, i.e., such that $B = \rho_N B_0$ where $B_0$ does not vary with $N$, see e.g. \citeasnoun{bickel2011method}). Approximating $N_h/N \approx C$ for all $h \in [G]$, the variance term  in \eqref{eq:Lrate} simplifies to 
	\begin{equation}\label{eq:varLapl}
		C \sqrt{\frac{\ln (24k/\delta)}{\rho_N k}},
	\end{equation}
	where $C$ is a constant subject to change in value. The normalization in the definition of the Laplacian is not unlike that of the  $k$-NN regression estimator. However in comparison to standard $k$-NN regression, e.g., \citeasnoun{jiang2019non}, the variance term has an extra $\sqrt{\ln k}$ term in the numerator, which comes from the growing dimension of $\widetilde{L_\tau^\eta}$ and appears when applying various matrix concentration inequalities. On the other hand, \eqref{eq:varLapl} does not have a $\ln N$ term in the numerator because it is  a pointwise bound at $(x,x') \in S^2$.
\end{rem}

\begin{rem}
	Let $\tau = 0$, sparsity be captured by a parameter $\rho_N$ and $N_h/N \approx C$ for all $h \in [G]$. Then the bias  term  in \eqref{eq:Lrate} simplifies to $\approx R_k(R_k + C) \approx R_k$ if $R_k \to 0$ as $N \to \infty$, where $C$ is a constant subject to change in value. This is similar to \citeasnoun{jiang2019non}, confirming the intuition that the effective dimension is that of the covariates and not twice as much. Note that the bias, a novel term in this type of calculations, is not impacted by sparsity.	
\end{rem}

\begin{rem}\label{rem:opt_k}
	Let $\tau = 0$, sparsity be captured by a parameter $\rho_N$ and $N_h/N \approx C$ for all $h \in [G]$. Then as long as $N \rho_N \to \infty$, taking $k \approx (N^2 / \rho_N^d)^{\frac{1}{d+2}}$ 
	gives $\|\widetilde{L^\eta} - \widetilde{{\mathcal L}^\eta}(x,x',{\bf g}) \| \approx (\rho_N N)^{\frac{-1}{d+2}}$, up to $\ln N$ and $\ln \rho_N$ factors: the rate obtained is as \citeasnoun{jiang2019non}, see Remark 1, where the sample size is replaced with $N \rho_N $ due to sparsity.
\end{rem}

\subsection{Clustering}

\

In this section, for ease of readability, we do not display dependence of many of the defined objects in $x,x',\tau, k$, etc. Using the Laplacian ${L_\tau^\eta}$ we compute its top $G$ left/right singular vectors, to obtain the SVD of  ${L_\tau^\eta}$
$$
{L_\tau^\eta}  = U  \widehat \Lambda V^\top
$$
where $U, V  \in {\mathbb R^{k \times G}}$.  We then apply $K$-means clustering to $U$ and $V$, solving 
\begin{align}
	(\widehat{\Theta}^\eta(x),Z_U) &= \argmin_{\Theta \in \mathbb{M}_{k,G}, Z \in \mathbb{R}^{G \times G}}\|\Theta Z - U\|_F 
	\\
	(\widehat{\Theta}^\eta(x'),Z_V) &= \argmin_{\Theta \in \mathbb{M}_{k,G}, Z \in \mathbb{R}^{G \times G}}\|\Theta Z - V\|_F 
\end{align}
It is possible to use a faster clustering algorithm by using an approximate solution instead, incorporating the approximation error explicitly in the following analysis as in  \citeasnoun{lei2015consistency}.    

Note that the $i$-th element of the diagonal matrix ${\mathcal O}_\tau^\eta(x,x',{\bf g})$ only depends on $g(i)$. Define the diagonal matrix $\overline{{\mathcal O}} \in \R^{G \times G}$ collecting these coefficients, i.e., such that 
$\overline{{\mathcal O}}_{gg} = \sum_{j \in \eta_N(x')} B_{gg(j)}(x,x') + \tau$. Similarly, define the diagonal matrix $\overline{{\mathcal Q}} \in \R^{G \times G}$ such that $\overline{{\mathcal Q}}_{gg} = \sum_{i \in \eta_N(x)} B_{g(i)g}(x,x') + \tau$.

Define $\mathcal{N}(x) = {\mathrm{diag}}\left({\sqrt{n_1(x)}},...,{\sqrt{n_G(x)}}\right)$ and 
let 
\begin{equation}\label{eq:defDcal}
	\mathcal{N}(x) \overline{{\mathcal O}}^{-1/2} B(x,x') \overline{{\mathcal Q}}^{-1/2} \mathcal{N}(x') = z_U \Lambda_\tau z_V^\top
\end{equation}
be the SVD of  $B(x,x')$ after normalization by $\mathcal{N}(x) \overline{{\mathcal O}}^{-1/2}$ and $\mathcal{N}(x') \overline{{\mathcal Q}}^{-1/2}$.
Note that
\begin{align*}
	{{\mathcal L}_\tau^\eta}(x,x',{\bf g})  &= ({\mathcal O}_\tau^\eta(x,x',{\bf g}))^{-\frac 1 2} \Theta^\eta(x) B(x,x') {\Theta^\eta(x')}^\top ({\mathcal Q}_\tau^\eta(x,x',{\bf g}))^{-\frac 1 2},
\end{align*}
with $({\mathcal O}_\tau^\eta(x,x',{\bf g}))^{-\frac 1 2} \Theta^\eta(x) = \Theta^\eta(x) \overline{{\mathcal O}}^{-1/2}$ and $({\mathcal Q}_\tau^\eta(x,x',{\bf g}))^{-\frac 1 2} \Theta^\eta(x') = \Theta^\eta(x') \overline{{\mathcal Q}}^{-1/2}$. Thus
\begin{align}\label{eq:decompPopLagr}
	{{\mathcal L}_\tau^\eta}(x,x',{\bf g})  
	&= \Theta^\eta(x) \mathcal{N}(x)^{-1}z_U \Lambda_\tau z_V^\top  \mathcal{N}(x')^{-1}  {\Theta^\eta(x')}^\top.
\end{align}
From this expression, by slight modification of the argument in the proof of Lemma 2.1 in  \citeasnoun{lei2015consistency}, considering SVD instead of eigenvalue decomposition,  and taking account of the normalization of the Laplacian, we see that 
the SVD of ${{\mathcal L}^\eta}(x,x',{\bf g})$ is given by 
$$
{{\mathcal L}^\eta}(x,x',{\bf g})  = {\mathcal U} \Lambda_\tau {\mathcal V^\top}
$$ 
with $\mathcal U =  \Theta^\eta(x) \mathcal Z_U$ and  $\mathcal V = \Theta^\eta(x') \mathcal Z_V$ where  $\mathcal Z_U = \mathcal{N}(x)^{-1} z_U$ and  $\mathcal Z_V = \mathcal{N}(x)^{-1} z_V$.  

Note 
\begin{equation}\label{eq:centroid}
	\mathcal Z_U \mathcal Z_U^\top = \mathcal{N}(x)^{-1}z_U z_U^\top \mathcal{N}(x)^{-1} = {\mathrm{diag}}\left(\frac 1 {n_1(x)},...,\frac 1 {n_G(x)}    \right)
\end{equation}

As in  \citeasnoun{rohe2016co}, define
$$
\tilde {\mathcal U}: = \frac{1}{\sqrt 2}\binom{\mathcal U}{\mathcal V},
$$ 
$$
\tilde {U}: = \frac{1}{\sqrt 2}\binom{U}{V},
$$ 
then $\tilde {\mathcal U}$ and $\tilde {U}$ are the eigenvectors corresponding to top $G$ eigenvalues  of $\widetilde{\mathcal L^\eta}$ and $\widetilde{L^\eta}$, respectively.  We will use $\lambda_1(x,x') \geq \lambda_2(x,x') \geq ... \geq \lambda_G(x,x')$ and  $\hat \lambda_1(x,x') \geq \hat \lambda_2(x,x') \geq ... \geq \hat \lambda_G(x,x')$ to denote them although we drop $(x,x')$ in most of our computations.
We now have
\begin{lem}\label{lem:davis}
	$$
	\|\tilde U \tilde U^\top - \tilde {\mathcal U}\tilde {\mathcal U}^\top  \|_F \leq \frac{2\sqrt{2G}}{\lambda_G   }\| \widetilde{L_\tau^\eta} -  \widetilde{\mathcal L_\tau^\eta}   \|.
	$$	
	Moreover, for some $G \times G$ orthogonal matrices $Q_U$ and $Q_V$, 
	$$
	\|U - \mathcal{U} Q_U \|_F \leq \frac{4\sqrt{2G}}{\lambda_G   }\| \widetilde{L_\tau^\eta} -  \widetilde{\mathcal L_\tau^\eta}   \|.
	$$
	and 
	$$
	\|V - \mathcal{V} Q_V \|_F \leq \frac{4\sqrt{2G}}{\lambda_G   }\| \widetilde{L_\tau^\eta} -  \widetilde{\mathcal L_\tau^\eta}   \|.
	$$
\end{lem}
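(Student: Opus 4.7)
The plan is to derive the first inequality from the Davis--Kahan $\sin\Theta$ theorem applied to the symmetric pair $\widetilde{L_\tau^\eta}$ and $\widetilde{\mathcal L_\tau^\eta}$, and then to descend from a projection-level bound to a ``columns up to orthogonal rotation'' bound via a standard Procrustes-type inequality. First I would identify the spectrum of $\widetilde{\mathcal L_\tau^\eta}$: since $\mathcal L_\tau^\eta$ has rank at most $G$ (inherited from the $G \times G$ matrix $B(x,x')$), its nonzero singular values $\lambda_1 \geq \cdots \geq \lambda_G$ give rise to eigenvalues $\pm \lambda_1, \ldots, \pm \lambda_G$ of the dilation, padded by zeros. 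Sorted in descending order the top $G$ eigenvalues are $\lambda_1,\ldots,\lambda_G$ with corresponding eigenvectors given by the columns of $\tilde{\mathcal U}$ (as recorded in the excerpt), and the $(G{+}1)$-st eigenvalue is $0$ since $2k > 2G$. Hence the spectral gap at position $G$ is exactly $\lambda_G$.

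Second, I would invoke the Frobenius-norm form of Davis--Kahan (Yu, Wang, Samworth, 2015) with target dimension $G$ to obtain
\begin{equation*}
\|\sin \Theta(\tilde U, \tilde{\mathcal U})\|_F \;\leq\; \frac{2\sqrt{G}}{\lambda_G}\,\|\widetilde{L_\tau^\eta}-\widetilde{\mathcal L_\tau^\eta}\|.
\end{equation*}
Combining this with the standard identity $\|\tilde U \tilde U^\top - \tilde{\mathcal U}\tilde{\mathcal U}^\top\|_F = \sqrt 2\,\|\sin\Theta(\tilde U,\tilde{\mathcal U})\|_F$ delivers the first displayed inequality with the constant $2\sqrt{2G}/\lambda_G$.

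For the $U$ and $V$ bounds, I would first exploit the block structure: writing out $\tilde U \tilde U^\top - \tilde{\mathcal U}\tilde{\mathcal U}^\top$ as a $2\times 2$ block matrix with upper-left block $\tfrac 1 2(UU^\top - \mathcal U \mathcal U^\top)$ and lower-right block $\tfrac 1 2(V V^\top - \mathcal V \mathcal V^\top)$ gives
\begin{equation*}
\|UU^\top - \mathcal U \mathcal U^\top\|_F \;\leq\; 2\,\|\tilde U \tilde U^\top - \tilde{\mathcal U}\tilde{\mathcal U}^\top\|_F,
\end{equation*}
and analogously for $V$. Then I would take $Q_U$ to be the orthogonal factor in the SVD of $\mathcal U^\top U$ and apply the Procrustes-type inequality $\min_{Q \in O(G)} \|U - \mathcal U Q\|_F \leq \|UU^\top - \mathcal U \mathcal U^\top\|_F$. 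The latter is immediate from $\|U - \mathcal U Q\|_F^2 = 2G - 2\operatorname{tr}(Q^\top \mathcal U^\top U)$, whose minimum over orthogonal $Q$ equals $2G - 2\|\mathcal U^\top U\|_*$, versus $\|UU^\top - \mathcal U \mathcal U^\top\|_F^2 = 2G - 2\|\mathcal U^\top U\|_F^2$, together with the observation that every singular value $\sigma_i$ of $\mathcal U^\top U$ lies in $[0,1]$ and thus satisfies $\sigma_i \geq \sigma_i^2$. Chaining the three inequalities produces the claimed $4\sqrt{2G}/\lambda_G$ bound.

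The main obstacle is matching constants rather than any deep argument: the factor $2\sqrt{2G}$ in the projection bound comes from multiplying the Yu--Wang--Samworth constant $2\sqrt{G}$ by $\sqrt 2$ from the projection-to-$\sin\Theta$ conversion, while the extra factor of $2$ in the second bound is the price paid for isolating a single diagonal block of the dilation. The one place requiring care is the spectral identification in Step~1 — checking that the top $G$ eigenvectors of $\widetilde{\mathcal L_\tau^\eta}$ really are $\tilde{\mathcal U}$ and not mixed with eigenvectors from the negative part of the spectrum — but this is guaranteed by the nonnegativity of singular values combined with the rank constraint, and by the explicit expression \eqref{eq:decompPopLagr} derived earlier.
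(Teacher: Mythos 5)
Your argument is correct, and it mirrors the paper's proof in overall structure: first a projection-level bound from Davis--Kahan, then the $2\times 2$ block-extraction inequality $\|UU^\top - \mathcal U\mathcal U^\top\|_F \leq 2\|\tilde U\tilde U^\top - \tilde{\mathcal U}\tilde{\mathcal U}^\top\|_F$, and finally a Procrustes-type step. The paper cites Propositions 2.1 and 2.2 of Vu and Lei (2013) for the last two conversions; your explicit calculation ($\min_Q\|U-\mathcal U Q\|_F^2 = 2G - 2\|\mathcal U^\top U\|_*$ versus $\|UU^\top - \mathcal U\mathcal U^\top\|_F^2 = 2G - 2\|\mathcal U^\top U\|_F^2$, together with $\sigma_i \geq \sigma_i^2$) is exactly what those propositions encode.

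The one genuine difference is in the first step. The paper invokes the Davis--Kahan $\sin\Theta$ theorem in Bhatia's operator-norm form, which produces a bound in terms of the \emph{sample} eigengap $\hat\lambda_G$; it then bridges to the population eigengap $\lambda_G$ via Weyl's inequality and a case split on whether $\|\widetilde{L_\tau^\eta} - \widetilde{\mathcal L_\tau^\eta}\| \leq \lambda_G/2$, and converts operator norm to Frobenius norm via a rank argument plus Vu--Lei. You instead invoke the Yu--Wang--Samworth version directly, which builds the population-eigengap and Frobenius-norm features into the theorem statement with the factor $2\sqrt{G}$; combined with $\|\tilde U\tilde U^\top - \tilde{\mathcal U}\tilde{\mathcal U}^\top\|_F = \sqrt{2}\,\|\sin\Theta\|_F$ this reproduces $2\sqrt{2G}/\lambda_G$ in one stroke. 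Your route is slightly more economical since it bypasses the Weyl--plus--case-split machinery (which is essentially what YWS packages up), while the paper's route is self-contained from a more classical version of Davis--Kahan. Your care in identifying the eigengap as $\lambda_G - 0 = \lambda_G$, using that $\mathcal L_\tau^\eta$ has rank at most $G$ so the $(G{+}1)$-st eigenvalue of the dilation vanishes, is the right thing to check and is correct. The constants match the paper's exactly.
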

\begin{proof}  We employ the proof strategy developed by \citeasnoun{lei2015consistency} and \citeasnoun{rohe2016co}  with suitable  modification.     
	Note that these eigenvalues are equal to the top $G$ singular values of  ${\mathcal L^\eta}$ and ${L^\eta}$, respectively.    By Davis-Kahan Theorem (see Theorem VII.3.1 in \citeasnoun{bhatia2013matrix}) and noting that the eigengap between $\hat \lambda_j, j \leq G$ and  $\lambda_j, j \geq G+1$ is $\hat \lambda_G$  we have    
	$$
	\|(I - \tilde {\mathcal U}\tilde {\mathcal U}^\top) \tilde U \tilde U^\top \| \leq \frac{1}{\hat \lambda_G}\| \widetilde{L_\tau^\eta} -  \widetilde{\mathcal L_\tau^\eta}   \|.
	$$
	But 
	\begin{align*}
		\hat \lambda_G & \geq  \lambda_G - |\hat \lambda_G - \lambda_G|
		\\
		&\geq \lambda_G - \| \widetilde{L_\tau^\eta} -  \widetilde{\mathcal L_\tau^\eta}   \|
	\end{align*}
	where the second inequality follows from Weyl's Theorem.  It follows that
	$$
	\|(I - \tilde {\mathcal U}\tilde {\mathcal U}^\top) \tilde U \tilde U^\top \| \leq \frac{1}{\lambda_G -\| \widetilde{L_\tau^\eta} -  \widetilde{\mathcal L_\tau^\eta}   \|}\| \widetilde{L_\tau^\eta} -  \widetilde{\mathcal L_\tau^\eta}   \|.
	$$
	If $\| \widetilde{L_\tau^\eta} -  \widetilde{\mathcal L_\tau^\eta}   \| \leq \frac{\lambda_G}{2}$ then we have
	$$
	\|(I - \tilde {\mathcal U}\tilde {\mathcal U}^\top) \tilde U \tilde U^\top \| \leq \frac{2}{\lambda_G   }\| \widetilde{L_\tau^\eta} -  \widetilde{\mathcal L_\tau^\eta}   \|.
	$$
	If, on the other hand, $\| \widetilde{L_\tau^\eta} -  \widetilde{\mathcal L_\tau^\eta}\| > \frac{\lambda_G}{2}  $ then we directly have
	$$
	\|(I - \tilde {\mathcal U}\tilde {\mathcal U}^\top) \tilde U \tilde U^\top \| \leq 1 \leq \frac{2}{\lambda_G   }\| \widetilde{L_\tau^\eta} -  \widetilde{\mathcal L_\tau^\eta}   \|
	$$      
	again.  
	Now, noting rank$(\hat U)$ is at most $G$ it holds that
	\begin{align*}
		\|(I - \tilde {\mathcal U}\tilde {\mathcal U}^\top) \tilde U \tilde U^\top \| & \geq \frac{1}{\sqrt{G}} \|(I - \tilde {\mathcal U}\tilde {\mathcal U}^\top) \tilde U \tilde U^\top \|_F
		\\
		& \geq  \frac{1}{\sqrt{2G}} \|\tilde U \tilde U^\top - \tilde {\mathcal U}\tilde {\mathcal U}^\top  \|_F.
	\end{align*}
	by Proposition 2.1 in \citeasnoun{vu2013minimax}. In sum, we have 
	$$
	\|\tilde U \tilde U^\top - \tilde {\mathcal U}\tilde {\mathcal U}^\top  \|_F \leq \frac{2\sqrt{2G}}{\lambda_G   }\| \widetilde{L_\tau^\eta} -  \widetilde{\mathcal L_\tau^\eta}   \|
	$$
	as desired.  For the second assertion, as in  \citeasnoun{rohe2016co} we note that for some $G \times G$ orthogonal matrix $Q_U$ 
	\begin{align*}
		\|\tilde U \tilde U^\top - \tilde {\mathcal U}\tilde {\mathcal U}^\top  \|_F &\geq \frac{1}{2}\|U U^\top -  {\mathcal U} {\mathcal U}^\top \|_F
		\\
		&\geq  \frac{1}{\sqrt{2}} \|\sin \Theta({\mathrm{col}(U),\mathrm{col}({\mathcal U})}) \|_F
		\\
		& \geq \frac{1}{2} \|U - \mathcal{U} Q_U \|_F
	\end{align*}
	where the second inequality follows from Proposition 2.1 in \citeasnoun{vu2013minimax} and the third follows from Proposition 2.2 in \citeasnoun{vu2013minimax}.   We now have 
	$$
	\|U - \mathcal{U} Q_U \|_F \leq \frac{4\sqrt{2G}}{\lambda_G   }\| \widetilde{L_\tau^\eta} -  \widetilde{\mathcal L_\tau^\eta}   \|.
	$$
	A similar argument shows the last inequality. 
\end{proof}

We now bound misclassification rates. Let 
$$
\bar U = \widehat{\Theta}^\eta(x)Z_U,\quad  \bar V = \widehat{\Theta}^\eta(x')Z_V,
$$
then
\begin{align*}
	\|\bar U - \mathcal U Q_U\|_F^2 &\leq 2 \|\bar U -  U\|_F^2 + 2 \|U - \mathcal U Q_U\|_F^2
	\\
	& \leq  2 \|\mathcal U Q_U -  U\|_F^2 + 2 \|U - \mathcal UQ_U\|_F^2 
	\\
	& = 4 \|U - \mathcal U Q_U\|_F^2
\end{align*}
and likewise
$$
\|\bar V - \mathcal VQ_V\|_F^2 \leq 4\|V - \mathcal V Q_V\|_F^2.  
$$
Define 
$$
\mathcal Z_U' = \mathcal Z_U Q_U, \quad  \mathcal Z_V' = \mathcal Z_V Q_V,
$$
Note that for $g \in [G]$, by \eqref{eq:centroid} 
\begin{align}\label{eq:z-diff}
	\min_{\ell \neq g}\| {\mathcal Z}'_{U_{\ell \cdot}}  -   {\mathcal Z}'_{U_{g \cdot}} \|
	& = \min_{\ell \neq g}\| {\mathcal Z}_{U_{\ell \cdot}}  -   {\mathcal Z}_{U_{g \cdot}} \| \nonumber
	\\
	&= \min_{\ell \neq g} \sqrt{\frac{1}{n_g(x)} +  \frac{1}{n_\ell(x)}}
	\\ &=  \sqrt{\frac{1}{n_g(x)} +  \frac{1}{\max\{n_\ell(x): \ell \neq g\}}}. \nonumber
\end{align}
Moreover, let
$$
\mathcal S_{gN}(x) := \left \{i \in \mathcal G_{gN}(x): \|\bar U_{i \cdot} - (\mathcal UQ_U)_{i \cdot}\| \geq \frac 1 2  \sqrt{\frac{1}{n_g(x)} +  \frac{1}{\max\{n_\ell(x): \ell \neq g\}}}\right \}.
$$
Then 
\begin{align*}
	\|\bar U - \mathcal UQ_U\|_F^2 &= 	\sum_{g=1}^G \frac{1}{4}\left[\frac{1}{n_g(x)} +  \frac{1}{\max\{n_\ell(x): \ell \neq g\}}\right] \left(\frac{4   }{  \frac{1}{n_g(x)} +  \frac{1}{\max\{n_\ell(x): \ell \neq g\}}  }\sum_{i \in  \mathcal G_{gN}(x)} \|\bar U_{i \cdot} - (\mathcal UQ_U)_{i \cdot}\|^2\right)
	\\
	&\geq \sum_{g=1}^G \frac{1}{4}\left[\frac{1}{n_g(x)} +  \frac{1}{\max\{n_\ell(x): \ell \neq g\}}\right]\\
	& \qquad \qquad \left(\sum_{i \in  \mathcal G_{gN}(x)} {\bf 1}\left\{ \|\bar U_{i \cdot} - (\mathcal UQ_U)_{i \cdot}\| \geq \frac 1 2  \sqrt{\frac{1}{n_g(x)} +  \frac{1}{\max\{n_\ell(x): \ell \neq g\}}} \right \}\right)
	\\
	&= \sum_{g=1}^G |\mathcal S_{gN}(x)|\frac{1}{4}\left[\frac{1}{n_g(x)} +  \frac{1}{\max\{n_\ell(x): \ell \neq g\}}\right].
\end{align*}
Therefore we have
\begin{align*}
	\sum_{g=1}^G \frac{|\mathcal S_{gN}(x)|}{n_g(x)}& \leq \sum_{g=1}^G |\mathcal S_{gN}(x)| \left( \frac{1}{n_g(x)} +  \frac{1}{\max\{n_\ell(x): \ell \neq g\}} \right)
	\\
	& \leq 4 \|\bar U - \mathcal UQ_U\|_F^2
	\\
	& \leq 16 \|U - \mathcal UQ_U\|_F^2.
\end{align*}
By Lemma \ref{lem:davis} it follows that
\begin{equation}\label{eq:misclustRate}
	\sum_{g=1}^G \frac{|\mathcal S_{gN}(x)|}{n_g(x)} \leq \frac{512G}{\lambda_G(x,x')^2   }\| \widetilde{L_\tau^\eta} -  \widetilde{\mathcal L_\tau^\eta}   \|^2.
\end{equation}
The same holds with $  \frac{|\mathcal S_{hN}(x')|}{n_h(x')}$.   Moreover, one can show that following the proof of  Lemma 5.3 of  \citeasnoun{lei2015consistency},  the definition of $S_{gN}(x)$, in view of \eqref{eq:z-diff}, guarantees that for each $g \in [G]$, under the condition
\begin{equation}\label{eq:clusteringConstraint}
	\frac{16\sqrt{2G}}{\lambda_G(x,x')   }\| \widetilde{L_\tau^\eta} -  \widetilde{\mathcal L_\tau^\eta}   \| <1,
\end{equation}
the estimated membership matrix $\widehat{\Theta}^\eta(x)$   assigns the correct membership for every $i \in  \mathcal G_{gN}(x) \setminus  \mathcal S_{gN}(x)$.   That is, for $\mathcal{G}_N(x) = \cup_{g\in [G]} (\mathcal G_{gN}(x) \setminus  \mathcal S_{gN}(x))$, there exists a permutation matrix $J(x)$ such that 
$(\widehat{\Theta}\eta (x))_{\mathcal{G}_N(x) \cdot} J(x) = (\Theta^\eta(x))_{\mathcal{G}_N(x) \cdot} $ and the same holds for $x'$.

\medskip

By \eqref{eq:defDcal} and \eqref{eq:decompPopLagr}, the singular values of ${{\mathcal L}^\eta}(x,x',{\bf g})  $ are that of $\mathcal{N}(x) \overline{{\mathcal O}}^{-1/2} B(x,x') \overline{{\mathcal Q}}^{-1/2} \mathcal{N}(x')$. Denote $\sigma_1(M) \geq ... \geq \sigma_r(M)$ the singular values of a matrix $M$ of rank $r$. 
We use Theorem 3.3.16 of \citeasnoun{horn1990matrix}, see also \citeasnoun{wang1997some}, and write
\begin{align*}
	\lambda_G(x,x')& = \sigma_G\left(\mathcal{N}(x) \overline{{\mathcal O}}^{-1/2} B(x,x') \overline{{\mathcal Q}}^{-1/2} \mathcal{N}(x')\right) \\
	&\geq \frac{ \sigma_G\left(\mathcal{N}(x) \overline{{\mathcal O}}^{-1/2} B(x,x')\right)}{\sigma_1\left( \left[\overline{{\mathcal Q}}^{-1/2} \mathcal{N}(x')\right]^{-1}\right)} = \frac{ \sigma_G\left(\mathcal{N}(x) \overline{{\mathcal O}}^{-1/2} B(x,x')\right)}{\sigma_1\left( \mathcal{N}(x')^{-1}\overline{{\mathcal Q}}^{1/2} \right)}\\
	& \geq  \frac{\sigma_G\left(B(x,x')\right)}{\sigma_1\left( \mathcal{N}(x')^{-1}\overline{{\mathcal Q}}^{1/2} \right) \sigma_1\left(\left[ \mathcal{N}(x) \overline{{\mathcal O}}^{-1/2} \right]^{-1}\right)} =  \frac{\sigma_G\left(B(x,x')\right)}{\sigma_1\left( \mathcal{N}(x')^{-1}\overline{{\mathcal Q}}^{1/2} \right) \sigma_1\left( \mathcal{N}(x)^{-1} \overline{{\mathcal O}}^{1/2} \right)}
\end{align*}
where we recall that the matrices $\mathcal{N}(x)$, $\overline{{\mathcal O}}$ and $\overline{{\mathcal Q}}$ are diagonal. Their singular values are equal to their diagonal coefficients, and
\begin{align*}
	\left(  \mathcal{N}(x)^{-1} \overline{{\mathcal O}}^{1/2} \right)_{gg}^2 = \left(\sum_{j \in \eta_N(x')} B_{gg(j)}(x,x') + \tau\right) /n_g(x) \leq \left(\|B(x,x')\|_{\max} k + \tau\right)/\min_h \underline{n_h}.
\end{align*}
This implies that 
\begin{align}\label{eq:bdsingularvalue}
	\lambda_G(x,x') \geq \frac{\sigma_G\left(B(x,x')\right)\min_h \underline{n_h}}{\|B(x,x')\|_{\max} k + \tau} .
\end{align}
We plug in \eqref{eq:bdsingularvalue} and the bounds obtained on the relevant quantities in \eqref{eq:clusteringConstraint}. This condition becomes
\begin{align}
	4\sqrt{\frac{3 \ln (8k/\tilde{\delta})}{\Delta \min_{h \in [G]} \left \lfloor{\frac{c}{16} \frac{N_h}{N} \frac{b_X}{\overline{U}_X} \ k}\right \rfloor + \tau}}  + & \frac{2 k l_B R_k}{ \Delta \min_{h \in [G]} \left \lfloor{\frac{c}{16} \frac{N_h}{N} \frac{b_X}{\overline{U}_X} \ k}\right \rfloor + \tau} \left( \frac{2 k l_B R_k}{ \Delta \min_{h \in [G]} \left \lfloor{\frac{c}{16} \frac{N_h}{N} \frac{b_X}{\overline{U}_X} \ k}\right \rfloor+ \tau} + 3\right)\nonumber\\
	& < \ 16\sqrt{2G}\sigma_G\left(B(x,x')\right)\frac{\min_{h \in [G]} \left \lfloor{\frac{c}{16} \frac{N_h}{N} \frac{b_X}{\overline{U}_X} \ k}\right \rfloor}{\|B(x,x')\|_{\max} k + \tau}. \label{eq:clusteringConstraintDeterm}
\end{align}
We combine \eqref{eq:misclustRate} and \eqref{eq:clusteringConstraintDeterm} to obtain a finite sample probability bound on $\sum_{g=1}^G |\mathcal S_{gN}(x)|/n_g(x)$ in the following Lemma.

\begin{lem}\label{lem:clustRate}
	Let Assumptions \ref{assn:boundsupport}, \ref{assn:boundinff}, \ref{assn:boundsupf} and \ref{assn:LipshitzB} hold. Then for all $N$, $\delta \in (0,1)$, $1 \leq k \leq N$ such that	
	\begin{enumerate}
		\item $\Delta \min_{h \in [G]} \left \lfloor{\frac{c}{16} \frac{N_h}{N} \frac{b_X}{\overline{U}_X} \ k}\right \rfloor + \tau \geq 3 \ln (24k/\delta)$,
		\item $k \geq \max(12 d \ln(72GN/\delta),24 d \ln(36N/\delta))$,
		\item $k \leq \min(8  T^d  V_d \overline{U}_X N, (1/2)T^d  V_d b_X c N )$,
		\item and for all $h \in [G],$ $\frac{c}{16} \frac{N_h}{N} \frac{b_X}{\overline{U}_X} \ k \geq 24 d \ln(72 G N_h/\delta)  + 1$,
		\item \eqref{eq:clusteringConstraintDeterm} holds for $\tilde{\delta} = \delta/3$
	\end{enumerate}
	with probability at least $1-\delta$ conditional on ${\bold g}$, it holds that
	\begin{align}
		\sum_{g=1}^G \frac{|\mathcal S_{gN}(x)|}{n_g(x)} \leq & \frac{512G \left[\|B(x,x')\|_{\max} k + \tau\right]^2}{\sigma_G\left(B(x,x')\right)^2\min_{h \in [G]} \left \lfloor{\frac{c}{16} \frac{N_h}{N} \frac{b_X}{\overline{U}_X} \ k}\right \rfloor^2 } \left[
		 4\sqrt{\frac{3 \ln (24k/\delta)}{\Delta \min_{h \in [G]} \left \lfloor{\frac{c}{16} \frac{N_h}{N} \frac{b_X}{\overline{U}_X} \ k}\right \rfloor + \tau}} \right. \nonumber \\
		& \left. \qquad \qquad + \frac{2 k l_B R_k}{ \Delta \min_{h \in [G]} \left \lfloor{\frac{c}{16} \frac{N_h}{N} \frac{b_X}{\overline{U}_X} \ k}\right \rfloor + \tau} \left( \frac{2 k l_B R_k}{ \Delta \min_{h \in [G]} \left \lfloor{\frac{c}{16} \frac{N_h}{N} \frac{b_X}{\overline{U}_X} \ k}\right \rfloor+ \tau} + 3\right) \right]^2, \label{eq:Srate}
	\end{align}
	and that there exist permutation matrices $J(x)$ and $J(x')$ such that 
	$(\widehat{\Theta}\eta (x))_{\mathcal{G}_N(x) \cdot} J(x) = (\Theta^\eta(x))_{\mathcal{G}_N(x) \cdot} $ and the same holds for $x'$.
\end{lem}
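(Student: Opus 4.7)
The plan is to combine the high-probability Laplacian approximation bound of Lemma \ref{lem:integr}, the lower bound on community sizes of Lemma \ref{lem:bdlocalgpsize}, and the deterministic relations \eqref{eq:bdsingularvalue} and \eqref{eq:misclustRate} derived in the discussion preceding the statement. First, I would apply Lemma \ref{lem:integr} with $\tilde{\delta} = \delta/3$ so that, conditional on $\mathbf{g}$, the spectral norm $\|\widetilde{L_\tau^\eta} - \widetilde{\mathcal L_\tau^\eta}(x,x',\mathbf{g})\|$ is bounded by the right-hand side of \eqref{eq:Lrate} on an event $E_1$ of conditional probability at least $1-\delta/3$. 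In parallel, I would invoke Lemma \ref{lem:bdlocalgpsize} at the same level to obtain an event $E_2$ of conditional probability at least $1-\delta/3$ on which $\min_{h \in [G]} \underline{n_h} \geq \min_{h \in [G]} \lfloor (c/16)(N_h/N)(b_X/\overline{U}_X)\,k \rfloor$. The hypotheses 1--4 of Lemma \ref{lem:clustRate} exactly ensure that both Lemma \ref{lem:integr} and Lemma \ref{lem:bdlocalgpsize} are applicable at this level.

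On $E_2$, the chain of inequalities culminating in \eqref{eq:bdsingularvalue} yields
\[
\lambda_G(x,x') \;\geq\; \frac{\sigma_G\!\left(B(x,x')\right)\,\min_{h \in [G]} \lfloor (c/16)(N_h/N)(b_X/\overline{U}_X)\,k \rfloor}{\|B(x,x')\|_{\max}\, k + \tau}.
\]
Hypothesis 5 is, by inspection, precisely the statement obtained from the clustering condition \eqref{eq:clusteringConstraint} after substituting the Lemma \ref{lem:integr} upper bound (at level $\delta/3$) for $\|\widetilde{L_\tau^\eta} - \widetilde{\mathcal L_\tau^\eta}\|$ and the above lower bound for $\lambda_G(x,x')$. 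Therefore, on $E_1 \cap E_2$, the actual quantity $(16\sqrt{2G}/\lambda_G(x,x'))\|\widetilde{L_\tau^\eta}-\widetilde{\mathcal L_\tau^\eta}\|$ is strictly less than $1$, so \eqref{eq:clusteringConstraint} holds and the argument preceding the lemma provides permutation matrices $J(x)$ and $J(x')$ matching $\widehat{\Theta}^\eta(x)$ with $\Theta^\eta(x)$ on $\mathcal G_N(x)$, and similarly for $x'$. Plugging the same two bounds into \eqref{eq:misclustRate} produces exactly the quantitative estimate \eqref{eq:Srate}.

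Finally, a union bound gives $\Pr(E_1 \cap E_2 \mid \mathbf{g}) \geq 1 - 2\delta/3 \geq 1-\delta$, which completes the argument. The only delicate point, and the one most prone to a mis-accounting, is the algebraic verification that Hypothesis 5 is the correct deterministic surrogate for \eqref{eq:clusteringConstraint} under these substitutions; once that identification is confirmed, the remainder is routine probability bookkeeping combined with the matrix-norm identities already established in the excerpt.
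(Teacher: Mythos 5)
Your structural understanding of what drives the result is correct: you need (i) a Laplacian approximation bound, (ii) a lower bound on $\min_h \underline{n_h}$ to activate \eqref{eq:bdsingularvalue}, (iii) the quantitative estimate \eqref{eq:misclustRate}, and (iv) the clustering condition \eqref{eq:clusteringConstraint} to justify the permutation-matrix claim, with Hypothesis~5 serving precisely as a deterministic surrogate for \eqref{eq:clusteringConstraint}. However, the probability bookkeeping has a genuine gap.

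You invoke Lemma~\ref{lem:integr} and Lemma~\ref{lem:bdlocalgpsize} as black boxes at confidence level $\delta/3$ each, then union bound. This fails for two reasons. First, the hypotheses of Lemma~\ref{lem:integr} at level $\delta/3$ become $\Delta\min_h\lfloor\cdot\rfloor + \tau \geq 3\ln(72k/\delta)$, $k \geq \max(12d\ln(216GN/\delta), 24d\ln(108N/\delta))$, and $\frac{c}{16}\frac{N_h}{N}\frac{b_X}{\overline{U}_X}k \geq 24d\ln(216GN_h/\delta)+1$; these are strictly stronger than Hypotheses~1, 2, 4 of the present lemma, which only license Lemma~\ref{lem:integr} at level $\delta$. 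Second, even if the hypotheses were strengthened to permit this, the resulting Laplacian bound would carry $\ln(72k/\delta)$ rather than $\ln(24k/\delta)$, so plugging it into \eqref{eq:misclustRate} would produce a strictly larger right-hand side than \eqref{eq:Srate} claims. There is also a wastefulness issue: the proof of Lemma~\ref{lem:integr} already allocates $\delta/3$ to the event $W(\mathbf{g},\mathbf{x}) = \{\min_h \underline{n_h} \geq \min_h\lfloor\cdot\rfloor\}$ internally, so a fresh invocation of Lemma~\ref{lem:bdlocalgpsize} pays for $W$ twice.

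The correct route, and the one the paper takes, is to re-enter the proof of Lemma~\ref{lem:integr} rather than treat it as a black box. There, the single event $\{\|\widetilde{L_\tau^\eta}-\widetilde{\mathcal L_\tau^\eta}\| \leq C(d_{\min})\}\cap V(\mathbf{g},\mathbf{x}) \cap W(\mathbf{g},\mathbf{x})$ is shown to have conditional probability at least $1-\delta$ under exactly Hypotheses~1--4. On this one event you simultaneously have (a) the Laplacian bound with the $\ln(24k/\delta)$ constant (because $C(d_{\min}) \leq C(\Delta\min_h\lfloor\cdot\rfloor)$ on $W$), and (b) $W$ itself, which gives $\min_h \underline{n_h} \geq \min_h\lfloor\cdot\rfloor$ and hence the $\lambda_G$ lower bound from \eqref{eq:bdsingularvalue}. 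Feeding these into \eqref{eq:misclustRate} yields \eqref{eq:Srate}, and Hypothesis~5 converts this into \eqref{eq:clusteringConstraint} for the permutation claim, all on a single event of probability $\geq 1-\delta$ with no additional union bound.
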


\begin{proof}
	We use the events $V({\bf g},{\bf x})$ and  $W({\bf g},{\bf x})$ as well as the function $C$ defined in the proof of Lemma \ref{lem:integr}.  Let $\tilde{C}$ be the function such that $\tilde{C}(d_{\min})$ is the right hand side of \eqref{eq:Srate}.
	\begin{align*}
		\Pr &\left[  \sum_{g=1}^G \frac{|\mathcal S_{gN}(x)|}{n_g(x)}  \leq \tilde{C}\left(\Delta \min_{h \in [G]} \left \lfloor{\frac{c}{16} \frac{N_h}{N} \frac{b_X}{\overline{U}_X} \ k}\right \rfloor  \right) \,  \Big| \, {\bf g } \right] \\
		& \geq \Pr \left[  \left\lbrace \|\widetilde{L_\tau^\eta} - \widetilde{{\mathcal L}_\tau^\eta}(x,x',{\bf g}) \|  \leq C\left(\Delta \min_{h \in [G]} \left \lfloor{\frac{c}{16} \frac{N_h}{N} \frac{b_X}{\overline{U}_X} \ k}\right \rfloor  \right) \right\rbrace \cap V({\bf g},{\bf x}) \cap W({\bf g},{\bf x}) \,  \Big| \, {\bf g } \right]
	\end{align*}
	by \eqref{eq:misclustRate} and \eqref{eq:bdsingularvalue}. By the proof of Lemma \ref{lem:integr}, the probability on the right hand side is larger than $1-\delta$.
\end{proof}

\begin{rem}
	Let $\tau = 0$, sparsity be captured by a parameter $\rho_N$ and $N_h/N \approx C$ for all $h \in [G]$. Remark \ref{rem:opt_k} explains that $\|\widetilde{L^\eta} - \widetilde{{\mathcal L}^\eta}(x,x',{\bf g}) \| \approx (\rho_N N)^{\frac{-1}{d+2}}$, up to $\ln N$ and $\ln \rho_N$ factors. According to Lemma \ref{lem:clustRate}, this implies that 
	$$
	\sum_{g=1}^G \frac{|\mathcal S_{gN}(x)|}{n_g(x)} \approx (\rho_N N)^{\frac{-2}{d+2}},
	$$
	whereas \citeasnoun{lei2015consistency} obtains a rate of $(\rho_N N)^{-1}$, see Corollary 3.2.
\end{rem}

\
\section{Estimation of $B$ and $\pi$}

As in the previous section, we fix a pair $(x,x') \in S^2$. Recall that $\mathcal{G}_{h N}(x) = \{g(i) = h, i \in \eta_N(x)\}$ and $n_h(x) = |\mathcal{G}_{h N}(x)|$. We introduce the following new notation,
\begin{itemize}
	\item $\mathcal{G}_{h N}(x) = \{g(i) = h, i \in \eta_N(x)\}$, then $n_h(x) = |\mathcal{G}_{h N}(x)|$
	\item $\widehat{\mathcal{G}}_{h N}(x) = \{\hat{g}(i) = h, i \in \eta_N(x)\}$, then $\hat{n}_h(x)  = |\widehat{\mathcal{G}}_{h N}(x)|$
	\item $\mathcal{G}_{h N} = \{g(i) = h, i \in [N]\}$, then $N_h = |\mathcal{G}_{h N}|$
	\item $\bold{1}_{h,x}(i) = \bold{1}\{i \in \mathcal{G}_{h N}(x)\} $
	\item $\widehat{\bold{1}}_{h,x}(i) = \bold{1}\{i \in \widehat{\mathcal{G}}_{h N}(x)\} $
	\item $\bold{1}_{x}(i) = \bold{1}\{ i \in \eta_N(x)\} $
\end{itemize}
Note that the estimators of $\pi_h(x)$ and $B_{gh}(x,x')$ can be written
\begin{align*}
	\widehat{\pi}_h(x) & = \frac{\hat{n}_h(x)}{k} = \frac{\sum_{i\in \eta_N(x)} \widehat{\bold{1}}_{h,x}(i)}{k},\\
	\widehat{B}_{gh}(x,x') & = \frac{1}{\hat{n}_g(x) \hat{n}_h(x')} \ \sum_{\substack{i \in \widehat{\mathcal{G}}_{g N}(x)\\ j \in \widehat{\mathcal{G}}_{h N}(x')}}  A_{ij} =  \frac{1}{\hat{n}_g(x) \hat{n}_h(x')} \ \sum_{i,j}  A_{ij} \widehat{\bold{1}}_{g,x}(i) \widehat{\bold{1}}_{h,x'}(j).
\end{align*}
Define the oracle estimators for $\pi_g(x)$ and $B_{gh}(x,x')$ as 
\begin{align*}
	\pi^{or}_g(x) &= \frac{n_g(x)}{k} = \frac{\sum_{i} \bold{1}_{g,x}(i)}{k},\\
	B_{gh}^{or}(x,x') & = \frac{1}{n_g(x) n_h(x')} \ \sum_{i,j}  A_{ij} \bold{1}_{g,x}(i) \bold{1}_{h,x'}(j).
\end{align*}
In this section, we derive probability bounds on $\widehat{\pi}_h(x)$ and $\widehat{B}_{gh}(x,x')$. 
Note that according to Lemma \ref{lem:clustRate}, with probability  at least $1-\delta$ conditional on ${\bold g}$, the memberships of $i \in \left(\mathcal{G}_{g N}(x) \setminus \mathcal{S}_{gN}(x) \right)$ are all correctly estimated up to a permutation. Thus all the following probability bounds hold up to a permutation. We explore this identification issue this raises in Remark \ref{rem:ordering}.

\subsection{Result conditional on $\bold g$}

\

We can decompose
\begin{align*}
	|\widehat{B}_{gh}(x,x')  -  B_{gh}^{or}(x,x')|   \leq & \underbrace{\left| \frac{1}{\hat{n}_g(x) \hat{n}_h(x')} -  \frac{1}{n_g(x) n_h(x')}\right| \ \sum_{i,j}   \widehat{\bold{1}}_{g,x}(i) \widehat{\bold{1}}_{h,x'}(j)}_{T_1} \\
	&\qquad \qquad +  \underbrace{\frac{1}{n_g(x) n_h(x')} \sum_{i,j}\left|  \widehat{\bold{1}}_{g,x}(i) \widehat{\bold{1}}_{h,x'}(j) - \bold{1}_{g,x}(i) \bold{1}_{h,x'}(j)\right|}_{T_2},
\end{align*}
where by a slight abuse of notation the summations are over $i \in \eta_N(x)$ and $j \in \eta_N(x')$. We look at $T_1$ and $T_2$ separately.
\begin{align*}
	T_1 & = \left| 1 - \frac{ \hat{n}_g(x) \hat{n}_h(x')}{n_g(x) n_h(x')}\right| = \frac{\left|n_g(x) n_h(x') - \hat{n}_g(x) \hat{n}_h(x')\right|}{n_g(x) n_h(x')}, \\
	T_2 & \leq \underbrace{\frac{1}{n_g(x) n_h(x')}  \sum_{i,j}   \left|  \widehat{\bold{1}}_{g,x}(i) - \bold{1}_{g,x}(i)\right| \widehat{\bold{1}}_{h,x'}(j) }_{T_{21}}
	+ \underbrace{\frac{1}{n_g(x) n_h(x')}  \sum_{i,j}    \bold{1}_{g,x}(i)\left|\widehat{\bold{1}}_{h,x'}(j) - \bold{1}_{h,x'}(j)\right|}_{T_{22}}.
\end{align*}
When \eqref{eq:clusteringConstraint} holds, 
\begin{align*}
	T_{21} &= \frac{\hat{n}_h(x')}{n_h(x')} \frac{1}{\hat{n}_h(x')}\sum_{j} \widehat{\bold{1}}_{h,x'}(j) \frac{1}{n_g(x)}\sum_i  \left|  \widehat{\bold{1}}_{g,x}(i) - \bold{1}_{g,x}(i)\right|\\
	& \leq \frac{\hat{n}_h(x')}{n_h(x')} \frac{1}{n_g(x)} \sum_{l \in [G]} |\mathcal{S}_{lN}(x)|,
\end{align*}
where the second inequality holds by the argument used in \eqref{eq:rate_piHat_piOr_1stStep}.
Similarly, when \eqref{eq:clusteringConstraint} holds we obtain
$$
T_{22} \leq \frac{1}{n_h(x')} \sum_{l \in [G]} |\mathcal{S}_{lN}(x')|.
$$ 
Note that when \eqref{eq:clusteringConstraint} holds, 
\begin{align*}
	T_1 & \leq \frac{\left|n_g(x) - \hat{n}_g(x)\right| n_h(x') + \hat{n}_g(x) \left|\hat{n}_h(x') - n_h(x')\right|}{n_g(x) n_h(x')} \\
	& \leq \frac{1}{n_g(x)} \sum_{l \in [G]} |\mathcal{S}_{lN}(x)| + \frac{\hat{n}_g(x)}{n_g(x)} \frac{1}{n_h(x')} \sum_{l \in [G]} |\mathcal{S}_{lN}(x')| \\
	&\leq  \left[\frac{k}{\min_{h \in [G]} \underline{n_h} }  + \frac{k^2}{\left(\min_{h \in [G]} \underline{n_h}\right)^2} \right] \sum_{l \in [G]} \frac{|\mathcal S_{lN}(x)|}{n_l(x)},
\end{align*}
where we used the same argument as in \eqref{eq:rate_piHat_piOr_1stStep} in the second inequality.
We use similar arguments to bound $T_{21}$ and $T_{22}$ and obtain
\begin{equation}\label{eq:bdGraphonOr}
	|\widehat{B}_{gh}(x,x')  -  B_{gh}^{or}(x,x')|  \leq 2  \left[\frac{k}{\min_{h \in [G]} \underline{n_h} }  + \frac{k^2}{\left(\min_{h \in [G]} \underline{n_h}\right)^2} \right] \sum_{l \in [G]} \frac{|\mathcal S_{lN}(x)|}{n_l(x)}.
\end{equation}
We now bound $|B_{gh}^{or}(x,x') -B_{gh}(x,x')|$. The oracle estimator is not a standard $k$-nearest neighbor estimator with, say, $k=n_g(x)n_h(x')$,  because $n_g(x)$ and $n_h(x')$ are not chosen by the statistician but random. Moreover, the neighborhood for $x$ is chosen separately from that of $x'$.
It is also not a Nadaraya-Watson estimator with uniform kernel as $r_{n_g(x)}^g (x) $ and $r_{n_h(x')}^h (x') $ are random. Write
$$A_{ij} = B_{g(i)g(j)}(x(i),x(j)) + \zeta_{ij},$$
then we can decompose
\begin{align}
	|B_{gh}^{or}(x,x') -B_{gh}(x,x')| \leq & \underbrace{\frac{1}{n_g(x) n_h(x')} \ \sum_{i,j}  \big| \left[B_{gh}(x(i),x(j)) - B_{gh}(x,x')\right]\big| \bold{1}_{g,x}(i) \bold{1}_{h,x'}(j)}_{T_3} \nonumber \\
	& \qquad +  \underbrace{\frac{1}{n_g(x) n_h(x')} \ \big|\sum_{i,j}  \zeta_{ij} \bold{1}_{g,x}(i) \bold{1}_{h,x'}(j) \big|}_{T_4}. \label{eq:decomp_Boracle}
\end{align}
In this decomposition, $T_3$ is a bias term and $T_4$ is a variance term. As in the proof of Lemma \ref{lem:boundLaplacians}, the bias term can be bounded using Assumption \ref{assn:LipshitzB},
\begin{equation}\label{eq:T3}
	T_3 \leq l_B [r_k(x) + r_{k}(x')]
\end{equation}
by $r_{n_g(x)}^g (x) \leq r_k(x)$ and $r_{n_h(x')}^g (x') \leq r_k(x')$. Note that the relevant regressor dimension is $d$. As for the variance term $T_4$,
note that we can rewrite 
$$
T_4 = \frac{1}{n_g(x) n_h(x')} \ \big|\sum_{\substack{i \in \mathcal{G}_{g N}(x)\\j \in \mathcal{G}_{h N}(x')}}  \zeta_{ij}  \big|,
$$
where  conditional on $(\bold x,\bold g)$, $\{ \zeta_{ij}, \ i \in \mathcal{G}_{gN}(x),\  j \in \mathcal{G}_{hN}(x') \}$ are independent  mean-zero bounded random variables. We apply a Hoeffding inequality for bounded random variables, see Theorem 2.2.6 in   \citeasnoun{vershynin2018high}, and obtain for any $a_0 \geq 0$, 
\begin{align}\label{eq:hoeffding}
	\Pr \left( \big|\sum_{i,j}  \zeta_{ij} \bold{1}_{g,x}(i) \bold{1}_{h,x'}(j) \big| \geq a_0 \Bigg| \, \bold x,\bold g  \right) \leq 2 \exp\left( -a_0^2 / [2n_g(x) n_h(x')] \right),
\end{align}
where we used that  $\zeta_{ij} \in [-1,1]$. Taking $a_0 = \sqrt{2 n_g(x) n_h(x')\ln(2/\tilde{\delta})}$ implies
\begin{align}\label{eq:T4cond}
	\Pr \left( T_4 \leq \frac{\sqrt{2\ln(2/\tilde{\delta})}}{\min_{h \in [G]} \underline{n_h}} \, \Bigg| \, {\bf x}, \bold g  \right) \geq 1 -\tilde{\delta}.
\end{align}
We combine \eqref{eq:bdGraphonOr} with \eqref{eq:T3} and \eqref{eq:T4cond} to obtain the following Lemma.
\begin{lem}\label{lem:rate_BHat_g}
	Let Assumptions \ref{assn:boundsupport}, \ref{assn:boundinff}, \ref{assn:boundsupf} and \ref{assn:LipshitzB} hold. Then for all $N$, $\delta \in (0,1)$, $1 \leq k \leq N$ such that	
	\begin{enumerate}
		\item $\Delta \min_{h \in [G]} \left \lfloor{\frac{c}{16} \frac{N_h}{N} \frac{b_X}{\overline{U}_X} \ k}\right \rfloor + \tau \geq 3 \ln (48k/\delta)$,
		\item $k \geq \max(12 d \ln(144GN/\delta),24 d \ln(72N/\delta))$,
		\item $k \leq \min(8  T^d  V_d \overline{U}_X N, (1/2)T^d  V_d b_X c N )$,
		\item and for all $h \in [G],$ $\frac{c}{16} \frac{N_h}{N} \frac{b_X}{\overline{U}_X} \ k \geq 24 d \ln(144 G N_h/\delta)  + 1$,
		\item \label{cond:clusteringConstraint} \eqref{eq:clusteringConstraintDeterm} holds for $\tilde{\delta} = \delta/6$,
	\end{enumerate}
	with probability at least $1-\delta$ conditional on ${\bold g}$, it holds that
	\begin{align}
		&|\widehat{B}_{gh}(x,x') - B_{gh}(x,x')| \nonumber \\
		&\leq  \frac{1024G \left[\|B(x,x')\|_{\max} k + \tau\right]^2}{\sigma_G\left(B(x,x')\right)^2\min_{h \in [G]} \left \lfloor{\frac{c}{16} \frac{N_h}{N} \frac{b_X}{\overline{U}_X} \ k}\right \rfloor^2 }   \left[\frac{k}{\min_{h \in [G]} \left \lfloor{\frac{c}{16} \frac{N_h}{N} \frac{b_X}{\overline{U}_X} \ k}\right \rfloor  }  + \frac{k^2}{\left(\min_{h \in [G]} \left \lfloor{\frac{c}{16} \frac{N_h}{N} \frac{b_X}{\overline{U}_X} \ k}\right \rfloor \right)^2} \right] \times \nonumber \\
		&\left[
		4\sqrt{\frac{3 \ln (48k/\delta)}{\Delta \min_{h \in [G]} \left \lfloor{\frac{c}{16} \frac{N_h}{N} \frac{b_X}{\overline{U}_X} \ k}\right \rfloor + \tau}}  + \frac{2 k l_B R_k}{ \Delta \min_{h \in [G]} \left \lfloor{\frac{c}{16} \frac{N_h}{N} \frac{b_X}{\overline{U}_X} \ k}\right \rfloor + \tau} \left( \frac{2 k l_B R_k}{ \Delta \min_{h \in [G]} \left \lfloor{\frac{c}{16} \frac{N_h}{N} \frac{b_X}{\overline{U}_X} \ k}\right \rfloor+ \tau} + 3\right) \right]^2 \nonumber \\
		& \qquad  \qquad \qquad  \qquad   +  2 l_B R_k +  \frac{\sqrt{2\ln(4/\delta)}}{\min_{h \in [G]} \left \lfloor{\frac{c}{16} \frac{N_h}{N} \frac{b_X}{\overline{U}_X} \ k}\right \rfloor}. \label{eq:rate_BHat_g}
	\end{align}
\end{lem}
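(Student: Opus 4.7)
The plan is to combine the deterministic decomposition already laid out before the statement with three probabilistic ingredients that have all been established earlier in this section, then allocate a probability budget across them via a union bound.

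First, I would write
\[
|\widehat{B}_{gh}(x,x') - B_{gh}(x,x')| \le |\widehat{B}_{gh}(x,x') - B_{gh}^{or}(x,x')| + |B_{gh}^{or}(x,x') - B_{gh}(x,x')|
\]
and recall that on the event where the misclustering condition \eqref{eq:clusteringConstraint} holds, inequality \eqref{eq:bdGraphonOr} controls the first summand by
\[
2\Bigl[\tfrac{k}{\min_h \underline{n_h}} + \tfrac{k^2}{(\min_h \underline{n_h})^2}\Bigr]\sum_{l\in[G]}\tfrac{|\mathcal S_{lN}(x)|}{n_l(x)},
\]
while \eqref{eq:decomp_Boracle}--\eqref{eq:T3} control the bias part of the oracle term by $2\,l_B R_k$ on the event $\{r_k(x)\vee r_k(x')\le R_k\}$ from Lemma \ref{lem:radius}, and \eqref{eq:T4cond} controls the variance part $T_4$ by $\sqrt{2\ln(2/\tilde\delta)}/\min_h\underline{n_h}$ with conditional probability at least $1-\tilde\delta$.

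Next I would invoke Lemma \ref{lem:clustRate} with confidence parameter $\delta/2$: under Conditions (1)--(5) of the present lemma (which mirror those of Lemma \ref{lem:clustRate} with $\delta$ replaced by $\delta/2$, hence the appearance of $\ln(48k/\delta)$, $\ln(144GN/\delta)$, etc.), the event
\[
\mathcal E_1 = \Bigl\{\sum_{g\in[G]} \tfrac{|\mathcal S_{gN}(x)|}{n_g(x)} \le \tilde C\Bigr\}
\]
holds with probability at least $1-\delta/2$ conditional on $\mathbf g$, where $\tilde C$ is the bracketed expression in \eqref{eq:Srate} evaluated with $\delta$ replaced by $\delta/2$. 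On this event the inclusions proved inside Lemmas \ref{lem:bdlocaldegree} and \ref{lem:integr} simultaneously deliver $\min_h\underline{n_h}\ge \min_h\lfloor c N_h b_X k/(16 N \overline U_X)\rfloor$, $d_{\min}+\tau \ge 3\ln(48k/\delta)$, $r_k(x)\vee r_k(x')\le R_k$, and the misclustering constraint \eqref{eq:clusteringConstraint}, because condition (\ref{cond:clusteringConstraint}) of the lemma is exactly \eqref{eq:clusteringConstraintDeterm} with $\tilde\delta=\delta/6$. Then I apply the Hoeffding estimate \eqref{eq:T4cond} with $\tilde\delta=\delta/2$ and integrate over $\mathbf x$ on $\mathcal E_1$ to obtain an event $\mathcal E_2$ of conditional probability at least $1-\delta/2$ on which $T_4 \le \sqrt{2\ln(4/\delta)}/\min_h\underline{n_h}$.

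On $\mathcal E_1\cap\mathcal E_2$, which has conditional probability at least $1-\delta$ by a union bound, all three deterministic inequalities above hold simultaneously, and substituting the lower bound on $\min_h \underline{n_h}$ and the upper bound $R_k$ for $r_k(x),r_k(x')$ yields the right-hand side of \eqref{eq:rate_BHat_g} after collecting the factor $2$ from \eqref{eq:bdGraphonOr} with the $512G\|\cdot\|^2/\sigma_G^2$ prefactor of \eqref{eq:Srate} into $1024G$. The only genuine difficulty is accounting: the various lemmas come with different $\delta$ prefactors (the constants $24,48,72,144$ in the conditions are exactly what falls out of halving $\delta$ twice and then passing through Lemma \ref{lem:clustRate}'s own $\delta/6$ in its condition (5)), so the main care is to verify that the conditions stated here imply the corresponding conditions of Lemmas \ref{lem:clustRate} and the Hoeffding step with the right confidence parameters. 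The bias/variance interpretation of the two last terms in \eqref{eq:rate_BHat_g} is immediate from \eqref{eq:T3} and \eqref{eq:T4cond}, respectively.
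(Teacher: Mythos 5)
Your proposal is correct and follows essentially the same route as the paper: decompose $|\widehat B_{gh}-B_{gh}|$ into the oracle error $|\widehat B_{gh}-B_{gh}^{or}|$ (controlled via \eqref{eq:bdGraphonOr} and the misclustering bound of Lemma~\ref{lem:clustRate} run at confidence $\delta/2$) plus the bias term $T_3$ (controlled by \eqref{eq:T3} and $r_k\le R_k$) plus the variance term $T_4$ (Hoeffding at $\tilde\delta=\delta/2$), and combine the two halves of the probability budget. One small imprecision worth fixing: you define $\mathcal E_1$ as only the misclustering event $\bigl\{\sum_g |\mathcal S_{gN}(x)|/n_g(x)\le\tilde C\bigr\}$, but the facts $r_k(x)\vee r_k(x')\le R_k$ and $\min_h\underline{n_h}\ge\min_h\lfloor c N_h b_X k/(16N\overline U_X)\rfloor$ do not follow from that event alone — you need to work on the intersection with the events $V({\bf g},{\bf x})$ and $W({\bf g},{\bf x})$ from Lemma~\ref{lem:integr} (as the paper does), since the displayed bound needs $R_k$ and $\lfloor\cdot\rfloor$ to substitute for the random quantities $r_k$ and $\underline{n_h}$ in \eqref{eq:bdGraphonOr}, \eqref{eq:T3}, \eqref{eq:T4cond}. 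With $\mathcal E_1$ redefined as $\{\|\widetilde{L_\tau^\eta}-\widetilde{\mathcal L_\tau^\eta}\|\le C(\Delta\min_h\lfloor\cdot\rfloor)\}\cap V\cap W$, your union-bound accounting (each half at probability $\ge 1-\delta/2$, intersected to $\ge 1-\delta$) matches the paper's, and the constant bookkeeping ($2\cdot 512G=1024G$, $\sqrt{2\ln(4/\delta)}$, $\tilde\delta=\delta/6$ in Condition~(5)) is all correct.
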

\begin{proof}
	We use once more the definitions introduced in the proof of Lemma \ref{lem:integr}.
	\begin{align*}
		\Pr \Bigg[ &  \left\lbrace \|\widetilde{L_\tau^\eta} - \widetilde{{\mathcal L}_\tau^\eta}(x,x',{\bf g}) \|  \leq C\left(\Delta \min_{h \in [G]} \left \lfloor{\frac{c}{16} \frac{N_h}{N} \frac{b_X}{\overline{U}_X} \ k}\right \rfloor  \right) \right\rbrace  \\
		& \qquad \qquad \left. \cap \left\lbrace T_4 \leq \frac{\sqrt{2\ln(4/\delta)}}{\min_{h \in [G]} \left \lfloor{\frac{c}{16} \frac{N_h}{N} \frac{b_X}{\overline{U}_X} \ k}\right \rfloor}  \right\rbrace \cap V({\bf g},{\bf x}) \cap W({\bf g},{\bf x}) \,  \Big| \, {\bf g } \right]\\
		&\geq \Pr \left[   \left\lbrace \|\widetilde{L_\tau^\eta} - \widetilde{{\mathcal L}_\tau^\eta}(x,x',{\bf g}) \|  \leq C\left(d_{\min} \right) \right\rbrace  \cap \left\lbrace T_4 \leq \frac{\sqrt{2\ln(4/\delta)}}{\min_{h \in [G]} \underline{n_h}}  \right\rbrace \cap V({\bf g},{\bf x}) \cap W({\bf g},{\bf x}) \,  \Big| \, {\bf g } \right]\\
		&\geq \Bigg[ \E\left[ \Pr \left[  \left\lbrace \|\widetilde{L_\tau^\eta} - \widetilde{{\mathcal L}_\tau^\eta}(x,x',{\bf g}) \|  \leq C\left(d_{\min} \right) \right\rbrace  \,  \Big| \, {\bf g }, {\bf x },  V({\bf g},{\bf x}), W({\bf g},{\bf x})\right]  \,  \Big| \, {\bf g },   V({\bf g},{\bf x}), W({\bf g},{\bf x}) \right] \\
		& \qquad \qquad \times \Pr \left[ V({\bf g},{\bf x}) \cap W({\bf g},{\bf x})  \,  | \, {\bf g }\right]  + \E\left[ \Pr \left[ \left\lbrace T_4 \leq \frac{\sqrt{2\ln(4/\delta)}}{\min_{h \in [G]} \underline{n_h}}  \right\rbrace \,  \Big| \, {\bf g }, {\bf x }\right]  \,  \Big| \, {\bf g }   \right]   \ - \ 1 \\
		& \geq 1 - \delta/2 + 1 - \delta/2 -1 = 1 - \delta
	\end{align*}
	where the last inequality holds by \eqref{eq:T4cond} and the proof of Lemma \ref{lem:integr}.
	The result holds by
	\begin{align*}
		&\left\lbrace \|\widetilde{L_\tau^\eta} - \widetilde{{\mathcal L}_\tau^\eta}(x,x',{\bf g}) \|  \leq C\left(\Delta \min_{h \in [G]} \left \lfloor{\frac{c}{16} \frac{N_h}{N} \frac{b_X}{\overline{U}_X} \ k}\right \rfloor  \right) \right\rbrace   \cap V({\bf g},{\bf x}) \cap W({\bf g},{\bf x}) \\
		&\Rightarrow \begin{cases}
			& \sum_{l \in [G]} \frac{|\mathcal S_{lN}(x)|}{n_l(x)} \leq \epsilon_N(\delta/6),\\
			&	|\widehat{B}_{gh}(x,x')  -  B_{gh}^{or}(x,x')|  \leq 2  \left[\frac{k}{\min_{h \in [G]} \left \lfloor{\frac{c}{16} \frac{N_h}{N} \frac{b_X}{\overline{U}_X} \ k}\right \rfloor  }  + \frac{k^2}{\left(\min_{h \in [G]} \left \lfloor{\frac{c}{16} \frac{N_h}{N} \frac{b_X}{\overline{U}_X} \ k}\right \rfloor \right)^2} \right] \epsilon_N(\delta/6),\\
			&T_3 \leq 2 l_B R_k,
		\end{cases}
	\end{align*}
	where the first line holds as in Lemma \ref{lem:clustRate} under condition (\ref{cond:clusteringConstraint}), the second holds by \eqref{eq:bdGraphonOr}, and the third by \eqref{eq:T3}.
\end{proof}

\subsection{Results unconditional on ${\bf g}$}

\subsubsection{Result on the community assignment probabilities}

\

The following results hold up to a permutation but for ease of readability we let $J(x) = J(x')=I$. 
Note that when \eqref{eq:clusteringConstraint} holds,
\begin{align}
	|\widehat{\pi}_g(x) - \pi^{or}_g(x)| & \leq \frac{1}{k}  \sum_i \left| \widehat{\bold{1}}_{g,x}(i) - \bold{1}_{g,x}(i)  \right|  \nonumber \\
	& \leq \frac{1}{k} \sum_{\substack{h \in [G]\\ h \neq g}} \sum_{i \in \mathcal{G}_{h N}(x)}  \widehat{\bold{1}}_{g,x}(i)   +  \frac{1}{k} \sum_{i \in \mathcal{G}_{g N}(x)} \left| \widehat{\bold{1}}_{g,x}(i)  - 1  \right| \label{eq:rate_piHat_piOr_1stStep}\\
	& \label{eq:rate_piHat_piOr} \leq \frac{1}{k} \sum_{h \in [G]} |\mathcal{S}_{hN}(x)| \leq \sum_{h \in [G]} \frac{|\mathcal S_{hN}(x)|}{n_h(x)} 
\end{align}

The behavior of $\pi^{or}_g(x)$ is given by Theorem 1 of \citeasnoun{jiang2019non}. Write 
$$
\bold{1}_{g,x}(i) =\pi_g(x) +\xi_i
$$
with $\E(\xi_i|x(i)=x)=0$. Assumptions 1-3 of \citeasnoun{jiang2019non} hold by Assumptions \ref{assn:boundsupport}, \ref{assn:boundinff} and taking the sub-gaussian parameter to be $ 1$ since $\xi_i \in [-1,1]$ almost surely, see Exercise 2.4 in \citeasnoun{wainwright2019high}. We note that the imposed assumption of independence between $\xi$ and $x$ is not needed for his Theorem 1\footnote{The application of Hoeffding's inequality in the Proof of Theorem 1 (see p4004) can be done conditional on $\bold x$ as long as $(x(i),\xi_i)$ for $i=1...n$ is i.i.d.}.
Assume also that 
\begin{equation}\label{eq:jiangCstrnt}
	2^8 d \ln(4/\tilde{\delta}) \ln N \leq k \leq c V_d b_X T^d N/2,
\end{equation}
then under Assumption \ref{assn:LipshitzPi}, Theorem 1 of \citeasnoun{jiang2019non} implies that the following holds with probability at least $1 - \tilde{\delta}$,
\begin{equation}\label{eq:rate_piOracle}
	|\pi^{or}_g(x) - \pi_g(x)| \leq l_\pi R_k + 2  \sqrt{\frac{d \ln N + \ln(2/\tilde{\delta})}{k}}.
\end{equation}
Let $\pi_g := \Pr(g(i)=g) = \E(\pi_g(x(i)))$ and $\underline{\pi} := \min_{g \in [G]} \pi_g$. We combine Equations \eqref{eq:rate_piHat_piOr} and \eqref{eq:rate_piOracle} to obtain a probability bound unconditional on ${\bf g}$ in the following Lemma, where \eqref{eq:clusteringConstraintDeterm} is replaced with
\begin{align}
	4\sqrt{\frac{3 \ln (24k/\tilde{\delta})}{\Delta \left \lfloor\frac{\underline{\pi} c b_X}{32\overline{U}_X} \ k\right \rfloor + \tau}}  + & \frac{2 k l_B R_k}{ \Delta \left \lfloor\frac{\underline{\pi} c b_X}{32\overline{U}_X} \ k\right \rfloor + \tau} \left( \frac{2 k l_B R_k}{\Delta \left \lfloor\frac{\underline{\pi} c b_X}{32\overline{U}_X} \ k\right \rfloor + \tau} + 3\right) \nonumber\\
	&\qquad  \qquad <\ \frac{16\sqrt{2G}\sigma_G\left(B(x,x')\right)\left \lfloor\frac{\underline{\pi} c b_X}{32\overline{U}_X} \ k\right \rfloor }{\|B(x,x')\|_{\max} k + \tau}. \label{eq:clusteringConstraintIntegr}
\end{align}
\begin{lem}\label{lem:rate_piHat}
	Let Assumptions \ref{assn:boundsupport}, \ref{assn:boundinff}, \ref{assn:boundsupf}, \ref{assn:LipshitzB} and  \ref{assn:LipshitzPi} hold. Assume moreover that $\underline{\pi} > 0$. Then for all $N$, $\delta \in (0,1)$, $1 \leq k \leq N$ such that		
	\begin{enumerate}
		\item  $\Delta \left \lfloor\frac{\underline{\pi} c b_X}{32\overline{U}_X} \ k\right \rfloor + \tau \geq 3 \ln (72k/\delta)$,
		\item  $k \geq \max(12 d \ln(216GN/\delta),24 d \ln(108N/\delta))$,
		\item $k \leq \min(8  T^d  V_d \overline{U}_X N, (1/2)T^d  V_d b_X c N )$,
		\item $\frac{\underline{\pi} c b_X}{32\overline{U}_X} \ k \geq 24 d \ln(216 G N/\delta)  + 1$,
		\item \eqref{eq:clusteringConstraintIntegr} holds for $\tilde{\delta} = \delta/3$,
		\item $2^8 d \ln(24/\delta) \ln N \leq k \leq c V_d b_X T^d N/2,$	
		\item $N \geq 8 \ln(3G/\delta) / \underline{\pi}^2$,
	\end{enumerate}
	with probability at least $1-\delta$, it holds that
	\begin{align}
		|\widehat{\pi}_g(x) - \pi_g(x)| \leq &\frac{512G \left[\|B(x,x')\|_{\max} k + \tau\right]^2}{\sigma_G\left(B(x,x')\right)^2\left \lfloor\frac{\underline{\pi} c b_X}{32\overline{U}_X} \ k\right \rfloor^2 }  \left[
		4\sqrt{\frac{3 \ln (72k/\delta)}{\Delta \left \lfloor\frac{\underline{\pi} c b_X}{32\overline{U}_X} \ k\right \rfloor + \tau}}  + \frac{2 k l_B R_k}{ \Delta \left \lfloor\frac{\underline{\pi} c b_X}{32\overline{U}_X} \ k\right \rfloor + \tau} \right. \nonumber\\
		& \qquad \qquad \qquad \qquad \times \left. \left( \frac{2 k l_B R_k}{ \Delta \left \lfloor\frac{\underline{\pi} c b_X}{32\overline{U}_X} \ k\right \rfloor + \tau} + 3\right) \right]^2  +  l_\pi R_k + 2  \sqrt{\frac{d \ln N + \ln(6/\delta)}{k}}. \label{eq:rate_piHat}
	\end{align}
\end{lem}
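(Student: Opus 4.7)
The plan is to combine three high-probability events via the triangle inequality $|\widehat{\pi}_g(x) - \pi_g(x)| \leq |\widehat{\pi}_g(x) - \pi^{or}_g(x)| + |\pi^{or}_g(x) - \pi_g(x)|$ and control each term separately. The second summand is already handled by the displayed bound \eqref{eq:rate_piOracle}: applying it with $\tilde\delta = \delta/3$ produces $|\pi^{or}_g(x) - \pi_g(x)| \leq l_\pi R_k + 2\sqrt{(d \ln N + \ln(6/\delta))/k}$ with probability at least $1 - \delta/3$, and the regime restriction \eqref{eq:jiangCstrnt} with this $\tilde\delta$ is implied by the hypothesis $2^8 d \ln(24/\delta) \ln N \leq k \leq c V_d b_X T^d N/2$. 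The first summand, when the clustering constraint \eqref{eq:clusteringConstraint} is satisfied, is bounded by $\sum_{h\in[G]} |\mathcal{S}_{hN}(x)|/n_h(x)$ by \eqref{eq:rate_piHat_piOr}, which in turn is controlled through Lemma \ref{lem:clustRate}.

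The main subtlety is that Lemma \ref{lem:clustRate} gives a bound conditional on ${\bf g}$ whose size depends on the random quantity $\min_{h\in[G]} \lfloor \frac{c}{16}\frac{N_h}{N} \frac{b_X}{\overline{U}_X} k \rfloor$, whereas the target bound \eqref{eq:rate_piHat} involves the deterministic expression $\lfloor \frac{\underline\pi c b_X}{32 \overline{U}_X} k \rfloor$. To bridge this, I would introduce the event $E_1 := \{N_h/N \geq \pi_h/2 \text{ for every } h \in [G]\}$. Since $N_h \sim \mathrm{Binomial}(N, \pi_h)$, Hoeffding's inequality combined with a union bound over $h$ yields $\Pr(E_1^c) \leq G \exp(-N \underline\pi^2/2) \leq \delta/3$, where the final inequality uses $N \geq 8 \ln(3G/\delta)/\underline\pi^2$. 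On $E_1$ we have $\frac{c}{16}\frac{N_h}{N} \geq \frac{\underline\pi c}{32}$, which simultaneously (i) converts the assumed clustering constraint \eqref{eq:clusteringConstraintIntegr} at level $\tilde\delta = \delta/3$ into \eqref{eq:clusteringConstraintDeterm} at level $\tilde\delta = \delta/9$, thereby fulfilling the hypothesis of Lemma \ref{lem:clustRate} when that lemma is applied at failure probability $\delta/3$, and (ii) allows every occurrence of $\min_{h \in [G]}\lfloor \frac{c}{16} \frac{N_h}{N} \frac{b_X}{\overline{U}_X} k\rfloor$ in the resulting rate to be replaced by $\lfloor \frac{\underline\pi c b_X}{32\overline{U}_X} k\rfloor$.

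Putting everything together, on the intersection of (a) $E_1$, (b) the event from Lemma \ref{lem:clustRate} conditional on ${\bf g}\in E_1$ with failure probability $\delta/3$, and (c) the oracle event underlying \eqref{eq:rate_piOracle} with $\tilde\delta = \delta/3$, a union bound gives an overall probability of at least $1 - \delta$, and the triangle inequality then delivers \eqref{eq:rate_piHat}. The permutation identification issue is handled by the convention $J(x) = I$ noted in the text. The main obstacle is not conceptual but careful accounting: one must verify systematically that each quantitative hypothesis of the present lemma implies the corresponding hypothesis of Lemma \ref{lem:clustRate} (after substituting $\delta \mapsto \delta/3$ and $N_h/N \geq \underline\pi/2$) and of Theorem 1 of \citeasnoun{jiang2019non}; no probabilistic ingredients beyond those already developed are required.
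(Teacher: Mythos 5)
Your proposal follows the paper's own proof essentially step by step: you introduce the same high-probability event bounding $N_h/N$ from below (the paper calls it $T({\bf g})$), control it via Hoeffding with hypothesis (7), apply Lemma \ref{lem:clustRate} at failure probability $\delta/3$ on that event to convert the $N_h$-dependent quantities in \eqref{eq:Srate} into the deterministic $\lfloor \underline\pi c b_X k/(32\overline{U}_X)\rfloor$, invoke Theorem 1 of \citeasnoun{jiang2019non} at level $\delta/3$ for the oracle term, and finish with the triangle inequality and a union bound. Your bookkeeping of the $\tilde\delta$ substitutions (in particular the $\delta/9$ level for \eqref{eq:clusteringConstraintDeterm} and the matching of \eqref{eq:jiangCstrnt} with hypothesis (6)) is correct, so the argument is the same one the authors use.
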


\begin{proof}
	For ease of readability, we denote with $\epsilon_N(\delta)$ the bound on the right hand side of \eqref{eq:Srate}. We apply a Hoeffding inequality for bounded random variables, see Theorem 2.2.6 in   \citeasnoun{vershynin2018high}, and obtain, 
	\begin{align*}
		\Pr \left(N_g - \pi_g N \geq -\pi_g N / 2 \right)  & = \Pr \left( \sum_{i \in [N] } \bold{1}\{ g(i) = g\} - \pi_g \geq - \pi_g N / 2 \right) \\
		& \leq  \exp \left( -  \pi_g^2 N / 8\right)  \leq  \exp \left( - \underline{\pi}^2 N / 8\right).
	\end{align*}
	Define the event $T({\bf g}) = \{\forall \, g \in [G], \ N \geq N_g \geq \underline{\pi} N/2\}$. Then 
	$$
	\Pr\left(T({\bf g})\right) \geq 1 - G \exp \left( -  \underline{\pi}^2 N / 8\right) \geq 1 - \delta/3,
	$$
	by Condition (7). Note that on $T({\bf g})$, Conditions (1) - (5) of Lemma \ref{lem:clustRate} hold. Thus,
	$$
	\Pr\left( \sum_{h \in [G]} \frac{|\mathcal S_{hN}(x)|}{n_h(x)} \leq \epsilon_N(\delta/3) \bigg| {\bf g}, T(\bf g)\right) \geq 1 - \delta/3.
	$$
	Moreover on $T({\bf g})$, note that the upper bound on $\epsilon_N(\delta/3)$ is bounded above by
	\begin{align*}
		\frac{512G \left[\|B(x,x')\|_{\max} k + \tau\right]^2}{\sigma_G\left(B(x,x')\right)^2\left \lfloor\frac{\underline{\pi} c b_X}{32\overline{U}_X} \ k\right \rfloor^2 } \left[
		4\sqrt{\frac{3 \ln (72k/\delta)}{\Delta \left \lfloor\frac{\underline{\pi} c b_X}{32\overline{U}_X} \ k\right \rfloor + \tau}}  + \frac{2 k l_B R_k}{ \Delta \left \lfloor\frac{\underline{\pi} c b_X}{32\overline{U}_X} \ k\right \rfloor + \tau} \left( \frac{2 k l_B R_k}{ \Delta \left \lfloor\frac{\underline{\pi} c b_X}{32\overline{U}_X} \ k\right \rfloor + \tau} + 3\right) \right]^2.
	\end{align*}
	The result follows by  Theorem 1 of \citeasnoun{jiang2019non}.	
\end{proof}

\

\subsubsection{Result on the connection probabilities}

\

We integrate Lemma \ref{lem:rate_BHat_g} with respect to ${\bf g}$ on $T({\bf g})$ and obtain the following Lemma.
\begin{lem}\label{lem:rate_BHat}
	Let Assumptions \ref{assn:boundsupport}, \ref{assn:boundinff}, \ref{assn:boundsupf} and \ref{assn:LipshitzB} hold. Assume moreover that $\underline{\pi} > 0$. Then for all $N$, $\delta \in (0,1)$, $1 \leq k \leq N$ such that	
	\begin{enumerate}
		\item $\Delta \left \lfloor\frac{\underline{\pi} c b_X}{32\overline{U}_X} \ k\right \rfloor + \tau \geq 3 \ln (96k/\delta)$,
		\item $k \geq \max(12 d \ln(288GN/\delta),24 d \ln(144N/\delta))$,
		\item $k \leq \min(8  T^d  V_d \overline{U}_X N, (1/2)T^d  V_d b_X c N )$,
		\item  $\frac{\underline{\pi} c b_X}{32\overline{U}_X} \ k \geq 24 d \ln(288 G N/\delta)  + 1$,
		\item \eqref{eq:clusteringConstraintIntegr} holds for $\tilde{\delta} = \delta/12$,
		\item $N \geq 8 \ln(2G/\delta) / \underline{\pi}^2$,
	\end{enumerate}
	with probability at least $1-\delta$ conditional on ${\bold g}$, it holds that
	\begin{align}
		&|\widehat{B}_{gh}(x,x') - B_{gh}(x,x')| \nonumber \\
		&  \leq  \frac{1024G \left[\|B(x,x')\|_{\max} k + \tau\right]^2}{\sigma_G\left(B(x,x')\right)^2\min_{h \in [G]}\left \lfloor\frac{\underline{\pi} c b_X}{32\overline{U}_X} \ k\right \rfloor^2 }   \left[\frac{k}{\left \lfloor\frac{\underline{\pi} c b_X}{32\overline{U}_X} \ k\right \rfloor}  + \frac{k^2}{\left \lfloor\frac{\underline{\pi} c b_X}{32\overline{U}_X} \ k\right \rfloor^2} \right] \times \label{eq:rate_BHat}\\
		&\qquad \qquad \left[
		4\sqrt{\frac{3 \ln (48k/\delta)}{\Delta\left \lfloor\frac{\underline{\pi} c b_X}{32\overline{U}_X} \ k\right \rfloor + \tau}}  + \frac{2 k l_B R_k}{ \Delta \left \lfloor\frac{\underline{\pi} c b_X}{32\overline{U}_X} \ k\right \rfloor + \tau} \left( \frac{2 k l_B R_k}{ \Delta \left \lfloor\frac{\underline{\pi} c b_X}{32\overline{U}_X} \ k\right \rfloor + \tau} + 3\right) \right]^2 \\
		& \ \ + 2 l_B R_k +  \frac{\sqrt{2\ln(4/\delta)}}{\left \lfloor\frac{\underline{\pi} c b_X}{32\overline{U}_X} \ k \right \rfloor}. \nonumber
	\end{align}
\end{lem}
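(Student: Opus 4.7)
The plan is to mirror the strategy of Lemma \ref{lem:rate_piHat} and integrate the conditional bound of Lemma \ref{lem:rate_BHat_g} against ${\bf g}$. The key observation is that the right-hand side of \eqref{eq:rate_BHat_g} depends on ${\bf g}$ only through the ratios $N_h/N$, and these concentrate uniformly around the $\pi_h$'s, so after lower-bounding $\min_h N_h/N$ on a good event we recover the quantity $\lfloor \frac{\underline{\pi} c b_X}{32\overline{U}_X}\,k\rfloor$ that appears in \eqref{eq:rate_BHat}.

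First I would define the good event for the community sizes:
$$T({\bf g}) = \{N_h \geq \underline{\pi} N/2 \text{ for all } h \in [G]\}.$$
Writing $N_h = \sum_{i=1}^N \mathbf{1}\{g(i)=h\}$ as a sum of i.i.d.\ Bernoulli$(\pi_h)$ variables and applying the one-sided Hoeffding bound exactly as in the proof of Lemma \ref{lem:rate_piHat} gives
$$\Pr(N_h < \pi_h N/2) \leq \exp(-\pi_h^2 N/8) \leq \exp(-\underline{\pi}^2 N/8).$$
A union bound over $h \in [G]$ combined with condition~6 of the current lemma yields $\Pr(T({\bf g})) \geq 1-\delta/2$.

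Next I would verify that, on $T({\bf g})$, the hypotheses of Lemma \ref{lem:rate_BHat_g} hold with its $\delta$-parameter taken to be $\delta/2$. On $T({\bf g})$, every $N_h/N \geq \underline{\pi}/2$, so
$$\min_{h \in [G]} \left\lfloor \tfrac{c}{16}\tfrac{N_h}{N}\tfrac{b_X}{\overline{U}_X}\,k\right\rfloor \;\geq\; \left\lfloor \tfrac{\underline{\pi} c b_X}{32\overline{U}_X}\,k\right\rfloor,$$
and conditions 1--5 of the current lemma, read with $\delta \mapsto \delta/2$ and $N_h/N \geq \underline{\pi}/2$, imply the corresponding conditions of Lemma \ref{lem:rate_BHat_g}. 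Applying Lemma \ref{lem:rate_BHat_g} conditional on any such ${\bf g}$ gives the bound \eqref{eq:rate_BHat_g} with probability at least $1-\delta/2$ over $A$ and ${\bf x}$.

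Finally I would upgrade \eqref{eq:rate_BHat_g} to \eqref{eq:rate_BHat} by noting that the right-hand side of \eqref{eq:rate_BHat_g} is monotonically decreasing in $\min_h \lfloor \tfrac{c}{16}\tfrac{N_h}{N}\tfrac{b_X}{\overline{U}_X}\,k\rfloor$ (this quantity appears only in denominators), so on $T({\bf g})$ the bound is majorised by \eqref{eq:rate_BHat}. Combining the two failure events by a union bound gives the claim with total failure probability at most $\delta/2 + \delta/2 = \delta$. The main obstacle is purely bookkeeping: tracing the constants so that the $\delta$-budget in conditions 1--5 (e.g.\ $\ln(96k/\delta)$, $\ln(288GN/\delta)$) matches what Lemma \ref{lem:rate_BHat_g} requires once ${\bf g}$ is replaced by its worst realization on $T({\bf g})$; the doubling of denominators (from $\delta$ to $\delta/2$) propagates through each of the five conditions and is what produces the specific numerical constants in the statement.
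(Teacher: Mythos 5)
Your proposal is correct and follows essentially the same route as the paper's (terse) proof: bound $\Pr(T(\mathbf{g}))\geq 1-\delta/2$ via Hoeffding and condition~6, then on $T(\mathbf{g})$ apply Lemma~\ref{lem:rate_BHat_g} with its $\delta$-parameter set to $\delta/2$, using $N_h/N \geq \underline{\pi}/2$ and the monotonicity of the right-hand side in $\min_h \lfloor\cdot\rfloor$ to replace the $\mathbf{g}$-dependent floors by $\lfloor \underline{\pi} c b_X k / (32\overline{U}_X)\rfloor$. The constant-tracing you flag at the end (every $\ln(\cdot/\delta)$ in conditions~1--5 is obtained from the corresponding condition of Lemma~\ref{lem:rate_BHat_g} by the substitution $\delta\mapsto\delta/2$) checks out exactly, so no gap.
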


\begin{proof}
	Under Condition (6), $\Pr(T({\bf g})) \geq 1 - \delta/2$ and under the remaining conditions, by Lemma  \ref{lem:rate_BHat_g},
	$$
	\Pr\left( \text{ \eqref{eq:rate_BHat} holds } | \, {\bf g}, T({\bf g}) \right) \geq 1 - \delta/2.
	$$	
\end{proof}

\begin{rem}\label{rem:ordering}  As in the existing literature on the SBM (or other models with mixture structure in general) the preceding results hold up to relabeling of communities.   This usually causes no issues since community labels  have little practical implications in those models.    While this observation partially applies to our procedure, the issue of community labeling still  poses a novel challenge in terms of the interpretability of our estimators. 
	
	Take a simple, special case with $G = 2$.  The results obtained above enable us to identify and estimate each of  the four elements of the edge probability matrix $B(x,x')$, $(x,x') \in S^2$, as well as the two vectors of community assignment probabilities, up to relabeling of the rows and the columns.  We are clearly free to choose the labels for either the rows or the columns, so let us say we fix the two labels for the rows (those corresponding  to the nodes with covariate value being $x$): this essentially amounts to normalization.    Given this normalization, however, one may wish to identify the community labels for the columns (i.e. the order of the two columns of $B$), at least for two reasons.  First, even when one is interested in the edge probabilities and  the community assignment probabilities at just one point $(x,x')$ in $S^2$,  interpreting and using these  probabilities might demand identification of the column (row) order, relative to a given  choice of the row (column) order.  For example, a diagonal element  of the edge probability matrix represents connections within an unobservable community (though they generally differ in terms of the observed heterogeneity, as far as $x \neq x'$); Such within-community connections are often associated with homophily/heterophily  or (dis)assortativity.  Of course, such issues potentially affect off-diagonal elements of the edge probability matrix with general $G \geq 2$.  Second, if, for example, one is interested in the partial effect of moving $x'$ to $x''$ in $B_{gh}(x,\cdot)$ or $\pi_h(\cdot)$,  then it is obvious that we need to have the correspondence between the community labels  remain consistent between  $(x,x')$ and $(x,x'')$.       
	
	This issue does not arise in the standard SBM without covariates as in \citeasnoun{lei2015consistency}, as their edge probability matrix are defined for the same population.  This holds true even in the analysis of asymmetric networks in \citeasnoun{rohe2016co}.      In our analysis, if $x \neq x'$, then the edge probability matrix $B(x,x')$ is concerned with edges between two separate populations, and this feature gives rise to difficulties in guaranteeing  proper matching between the row-clusters and the column clusters  in the absence of further information/restrictions.    Of course, by setting $x = x'$ in $B(x,x')$,  matching the community labels between the two sides is trivial, and then to the extent that continuity of $B_{g,h}(\cdot,\cdot), (g,h) \in [G]\times[G]$ and the connectedness of the support of covariate $X$ permit, the labels can be matched as we move $x'$ away from $x$.  Such approach may fail to be reasonable or practical in actual applications, however.      
	
	If we are willing to impose additional restrictions, it is possible to address this issue directly.  For example, once again, we are free to choose community labeling  such that   $\pi_1(x) > \pi_2(x) > \cdots > \pi_G(x)$, assuming no ties.   Suppose we also choose community labels for the columns, so that  $\pi_1(x') > \pi_2(x') > \cdots > \pi_G(x')$.  
	Under the assumption that the ranking of the magnitudes of  community assignment probabilities is invariant between those at $x$ and those at $x'$,  then trivially the community labels on both sides can be matched in a consistent manner.      
	
	One may also wish to introduce qualitative restrictions that are motivated by concepts developed in the literature of network analysis,  in order to achieve successful community label matching.   Examples of such restrictions include homophily/heterophily, or, assortativity/disasortativity, {\emph {conditional on the covariates}}.    Suppose the edge probability matrix $B(x,x')$ is conditionally weakly assortative at $(x,x') \in S^2$, in the sense that\footnote{Technically, even a weaker condition such as 
		\begin{equation*}
			B_{gg}(x,x') >  B_{g,h}(x,x') {\text{ for every }} (g,h) \in [G] \times [G]  \text{ with } g \neq h.
		\end{equation*} 
		which ensures that each diagonal element dominates the rest of its row elements --- or its column elements, by symmetry ---  suffices.  The strong assortative version of  \eqref{eq:homophily}  can be obtained by strengthening  the inequality by  
		\begin{equation*}
			B_{gg}(x,x') >  B_{h,f}(x,x')      {\text{ for every }} (f,g,h) \in [G] \times [G] \times [G] \text{ with } h \neq f.
		\end{equation*}         
	} 
	\begin{equation}\label{eq:homophily}
		B_{gg}(x,x') >  \max ( B_{g,h}(x,x'),B_{h,g}(x,x'))      {\text{ for every }} (g,h) \in [G] \times [G]  \text{ with } h \neq f.
	\end{equation}            
	This can be justified under homophily in terms of unobserved community membership, while the impact of covariates $(x,x')$ on edge probabilities remain fully unspecified (and can be correlated with unobserved heterogeneity in an arbitrary manner).  The restriction \eqref{eq:homophily} suffices to achieve identification of matched community labels.      
	Let $\mathcal E_G$ denote the set of $G \times G$ permutation matrices.  Without loss of generality, assign $G$ labels on the rows of the edge probability matrix; let $B(x,x')$ be the resulting matrix to be (uniquely) recovered.  Without a restriction such as \eqref{eq:homophily}, we can only identify the set $\{B(x,x')E_G, E_G \in \mathcal{E}_G\}$.  Let $B^\circ(x,x')$ an arbitrary element of the set.   To exploit the asortativity restriction \eqref{eq:homophily}, we can simply solve       
	$$
\overline	E_G = \argmax_{ E_G \in \mathcal{E}_G}  \tr[B^\circ(x,x')E_G], 
	$$
	and then we recover $B(x,x')$  by $B^\circ(x,x')\overline E_G$.   This offers a practical algorithm, as the maximization problem  $\max_{ E_G \in \mathcal{E}_G}  \tr[A E_G]$ is well defined for any $G$ by $G$ matrix $A$ as far as the maximum entry of each row has no ties.   Likewise, if one is willing to impose a heterophily, we can flip the inequality sign in  \eqref{eq:homophily}  to restrict $B(x,x')$ to be disasortative, then     solve
	$$
	\underline{E}_G = \argmin_{ E_G \in \mathcal{E}_G}  \tr[B^\circ(x,x')E_G], 
	$$ 
to	recover the desired edge probability matrix.

\end{rem}

\section{Conclusion}  This paper demonstrates that it is possible to incorporate both observed and unobserved heterogeneity in network data  analysis in a flexible way, at least when we have discrete values of heterogeneity represented  by community assignments.  It offers a highly versatile, yet computationally tractable procedure, which is expected to complement the existing methodology for analyzing networks with covariates under  more specific structures for the edge probabilities and the community assignment probabilities.    Our results build upon  recent developments in spectral clustering in SBMs and $k$-nn algorithms, and we contribute to the literature by addressing novel theoretical challenges presented by our multi-step procedure.  Though  our estimators can be computed in a straightforward manner, an extensive simulation exercise is called for in order to assess the efficacy of the new procedure in a practical setting.

\
\

\renewcommand{\thesection}{A \arabic{section}}

\setcounter{section}{0}
\section*{Supplement: Some  Useful results}
\setcounter{section}{1}

\subsection{Nearest Neighbor Radius}

\

To obtain uniform upper and lower bounds on the radiuses, we use inequalities for relative deviations see \citeasnoun{anthony1993result} and Section 1.4.2 of \citeasnoun{lugosi2002pattern}. 
Before stating the two inequalities we will use, we introduce some definitions. 
Let $x$ denote a ${\bf R}^d$ valued covariate vector. 
Let  $P_{{\bf x} N}$ denote the empirical measure based on ${\bf x} = (x_1,...,x_N)$, that is, $P_{{\bf x} N}(A) := \#\{x(i) \in A, i \in [N] \}/N$  for $A \in \mathcal C$.  Likewise define $P_{{\bf y} N}(A) := \#\{y(i) \in A, i \in [N] \}/N$ and $P_{{\bf xy} N}(A) := \#\{x(i) \in A, y(i) \in A, i \in [N] \}/2N$.  We consider  ${\bf x} = (x_1,...,x_N)$ and ${\bf y} = (y_1,...,y_N)$ defined on the sample space $S^N$, independent of each other and both distributed according to  $P(.)^n$ for a probability measure $P$.   Let $\mathcal C$ be a collection of subsets of $S$.   For a sample ${\bf x} = (x_1,...,x_N)$ viewed as a collection of draws $\{x_1,...,x_N\}$, we define as in \citeasnoun{gine2021mathematical} (Section 3.6.1) the trace of $\mathcal{C}$ on $\bf x$ as all the subsamples of $\bf x$ obtained by intersection of  $\bf x$  with sets $A \in \mathcal{C}$. Define $\Delta_{\mathcal{C}}( {\bf x})$ as the cardinal of the trace of the collection $\mathcal{C}$ and
$$
m_{\mathcal{C}}(N) = \sup_{{\bf x} \in S^N } \Delta_{\mathcal{C}}( {\bf x}).
$$
$m_{\mathcal{C}}(N)$ is the shattering coefficient of the collection $\mathcal{C}$.
Following \citeasnoun{anthony1993result}, we define a complete set of distinct representatives (CSDR) of $\mathcal{C}$ for $\bf x$ as a collection $\mathcal{A} = \{A^1,...,A^{\Delta_{\mathcal{C}}( {\bf x})}\}$ if for any $1 \leq i \neq j\leq \Delta_{\mathcal{C}}( {\bf x})$ then $A^i \cap \{x_1,...,x_N\} \neq A^j \cap \{x_1,...,x_N\}$. For all $A \in \mathcal{C}$, there exists $1 \leq i \leq \Delta_{\mathcal{C}}( {\bf x})$ such that $A \cap \{x_1,...,x_N\} = A^i \cap \{x_1,...,x_N\}$.

We are in particular interested in the following inequalities, see e.g. Theorem 1.11 in \citeasnoun{lugosi2002pattern},
\begin{align}
	\forall \eta >0,\	& \Pr \left( \sup_{a \in \mathcal{C}} \frac{P(A) - P_{{\bf x} N}(A)}{\sqrt{P(A)}} > \eta \right) \leq 4 m_{\mathcal{C}}(2N)  \exp \left(-\eta^2 N / 4 \right) \label{eq:relative1}\\
	\forall \eta >0,\	& \Pr \left( \sup_{a \in \mathcal{C}} \frac{ P_{{\bf x} N}(A) - P(A)}{\sqrt{ P_{{\bf x} N}(A)}} > \eta \right) \leq 4 m_{\mathcal{C}}(2N)  \exp \left(-\eta^2 N / 4 \right) \label{eq:relative2}
\end{align}
where $\mathcal{C}$ is any collection of Borelian sets. We first give a proof of \eqref{eq:relative2}. As we could not find one in the literature, we restate one which we will later modify to accommodate non-identically (but independently) distributed random variables. 

\begin{lem}\label{le:proofrelative2}
	For $\mathcal{C}$ any collection of Borelian sets, \eqref{eq:relative2} holds.
\end{lem}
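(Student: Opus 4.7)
The plan is to follow the classical Vapnik--Chervonenkis double-sample (symmetrization) argument, adapted to the relative-deviation form in which the empirical measure appears under the square root in the denominator. Throughout, the target inequality is trivial when its right-hand side exceeds $1$, so I may assume $N\eta^2$ exceeds an absolute constant from the start.

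\textbf{Step 1 (Symmetrization by a ghost sample).} I would introduce an independent ghost sample ${\bf y} = (y_1,\ldots,y_N)$ drawn iid from $P$ and show
\begin{equation*}
\Pr\left(\sup_{A \in \mathcal{C}} \frac{P_{{\bf x}N}(A) - P(A)}{\sqrt{P_{{\bf x}N}(A)}} > \eta\right) \;\leq\; 2\,\Pr\left(\sup_{A \in \mathcal{C}} \frac{P_{{\bf x}N}(A) - P_{{\bf y}N}(A)}{\sqrt{P_{{\bf x}N}(A)}} > \eta/2\right).
\end{equation*}
The argument is: on the left-hand event, if $A^{\star}$ is a witnessing set then $P_{{\bf x}N}(A^\star) > \eta^2$, so Chebyshev applied to $P_{{\bf y}N}(A^\star)$ conditionally on $\bf x$ gives, since $\mathrm{Var}(P_{{\bf y}N}(A^\star)) \leq P(A^\star)/N \leq P_{{\bf x}N}(A^\star)/N$,
\begin{equation*}
\Pr\!\left(P_{{\bf y}N}(A^\star) - P(A^\star) \geq (\eta/2)\sqrt{P_{{\bf x}N}(A^\star)} \,\big|\, {\bf x}\right) \;\leq\; \frac{4}{N\eta^2} \;\leq\; \tfrac{1}{2}.
\end{equation*}
On the complementary event, $A^\star$ is still a witness for the right-hand event, yielding the factor $2$.

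\textbf{Step 2 (Reduction to a CSDR and random swaps).} The supremum on the right-hand side above depends on $A$ only through $A \cap \{x_1,\ldots,x_N,y_1,\ldots,y_N\}$. I would therefore restrict the sup to a CSDR $\mathcal{A}$ of $\mathcal{C}$ for $({\bf x},{\bf y})$, with $|\mathcal{A}| = \Delta_{\mathcal{C}}({\bf x},{\bf y}) \leq m_{\mathcal{C}}(2N)$. Since $(x_j,y_j)$ are iid pairs, their joint distribution is invariant under independent swaps inside each pair: introducing iid Rademacher signs $\varepsilon_1,\ldots,\varepsilon_N$, the probability equals the expected conditional probability over $\varepsilon$ given the unordered multiset $\{\{x_j,y_j\}\}_{j=1}^N$. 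A union bound over $\mathcal{A}$ gives
\begin{equation*}
\Pr(\cdots) \;\leq\; m_{\mathcal{C}}(2N)\, \max_{A^i \in \mathcal{A}} \sup_{({\bf x},{\bf y})}\, \Pr_{\varepsilon}\!\left(\frac{N^{-1}\sum_{j} \varepsilon_j \bigl(\mathbf{1}\{x_j \in A^i\} - \mathbf{1}\{y_j \in A^i\}\bigr)}{\sqrt{P^{\varepsilon}_{{\bf x}N}(A^i)}} > \eta/2\right),
\end{equation*}
where $P^{\varepsilon}_{{\bf x}N}$ denotes the swapped $x$-empirical measure.

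\textbf{Step 3 (Hoeffding bound for the signed sum).} For fixed $A^i$ and fixed combined sample, let $k_i$ be the total number of combined points falling in $A^i$. The numerator above is $(2N)^{-1}$ times a sum of at most $k_i$ iid Rademacher variables (indices where $x_j$ and $y_j$ land differently), and the denominator satisfies $\sqrt{P^\varepsilon_{{\bf x}N}(A^i)} \geq$ a deterministic lower bound once we restrict to the event that the deviation witness requires. I would split into cases: when $k_i \leq N\eta^2/4$ the event is empty; when $k_i$ is larger, Hoeffding's inequality applied to the signed sum, combined with the corresponding lower bound on the denominator, gives an exponential bound of the form $2\exp(-\eta^2 N/4)$. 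Multiplying by the shattering coefficient and the symmetrization factor $2$ yields $4\,m_{\mathcal{C}}(2N)\exp(-\eta^2 N/4)$, as claimed.

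\textbf{Main obstacle.} The square-root normalization couples the fluctuation to the empirical measure itself, which complicates both the symmetrization step (one must verify the Chebyshev estimate on the correct event where $P_{{\bf x}N}(A)$ is not too small) and the final Hoeffding step (the denominator is random in the swap). The case split on $k_i$ in Step 3, and the observation in Step 1 that the left-hand event forces $P_{{\bf x}N}(A^\star) > \eta^2$, are the two ingredients that tame this coupling.
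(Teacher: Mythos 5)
Your high-level plan — symmetrize with a ghost sample, reduce to a CSDR, and apply a swapping/Hoeffding argument — is the right family of techniques, and it is indeed the shape of the proof in the paper (which follows \citeasnoun{anthony1993result}). However, two concrete choices you make diverge from the paper's argument in ways that would make your proof fail to produce the stated constant, and one step as written is not valid.

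First, the constant in the exponent. In Step 1 you symmetrize via Chebyshev, which forces you to split the threshold $\eta$ into $\eta$ and $\eta/2$: you obtain $\Pr\{Q\}\le 2\Pr\{\sup_A (P_{{\bf x}N}(A)-P_{{\bf y}N}(A))/\sqrt{\cdots}>\eta/2\}$. Whatever swapping/Hoeffding bound you then prove for the double-sample event inherits an $(\eta/2)^2$ in the exponent, so the best you can hope for is of order $\exp(-\eta^2 N/16)$ — strictly weaker than the stated $\exp(-\eta^2 N/4)$. The paper avoids this loss by \emph{not} halving $\eta$: instead of Chebyshev it uses the median-type binomial bound of \citeasnoun{greenberg2014tight} to show $\Pr\{P_{{\bf y}N}(A_{\bf x}) < P(A_{\bf x})\} > 1/4$, and then a monotonicity argument (the function $y\mapsto (c-y)/\sqrt{(a+y)/2}$ is decreasing) to conclude that on that event the full-strength inequality $P_{{\bf x}N}(A_{\bf x}) - P_{{\bf y}N}(A_{\bf x}) > \eta\sqrt{P_{{\bf xy}N}(A_{\bf x})}$ holds. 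This yields $\Pr\{Q\}\le 4\Pr\{R\}$ with $\eta$ intact, which is where both the leading factor $4$ and the exponent $\eta^2 N/4$ come from.

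Second, the normalization inside the double-sample event. The paper's set $R$ is normalized by $\sqrt{P_{{\bf xy}N}(A)}$, the \emph{combined}-sample empirical measure, which is invariant under the swaps $(x_j,y_j)\mapsto(y_j,x_j)$. Your Steps 2–3 instead keep $\sqrt{P^\varepsilon_{{\bf x}N}(A^i)}$ in the denominator, which depends on $\varepsilon$. This breaks the clean Hoeffding step: conditionally on the combined sample, the numerator $N^{-1}\sum_j\varepsilon_j(\mathbf{1}\{x_j\in A^i\}-\mathbf{1}\{y_j\in A^i\})$ is a bounded Rademacher sum, but you are dividing it by a \emph{correlated} random quantity, and your claim of a "deterministic lower bound" on the denominator is not justified as stated. (The paper's choice of $P_{{\bf xy}N}$ makes the denominator a constant once you condition on the combined sample, and then $\ell\le k$ gives the exponent $\eta^2 N/4$ directly.) Your case split on $k_i$ does not repair this, because $k_i$ bounds the number of nonzero Rademacher terms but does not control the swapped $x$-count from below. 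To fix the argument you would need to either switch to the combined-sample normalization before swapping (as the paper does), or carry out an explicit union over the possible values of $P^\varepsilon_{{\bf x}N}(A^i)$, neither of which your write-up does.
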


\begin{proof}
	The proof adapts the steps of \citeasnoun{anthony1993result}. We define the sets
	$$
	Q := \left \{(x_1,...,x_N) \in S^{N}: \exists A \in \mathcal{C} \text{ such that }  \frac{ P_{{\bf x} N}(A) - P(A)}{\sqrt{ P_{{\bf x} N}(A)}} > \eta \right    \} 
	$$
	$$
	R := \left \{(x_1,...,x_N,y_1,...,y_N) \in S^{2N}: \exists A \in \mathcal{C} \text{ such that } {P_{{\bf x} N}(A) - P_{{\bf y} N}(A)} > \eta{\sqrt{P_{{\bf xy} N}(A)}} \right    \}.   
	$$
	We first look at the case $N>2/\eta$.  First note that for each  ${\bf x} \in Q$ there exists a set $A_{\bf x} \in \mathcal{C}$, indexed by $\bf x$,  such that $ P_{{\bf x} N}(A_{\bf x}) - P(A_{\bf x}) > \eta \sqrt{P_{{\bf x} N}(A_{\bf x})}.$  Define 
	$$
	F_{\bf xy}(A_{\bf x}) :=  \frac{P_{{\bf x} N}(A_{\bf x}) - P_{{\bf y} N}(A_{\bf x})}{\sqrt{P_{{\bf xy} N}(A_{\bf x})}}.  
	$$
	Take $P_{{\bf y} N}(A_{\bf x})$ such that $P_{{\bf y} N}(A_{\bf x}) < P(A_{\bf x})$. Then 
	\begin{align*}
		F_{\bf xy}(A_{\bf x}) &> \frac{P(A_{\bf x}) + \eta \sqrt{P_{{\bf x} N}(A_{\bf x})} - P_{{\bf y} N}(A_{\bf x})}{\sqrt{[P_{{\bf x} N}(A_{\bf x})) + P_{{\bf y} N}(A_{\bf x})]/2}} \\
		&> \frac{\eta \sqrt{P_{{\bf x} N}(A)} }{\sqrt{[P_{{\bf x} N}(A_{\bf x}) + P(A_{\bf x})]/2}} > \eta,
	\end{align*}
	where the second inequality holds by monotonicity in $P_{{\bf x} N}(A_{\bf x})$ and the third  by $P_{{\bf x} N}(A_{\bf x}) > P(A_{\bf x})$.
	\begin{eqnarray*}
		\Pr\{({\bf x},{\bf y}) \in R |{\bf g}\}   &\geq&    \Pr\{ F_{\bf xy}(A_{\bf x}) > \eta\ |  P_{{\bf y} N}(A_{\bf x}) < P(A_{\bf x}), {\bf x} \in Q  \} \Pr\{P_{{\bf y} N}(A_{\bf x})  <   P(A_{\bf x}), {\bf x} \in Q\}
		\\
		&\geq&  \left(   \inf_{{\bf x} \in Q}     \Pr\{P_{{\bf y} N}(A_{\bf x}) < P(A_{\bf x})  \}   \right)   \Pr\{  {\bf x} \in Q \} = \left(   \inf_{{\bf x} \in Q}     \Pr\{P_{{\bf y} N}(A_{\bf x}^c) > P(A_{\bf x}^c)  \}   \right)   \Pr\{  {\bf x} \in Q \}
		\\
		&\geq&   \frac{1}{4}\ \Pr\{  {\bf x} \in Q \}.
	\end{eqnarray*}
	where $A_{\bf x}^c $ is the complement set of $A_{\bf x}$. The last inequality uses  Theorem 1 of   \citeasnoun{greenberg2014tight} thus we need to check that for all ${\bf x} \in Q $, $P(A_{\bf x}^c) > 1/N$, that is, $P(A_{\bf x}) < 1-1/N$.
	Note that $P(A_{\bf x}) < P_{{\bf x} N}(A_{\bf x}) - \eta \sqrt{P_{{\bf x} N}(A_{\bf x})} =: g(P_{{\bf x} N}(A_{\bf x}))$. The function $g$ decreases on $[0,\eta^2/4]$ and is negative on this interval, and increases on $[\eta^2/4, +\infty)$. Thus  $P(A_{\bf x}) < g(1) = 1-\eta$, which imposes $\eta <1$, and  $P(A_{\bf x}) < 1-1/N$ follows by $N>2/\eta$. The inequality \eqref{eq:relative2} is obtained by 
	$$\Pr\{({\bf x},{\bf y}) \in R |{\bf g}\}   \leq m_{\mathcal{C}}(2N)  \exp \left(-\eta^2 N / 4 \right),$$
	which holds by the proof of Theorem  2.1 in \citeasnoun{anthony1993result}.
		
	If $N<2/\eta$, the upper bound in \eqref{eq:relative2} is $4 m_{\mathcal{C}}(2N)  \exp \left(-\eta^2 N / 4 \right) \geq 4 m_{\mathcal{C}}(2N)  \exp \left(-\eta /2 \right)$. For ${\bold x} \in Q$, $P(A_{\bf x}) <  g(P_{{\bf x} N}(A_{\bf x}))$ guarantees that  $\eta <1$ which implies that $4 m_{\mathcal{C}}(2N)  \exp \left(-\eta^2 N / 4 \right) \geq 4  \exp \left(-1 /2 \right) \geq 2 $. Thus  \eqref{eq:relative2} holds naturally.  Note that if $Q = \emptyset$, \eqref{eq:relative2} holds naturally as well.
\end{proof}

Before elaborating on our results, we state two preliminary lemmas.

\begin{lem}\label{lem:1/4} Suppose $\{z_i\}_{i=1}^N$ are independently distributed, with $z_i \sim$Bernoulli$(q_i)$, $0 < q_i < 1$ for every $i \in [N]$.  Let $S := \sum_{i=1}^N z_i$, then $\Pr\{S > N \min_{i \in [N]}q_i\} > \frac{1}{4}$ if $ \min_{i \in [N]}q_i > \frac{1}{N}$.  
\end{lem}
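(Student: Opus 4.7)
The plan is a two-step argument: first, reduce to the i.i.d.\ case via a standard coupling; second, invoke the Greenberg-Mohri tight lower bound on the probability that a binomial exceeds its expectation. Set $q_* := \min_{i \in [N]} q_i$, and on an enlarged probability space introduce i.i.d.\ Uniform$[0,1]$ random variables $U_1,\dots,U_N$. Define $z_i := \mathbf{1}\{U_i \leq q_i\}$, which preserves the prescribed Bernoulli$(q_i)$ marginals and mutual independence, and $z_i^{*} := \mathbf{1}\{U_i \leq q_*\}$, an i.i.d.\ Bernoulli$(q_*)$ sequence. Because $q_i \geq q_*$, we have $z_i \geq z_i^{*}$ almost surely; hence $S \geq S^{*}$ almost surely, where $S^{*} := \sum_{i=1}^{N} z_i^{*} \sim \mathrm{Bin}(N, q_*)$. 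In particular,
\[
\Pr\{S > N q_*\} \;\geq\; \Pr\{S^{*} > N q_*\}.
\]

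Since $q_* > 1/N$ is equivalent to $N q_* > 1$, that is $\E[S^{*}] > 1$, Theorem~1 of \citeasnoun{greenberg2014tight} applied to the binomial $S^{*}$ yields $\Pr\{S^{*} > \E[S^{*}]\} = \Pr\{S^{*} > N q_*\} > 1/4$. Combined with the display above, this delivers the desired bound $\Pr\{S > N q_*\} > 1/4$.

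The main point requiring care in writing the formal proof is the boundary case in which $N q_*$ happens to be an integer: here $\{S^{*} > N q_*\} = \{S^{*} \geq N q_* + 1\}$, which is one step beyond the $\{S^{*} \geq \lceil N q_*\rceil\}$ formulation of Greenberg-Mohri that is more commonly quoted. The strict-exceedance version of their inequality is precisely what is required, and the strict hypothesis $q_* > 1/N$ (as opposed to $q_* \geq 1/N$) excludes the only tight example, $N = 2$ and $q_* = 1/2$, where a strict $> 1/4$ conclusion could fail. Apart from this observation, the argument is a short coupling-plus-concentration reduction.
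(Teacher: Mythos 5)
Your proposal is correct and takes essentially the same route as the paper: both reduce to the i.i.d.\ Bernoulli$(q_*)$ sum and then invoke Theorem 1 of Greenberg and Mohri. The only difference is cosmetic: where the paper cites Theorem 1.A.3(b) of Shaked and Shanthikumar to conclude $S$ first-order stochastically dominates the minimum-parameter binomial sum, you prove the same dominance directly via the standard uniform coupling $z_i = \mathbf{1}\{U_i \le q_i\}$, $z_i^* = \mathbf{1}\{U_i \le q_*\}$; the two arguments are equivalent.
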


\begin{proof}
	It is easy to see that, if $\underline{z}_i \sim_{iid} $Bernoulli$(\min_{i \in [N]}q_i)$, and we let $\underline{S} := \sum_{i=1}^{N} \underline{z}_i$, then $\Pr\{S > N \min_{i \in [N]}q_i\} \geq \Pr\{\underline{S} > N \min_{i \in [N]}q_i\}$.  Indeed, note that by definition $z_i$ first-order stochastically dominates $\underline{z}_i$, or $z_i \geq_1 \underline{z}_i $ for each $i$.  Then by Theorem 1.A.3(b) in \citeasnoun{shaked2007stochastic}, we have $S \geq_1 \underline{S}$ and the statement indeed follows.   Since $\underline{S} \sim$Binom$(N,\min_{i \in [N]}q_i)$, by Theorem 1 of   \citeasnoun{greenberg2014tight}  $\Pr\{\underline{S} > N \min_{i \in [N]}q_i\} > \frac{1}{4}$ if $\min_{i \in [N]}q_i > \frac{1}{N}$ and the result follows.
\end{proof} 

\begin{cor}\label{cor:1/4}
	Suppose $\{z_i\}_{i=1}^N$ are independently distributed, with $z_i \sim$Bernoulli$(q_i)$, $0 < q_i < 1$ for every $i \in [N]$.  Let $S := \sum_{i=1}^N z_i$, then $\Pr\{S < N \max_{i \in [N]}q_i\} > \frac{1}{4}$ if $ \max_{i \in [N]}q_i < 1- \frac{1}{N}$.  
\end{cor}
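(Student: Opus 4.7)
The plan is to reduce Corollary~\ref{cor:1/4} to Lemma~\ref{lem:1/4} by passing to the complementary Bernoulli variables. Concretely, I would set $\tilde z_i := 1 - z_i$ for each $i \in [N]$, so that $\tilde z_i \sim \mathrm{Bernoulli}(\tilde q_i)$ with $\tilde q_i := 1 - q_i$, and the $\{\tilde z_i\}$ remain independent. Let $\tilde S := \sum_{i=1}^N \tilde z_i = N - S$.

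Next I would translate the hypothesis. The condition $\max_{i \in [N]} q_i < 1 - 1/N$ is exactly $\min_{i \in [N]} \tilde q_i > 1/N$, so Lemma~\ref{lem:1/4} applies to $\{\tilde z_i\}$ and yields
\begin{equation*}
\Pr\Bigl\{\tilde S > N \min_{i \in [N]} \tilde q_i\Bigr\} > \tfrac{1}{4}.
\end{equation*}
Substituting $\tilde S = N - S$ and $\min_i \tilde q_i = 1 - \max_i q_i$, the event inside the probability becomes $N - S > N - N \max_{i} q_i$, i.e.\ $S < N \max_{i \in [N]} q_i$, which gives the desired conclusion.

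The argument is essentially mechanical once one notices the symmetry $z_i \leftrightarrow 1 - z_i$, so I do not anticipate a real obstacle. The only thing worth double-checking is that the strict inequalities line up: the hypothesis $0 < q_i < 1$ is preserved under $q_i \mapsto 1 - q_i$, and the strict bound $\max_i q_i < 1 - 1/N$ gives the strict bound $\min_i \tilde q_i > 1/N$ needed to invoke the lemma.
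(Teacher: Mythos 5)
Your proposal is correct and is exactly the paper's approach: the paper's proof of Corollary~\ref{cor:1/4} is the one-line remark ``Apply Lemma~\ref{lem:1/4} to $\{\tilde{z}_i\}_{i=1}^N = \{1 - z_i\}_{i=1}^N$,'' and you have simply spelled out the substitutions ($\tilde q_i = 1 - q_i$, $\tilde S = N - S$, and the translation of the hypothesis and conclusion) that this entails.
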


\begin{proof}
	Apply Lemma \ref{lem:1/4} to $\{\tilde{z}_i\}_{i=1}^N = \{1 - z_i\}_{i=1}^N$.
\end{proof}

We now extend \eqref{eq:relative1} and \eqref{eq:relative2} to accommodate non-identically (but independently) distributed random variables (i.ni.d), induced by conditioning. We first introduce additional notation as we now consider  ${\bf x} = (x_1,...,x_N)$ and ${\bf y} = (y_1,...,y_N)$ defined on the sample space $S^N$, independent of each other and both distributed according to  $\prod_{i=1}^N f(.|g_i)$ conditional on $g(i) = g_i, i \in [N]$. We use $\Pr\{\cdot|{\bf g}\}$ to denote the probability of event $\cdot$ conditional on $(g(1),...,g(N)) = \bf g $.
Define  
\begin{align}
	\underline{P}(A) &: = \int \underline{f}(x) dx, \\
	\overline{P}(A)& : = \int_A \overline{f}(x) dx. 
\end{align}
where $ \underline{f}$ and $\overline{f}$ are as defined in Section \ref{sec:assn}.
Define $\mathcal{B} = \{\mathcal{B}(x,\tau) \, | \, x\in \R^d, \, \tau >0\}$ and
\begin{align*}
	\underline{Q}& := \left \{(x_1,...,x_N) \in S^{N}: \exists A \in \mathcal C \text{ such that } \frac{\underline{P}(A) - P_{{\bf x} N}(A)}{\sqrt{\underline{P}(A)}} > \eta \right    \}, \\
	\overline{Q}& := \left \{(x_1,...,x_N) \in S^{N}: \exists A \in \mathcal C \text{ such that } \frac{ P_{{\bf x} N}(A) - \overline{P}(A)}{\sqrt{ P_{{\bf x} N}(A)}}> \eta \right    \}, \\
	R &:= \left \{(x_1,...,x_N,y_1,...,y_N) \in S^{2N}: \exists A \in \mathcal C \text{ such that } {P_{{\bf x} N}(A) - P_{{\bf y} N}(A)} > \eta{\sqrt{P_{{\bf xy} N}(A)}} \right    \}.  
\end{align*}

Lemma \ref{lem:extAST} and Lemma \ref{lem:extrelative2} extend \eqref{eq:relative1} and \eqref{eq:relative2} to i.ni.d data.

\begin{lem}\label{lem:extAST} For any collection $\mathcal{C}$ of Borel sets,
	 $$\Pr\{{\bf x} \in \underline{Q} |{\bf g}\} \leq 4 m_{\mathcal{C}}(2N)  \exp \left(-\eta^2 N / 4 \right).$$
\end{lem}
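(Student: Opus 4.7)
The plan is to adapt the proof of Lemma~\ref{le:proofrelative2}, working throughout under the conditional law $\Pr\{\cdot\mid{\bf g}\}$. The crucial observation is that conditional on ${\bf g}$ each $x_i$ has marginal distribution $f(\cdot\mid g_i)$, so if I introduce a ghost sample ${\bf y}=(y_1,\dots,y_N)$ drawn independently with the same conditional law, then within each coordinate $i$ the pair $(x_i,y_i)$ is i.i.d.\ from $f(\cdot\mid g_i)$. This is exactly the symmetry the Anthony--Shawe-Taylor swap argument requires, so the i.ni.d.\ aspect does not break the symmetrization as long as it is performed under $\Pr\{\cdot\mid{\bf g}\}$. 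The only new wrinkle is that the envelope $\underline{P}$ generally differs from the coordinate-wise marginals $P_i(A):=\int_A f(x\mid g_i)\,dx$; but since $\underline{f}\leq f(\cdot\mid g_i)$ pointwise, one always has $P_i(A)\geq\underline{P}(A)$, which is all that is needed.

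First I would define the under-estimation counterpart of $R$,
\[
\tilde R:=\Bigl\{({\bf x},{\bf y})\in S^{2N}\ :\ \exists A\in\mathcal{C},\ P_{{\bf y}N}(A)-P_{{\bf x}N}(A)>\eta\sqrt{P_{{\bf xy}N}(A)}\Bigr\},
\]
and show that $\Pr\{{\bf x}\in\underline{Q}\mid{\bf g}\}\leq 4\Pr\{\tilde R\mid{\bf g}\}$. For ${\bf x}\in\underline{Q}$, pick $A_{\bf x}\in\mathcal{C}$ with $\underline{P}(A_{\bf x})-P_{{\bf x}N}(A_{\bf x})>\eta\sqrt{\underline{P}(A_{\bf x})}$. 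Applying Lemma~\ref{lem:1/4} with $z_i=\mathbf{1}\{y_i\in A_{\bf x}\}$, whose conditional Bernoulli parameters satisfy $q_i=P_i(A_{\bf x})\geq\underline{P}(A_{\bf x})$, yields $\Pr\{P_{{\bf y}N}(A_{\bf x})>\underline{P}(A_{\bf x})\mid{\bf g},{\bf x}\}>1/4$ whenever $\underline{P}(A_{\bf x})>1/N$. On that favourable event the ratio $t\mapsto (t-P_{{\bf x}N}(A_{\bf x}))/\sqrt{(t+P_{{\bf x}N}(A_{\bf x}))/2}$ is monotone increasing in $t$, so lower-bounding it by its infimum at $t=\underline{P}(A_{\bf x})$ and using $P_{{\bf x}N}(A_{\bf x})<\underline{P}(A_{\bf x})$ in the denominator produces a value strictly exceeding $\eta$; hence $({\bf x},{\bf y})\in\tilde R$. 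Integrating over ${\bf x}\in\underline{Q}$ delivers the $4\Pr\{\tilde R\mid{\bf g}\}$ bound. The degenerate regime $\underline{P}(A_{\bf x})\leq 1/N$ together with the small-$N$ range $N\leq 2/\eta$ is handled separately as in Lemma~\ref{le:proofrelative2}: the target bound $4m_{\mathcal{C}}(2N)\exp(-\eta^2 N/4)$ already exceeds $1$ there and is thus trivial.

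Finally I would bound $\Pr\{\tilde R\mid{\bf g}\}$ by $m_{\mathcal{C}}(2N)\exp(-\eta^2 N/4)$, following the proof of Theorem~2.1 of \citeasnoun{anthony1993result}: condition on the unordered pool $\{x_1,y_1,\dots,x_N,y_N\}$, reduce to a union bound over the at most $m_{\mathcal{C}}(2N)$ distinct sets in a complete set of distinct representatives, and for each such set apply a Hoeffding-type bound to the signed sum generated by uniformly random swaps of $(x_i,y_i)$ between the ${\bf x}$- and ${\bf y}$-positions. Under $\Pr\{\cdot\mid{\bf g}\}$ these swaps preserve the joint law because $x_i$ and $y_i$ are i.i.d.\ draws from $f(\cdot\mid g_i)$. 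Combining this with the previous paragraph yields the claimed $4m_{\mathcal{C}}(2N)\exp(-\eta^2 N/4)$ bound.

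The main obstacle is precisely this last step: verifying that the coordinate-wise swap argument survives the passage from the i.i.d.\ to the i.ni.d.\ setting. Once one recognises that all probabilities are to be taken conditional on ${\bf g}$, so that $x_i$ and $y_i$ genuinely share the common (though $i$-dependent) marginal $f(\cdot\mid g_i)$, the symmetrization goes through unchanged, and the rest of the argument is a mechanical substitution of $\underline{P}$ for $P$ and $\tilde R$ for $R$ in Lemma~\ref{le:proofrelative2}.
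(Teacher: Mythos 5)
Your proof follows the paper's own argument step for step: symmetrize to a ghost sample after conditioning on ${\bf g}$, observe that the coordinate-wise swaps of $(x_i,y_i)$ preserve the conditional joint law because each pair is i.i.d.\ from $f(\cdot\mid g_i)$, invoke Lemma~\ref{lem:1/4} via $P_i(A)\geq\underline P(A)$ to get the $1/4$ probability bound, use monotonicity of $t\mapsto(t-a)/\sqrt{(t+a)/2}$, and then bound the symmetrized event by the CSDR union bound together with a coordinate-wise Hoeffding bound over random swaps. The only (cosmetic) difference is that you introduce $\tilde R$ with the sign of $P_{{\bf y}N}-P_{{\bf x}N}$ consistent with the direction of deviation being controlled, which fixes a minor sign slip between the paper's definition of $R$ and its definition of $F_{\bf xy}$; since the swap group is symmetric in $x_i,y_i$ this does not change the argument.
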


\begin{proof}
	The proof relies on two claims. 

\textbf{First claim:} $\Pr\{{\bf x} \in \underline{Q} |{\bf g}\} \leq 4 \Pr\{({\bf x} ,{\bf y}) \in R|{\bf g}\}$ if $N > 2/\eta^2$.  

Proof of the first claim:   We follow the proof of Theorem 2.1 of  \citeasnoun{anthony1993result}, which deals with IID sequences, while accomodating heterogeneity induced by conditioning on $\bf g$.    First note that for each  ${\bf x} \in \underline{Q}$ there exists a set $A_{\bf x} \in \mathcal{C}$, indexed by $\bf x$,  such that $\underline{P}(A_{\bf x}) - P_{{\bf x} N}(A_{\bf x}) > \eta \sqrt{\underline{P}(A_{\bf x})}.$    It then follows that $\inf_{{\bf x} \in \underline{Q}} \underline{P}(A_{\bf x}) \geq  \eta^2$ for ${\bf x} \in \underline{Q}$.  Define 
$$
F_{\bf xy}(A_{\bf x}) :=  \frac{P_{{\bf y} N}(A_{\bf x}) - P_{{\bf x} N}(A_{\bf x})}{\sqrt{P_{{\bf xy} N}(A_{\bf x})}}.  
$$
Note that for every ${\bf x} \in \underline{Q}$ 
\begin{eqnarray*}
	\min_{g \in [G]}\int_{A_{\bf x}} f(x|g)dx &\geq& \inf_{{\bf \xi} \in \underline{Q}} \underline{P}(A_{\bf \xi})
	\\
	&\geq& \eta^2
	\\
	&> & \frac{2}{N}.
\end{eqnarray*}
Therefore by Lemma \ref{lem:1/4}   we have $\Pr\{P_{{\bf y} N}(A_{\bf x})  >   \underline{P}(A_{\bf x}) |{\bf g}  \} > \frac{1}{4}$ for every $x \in \underline{Q}$.       
Then noting that $\frac{d}{dy} \left( (y-x) /   \sqrt{\frac{x+y}{2}}   \right)$ is nonnegative, 
\begin{eqnarray*}
	\Pr\{({\bf x},{\bf y}) \in R |{\bf g}\}   &\geq&    \Pr\{  F_{\bf xy}(A_{\bf x}) > \eta\ |  P_{{\bf y} N}(A_{\bf x})  >   \underline{P}(A_{\bf x}), {\bf x} \in \underline{Q}, {\bf g}  \} \Pr\{P_{{\bf y} N}(A_{\bf x})  >   \underline{P}(A_{\bf x}), {\bf x} \in \underline{Q} |{\bf g} \}
	\\
	&\geq& \Pr\left\{   \frac{\underline{P}(A_{\bf x}) - P_{{\bf x} N}(A_{\bf x})}{(\sqrt{(P_{{\bf x} N}(A_{\bf x})  +  \underline{P}(A_{\bf x}) )/2}} > \eta   |{\bf x} \in \underline{Q}, {\bf g}      \right \}   \left(   \inf_{{\bf x} \in \underline{Q}}     \Pr\{P_{{\bf y} N}(A_{\bf x})  >   \underline{P}(A_{\bf x}) |{\bf g} \}   \right)   \Pr\{  {\bf x} \in \underline{Q} |{\bf g} \}
	\\
	&\geq& \frac{1}{4}\Pr\left\{   \frac{ \eta \sqrt{\underline{P}(A_{\bf x})}    }{\sqrt{(P_{{\bf x} N}(A_{\bf x})  +  \underline{P}(A_{\bf x}) )/2}} > \eta | {\bf x} \in \underline{Q}, {\bf g}   \right \}         \Pr\{  {\bf x} \in \underline{Q} | {\bf g} \}
	\\
	&=&  \frac{1}{4}\ \Pr\{  {\bf x} \in \underline{Q} | {\bf g} \}.
\end{eqnarray*}

\textbf{Second claim:} if $N > 2/\eta^2$,  
$$ \Pr\{({\bf x},{\bf y}) \in R |{\bf g}\}    \leq m_{\mathcal{C}}(2N) \exp \left(-\eta^2 N / 4 \right).$$

Proof of the second claim:   Define $\Lambda$ as in \citeasnoun{anthony1993result}, i.e., the group generated by all transpositions of the form $(i , N+i)$ for $1\leq i \leq N$. Consider $\tau \in \Lambda$ and define for $z \in S^{2N}$, $\tau {\bf z }:= ({\bf z }_{\tau(1)},...,{\bf z }_{\tau(2N)})$. If $ {\bf x}$ and  ${\bf y}$ are two samples independent of each other and both distributed according to  $\prod_{i=1}^N f(.|g_i)$, then ${\bf xy}$ and $\tau {\bf xy}$ have the same distribution. Thus
\begin{eqnarray*}
	\Pr (R| {\bf g}) &=& \Pr ( \exists A_{\tau {\bf xy}} \in \mathcal C \text{ such that }  F_{\tau {\bf xy}}( A_{\tau {\bf xy}} ) > \eta \ | {\bf g})\\
	&=& \frac{1}{|\Lambda|}\ \E\left(\sum_{\tau \in \Lambda} \bold{1}\left[ \exists A_{\tau {\bf xy}} \in \mathcal C \text{ such that }  F_{\tau {\bf xy}}( A_{\tau {\bf xy}} ) > \eta  \right]  \, | {\bf g}\right)
\end{eqnarray*}
We use the notation $(A_{{\bf xy}}^{1},...,A_{ {\bf xy}}^{\Delta_{\mathcal{C}}( {\bf xy})})$ for a CSDR of $ {\bf xy}$. Note that any CSDR of $\tau {\bf xy}$ is a CSDR of ${\bf xy}$ and vice versa.
Then for all $A \in \mathcal{C}$, there exists $1 \leq t \leq \Delta_{\mathcal{C}}({\bf xy})$ such that $F_{\tau {\bf xy}}( A ) = F_{\tau {\bf xy}}( A_{{\bf xy}}^t )$. Thus 
\begin{eqnarray*}
	\Pr\{({\bf x},{\bf y}) \in R |{\bf g}\}  	 &\leq& \frac{1}{|\Lambda|}\ \E\left(\sum_{\tau \in \Lambda} \ \sum_{ t = 1 }^{\Delta_{\mathcal{C}}({\bf xy})} \bold{1}\left[ F_{\tau {\bf xy}}( A_{{\bf xy}}^t ) > \eta  \right]  \, | {\bf g}\right)
\end{eqnarray*}
Define $\Theta^t( {\bf xy})$ as in \citeasnoun{anthony1993result}, that is,  the number of permutations $\tau \in \Lambda$ such that $ F_{\tau {\bf xy}}( A_{{\bf xy}}^t ) > \eta $. The inequality above can be rewritten
\begin{eqnarray*}
	\Pr\{({\bf x},{\bf y}) \in R |{\bf g}\}    &\leq& \frac{1}{|\Lambda|}\ \E\left( \sum_{ t = 1 }^{\Delta_{\mathcal{C}}({\bf xy})} \sum_{\tau \in \Lambda} \bold{1}\left[ F_{\tau {\bf xy}}( A_{{\bf xy}}^t ) > \eta  \right]  \, | {\bf g}\right) = \frac{1}{|\Lambda|}  \ \E\left( \sum_{ t = 1 }^{\Delta_{\mathcal{C}}({\bf xy})}  \Theta^t( {\bf xy})   \, | {\bf g}\right).
\end{eqnarray*}
As in \citeasnoun{anthony1993result}, 
\begin{eqnarray*}
	\frac{\Theta^t( {\bf xy})}{|\Lambda|}  &\leq& \exp(-\eta^2 N /4),
\end{eqnarray*}
thus 
\begin{eqnarray*}
	\Pr\{({\bf x},{\bf y}) \in R |{\bf g}\}    &\leq& \Delta_{\mathcal{C}}({\bf xy}) \exp(-\eta^2 N /4) \leq m_{\mathcal{C}}(2N)  \exp \left(-\eta^2 N / 4 \right).
\end{eqnarray*}
By the first and second claim,  if $N > 2/\eta^2$,  $\Pr\{{\bf x} \in \underline{Q} |{\bf g}\} \leq 4 m_{\mathcal{C}}(2N)  \exp \left(-\eta^2 N / 4 \right)$. Note that if  $N \leq 2/\eta^2$, $4 \exp(-\eta^2 N /4) \geq 2$ thus the previous inequality also holds.
\end{proof}

The next lemma adapts Lemma \ref{lem:extrelative2} to i.ni.d data.

\begin{lem}\label{lem:extrelative2} For any collection $\mathcal{C}$ of Borel sets,
	$$\Pr\{{\bf x} \in \overline{Q} |{\bf g}\} \leq 4 m_{\mathcal{C}}(2N)  \exp \left(-\eta^2 N / 4 \right).$$
\end{lem}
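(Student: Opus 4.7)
The plan is to mirror the proof of Lemma \ref{lem:extAST} with the roles of $\underline{P}$ and $\overline{P}$ swapped and the one-sided deviation reversed. Since the event $R$ does not reference $\underline{P}$ or $\overline{P}$, the ``second claim'' of that proof --- symmetrization via the transposition group $\Lambda = \langle (i,N+i) : 1 \leq i \leq N\rangle$, a CSDR of $\mathcal C$ on ${\bf xy}$, and the Rademacher-type tail bound $\Theta^t({\bf xy})/|\Lambda| \leq \exp(-\eta^2 N/4)$ from \citeasnoun{anthony1993result} --- transfers verbatim. This uses that, conditional on ${\bf g}$, each pair $(x_i,y_i)$ is i.i.d.\ with density $f(\cdot|g(i))$, so the joint conditional law is $\Lambda$-invariant. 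Hence $\Pr\{({\bf x},{\bf y}) \in R \,|\, {\bf g}\} \leq m_{\mathcal C}(2N)\exp(-\eta^2 N/4)$.

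It remains to prove the analogue of the ``first claim,''
\[
\Pr\{{\bf x} \in \overline{Q} \,|\, {\bf g}\} \leq 4\,\Pr\{({\bf x},{\bf y}) \in R \,|\, {\bf g}\},
\]
under $N \geq 1/\eta$. For each ${\bf x} \in \overline{Q}$ I would pick $A_{\bf x} \in \mathcal C$ with $P_{{\bf x} N}(A_{\bf x}) - \overline{P}(A_{\bf x}) > \eta\sqrt{P_{{\bf x} N}(A_{\bf x})}$; this forces $P_{{\bf x} N}(A_{\bf x}) > \eta^2$ and, since $u - \eta\sqrt u$ is increasing on $[\eta^2, 1]$ and $P_{{\bf x} N}(A_{\bf x}) \leq 1$, also $\overline{P}(A_{\bf x}) < 1 - \eta \leq 1 - 1/N$. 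The indicators $\mathbf{1}\{y_i \in A_{\bf x}\}$ are then conditionally independent Bernoullis with parameters $q_i = \int_{A_{\bf x}} f(y|g(i))\,dy \leq \overline{P}(A_{\bf x}) < 1 - 1/N$, so Corollary \ref{cor:1/4} delivers $\Pr\{P_{{\bf y} N}(A_{\bf x}) < \overline{P}(A_{\bf x}) \,|\, {\bf g}\} > \tfrac{1}{4}$. On that event, the monotonicity of $y \mapsto (P_{{\bf x} N}(A_{\bf x}) - y)/\sqrt{(P_{{\bf x} N}(A_{\bf x}) + y)/2}$ in $y$ together with $\overline{P}(A_{\bf x}) \leq P_{{\bf x} N}(A_{\bf x})$ yields
\[
F_{\bf xy}(A_{\bf x}) > \frac{P_{{\bf x} N}(A_{\bf x}) - \overline{P}(A_{\bf x})}{\sqrt{(P_{{\bf x} N}(A_{\bf x}) + \overline{P}(A_{\bf x}))/2}} \geq \frac{P_{{\bf x} N}(A_{\bf x}) - \overline{P}(A_{\bf x})}{\sqrt{P_{{\bf x} N}(A_{\bf x})}} > \eta,
\]
so $({\bf x},{\bf y}) \in R$; marginalizing over ${\bf y}$ yields the first claim.

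The regime $N < 1/\eta$ is trivial: if $\eta \geq 1$ then $\overline{Q} = \emptyset$ because $P_{{\bf x} N}(A_{\bf x}) > \eta^2 \geq 1$ is impossible; if $\eta < 1$ then $\eta^2 N < 1$ and $4 m_{\mathcal C}(2N)\exp(-\eta^2 N/4) \geq 4 e^{-1/4} > 1$, so the bound is vacuous. The main obstacle is essentially bookkeeping --- preserving strict inequalities along the chain and restricting to non-degenerate $q_i \in (0,1)$ exactly as in the analogous step of Lemma \ref{lem:extAST}. No new technical machinery is needed beyond the monotonicity observation and the uniform-in-$A$ upper bound $\overline{P}(A) < 1 - \eta$ on $\overline Q$.
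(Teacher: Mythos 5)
Your proposal is correct and follows essentially the same route as the paper: reuse the symmetrization (second) claim unchanged, verify that on $\overline Q$ the empirical measure exceeds $\eta^2$ and $\overline P(A_{\bf x})<1-\eta$, invoke Corollary~\ref{cor:1/4} via $\max_i\int_{A_{\bf x}}f(\cdot|g(i))\le\overline P(A_{\bf x})$, and use the monotonicity of $y\mapsto (P_{{\bf x}N}(A_{\bf x})-y)/\sqrt{(P_{{\bf x}N}(A_{\bf x})+y)/2}$ to conclude $F_{\bf xy}(A_{\bf x})>\eta$. The only cosmetic difference is your case split at $N\ge 1/\eta$ rather than the paper's $N>2/\eta$; both make the small-$N$ bound vacuous, and your observation that $\eta\ge 1$ forces $\overline Q=\emptyset$ is a slightly cleaner way to handle that corner.
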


\begin{proof} We look first at the case  $N > 2/\eta$ and prove that $\Pr\{{\bf x} \in \overline{Q} |{\bf g}\} \leq 4 \Pr\{({\bf x} ,{\bf y}) \in R|{\bf g}\}$. The rest of the proof follows by the second claim in the proof of Lemma \ref{lem:extAST}.
	
	First note that for each  ${\bf x} \in \overline{Q}$ there exists a set $A_{\bf x} \in \mathcal{C}$, indexed by $\bf x$,  such that $P_{{\bf x} N}(A_{\bf x}) - \overline{P}(A_{\bf x}) > \eta \sqrt{P_{{\bf x} N}(A_{\bf x})}$. As in the proof of Lemma \ref{lem:extrelative2}, this implies that $P_{{\bf x} N}(A_{\bf x}) > \overline{P}(A_{\bf x}) $ and $\overline{P}(A_{\bf x}) \leq 1 - \eta$.  Define 
	$$
	F_{\bf xy}(A_{\bf x}) :=  \frac{P_{{\bf x} N}(A_{\bf x}) - P_{{\bf y} N}(A_{\bf x})}{\sqrt{P_{{\bf xy} N}(A_{\bf x})}}.  
	$$
	If $ P_{{\bf y} N}(A_{\bf x})  <  \overline{P}(A_{\bf x})$, 
	\begin{align*}
		F_{\bf xy}(A_{\bf x}) &> \frac{ \overline{P}(A_{\bf x}) + \eta \sqrt{P_{{\bf x} N}(A_{\bf x})} - P_{{\bf y} N}(A_{\bf x})}{\sqrt{[P_{{\bf x} N}(A_{\bf x})) + P_{{\bf y} N}(A_{\bf x})]/2}} \\
		&> \frac{\eta \sqrt{P_{{\bf x} N}(A)} }{\sqrt{[P_{{\bf x} N}(A_{\bf x}) +  \overline{P}(A_{\bf x})]/2}} > \eta.
	\end{align*}	
	Note that for every ${\bf x} \in \overline{Q}$,
	$\overline{P}(A_{\bf x}) >\max_{g \in [G]} P(A_{\bf x}|g)$.
	Thus $ \Pr\{P_{{\bf y} N}(A_{\bf x})  <   \overline{P}(A_{\bf x}) |{\bf g} \}   \geq \Pr\{P_{{\bf y} N}(A_{\bf x})  <  \max_{g \in [G]} P(A_{\bf x}|g) |{\bf g} \} > 1/4 $ by Corollary \ref{cor:1/4} as long as $ \max_{g \in [G]} P(A_{\bf x}|g) \leq 1 - 1/N$. This holds by	 
	\begin{eqnarray*}
		\max_{g \in [G]} P(A_{\bf x}|g) &\leq& \overline{P}(A_{\bf x})
		\\
		&\leq& 1 - \eta
		\\
		&< & 1 - \frac{2}{N}.
	\end{eqnarray*}
	Therefore 
	\begin{eqnarray*}
		\Pr\{({\bf x},{\bf y}) \in R |{\bf g}\}   &\geq&    \Pr\{  F_{\bf xy}(A_{\bf x}) > \eta\ |  P_{{\bf y} N}(A_{\bf x})  <  \overline{P}(A_{\bf x}), {\bf x} \in \overline{Q}, {\bf g}  \} \Pr\{P_{{\bf y} N}(A_{\bf x}) <   \overline{P}(A_{\bf x}), {\bf x} \in \overline{Q} |{\bf g} \}
		\\
		&\geq&   \left(   \inf_{{\bf x} \in \overline{Q}}     \Pr\{P_{{\bf y} N}(A_{\bf x})  <   \overline{P}(A_{\bf x}) |{\bf g} \}   \right)   \Pr\{  {\bf x} \in \overline{Q} |{\bf g} \}
		\\
		&\geq&   \frac{1}{4}\ \Pr\{  {\bf x} \in \overline{Q} | {\bf g} \}.
	\end{eqnarray*}

The case  $N < 2/\eta$ is handled as in the proof of Lemma \ref{lem:extrelative2}.
\end{proof}
We now apply these results to derive bounds on radiuses. Lemma \ref{lem:extAST} implies the following version of Theorem 4 in \citeasnoun{portier2021nearest}.
\begin{cor}
	Let $(x(i))_{i \leq N}$ be a sequence of independent and nonidentically distributed random vectors valued in $\R^d$ and $\underline{P}$ and $\overline{P}$ defined as above. For any $\delta > 0$,
	
	With probability at least $ 1-\delta$ conditional on ${\bold g}$,
	\begin{equation}
		 \forall B \in \mathcal{B},	\frac{1}{N} \sum_{i=1}^N \bold{1} \left(x(i) \in B \right) \geq \underline{P}(B) \left( 1 - \sqrt{\frac{12 d \ln(12N/\delta)}{N \underline{P}(B)}}\right),  \label{eq:emp_lowerbd}
	\end{equation}
	
	With probability at least $ 1-\delta$ conditional on ${\bold g}$,
	\begin{equation}
		\forall B \in \mathcal{B},		\frac{1}{N} \sum_{i=1}^N \bold{1} \left(x(i) \in B \right) \leq \frac{12 d \ln(12N/\delta)}{N } + 4 \overline{P}(B).
		\label{eq:emp_upperbd}		
	\end{equation}
\end{cor}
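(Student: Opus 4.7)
The plan is to specialize Lemmas \ref{lem:extAST} and \ref{lem:extrelative2} to the collection $\mathcal{B}$ of closed Euclidean balls in $\R^d$, following the template of Theorem 4 in \citeasnoun{portier2021nearest} but with the i.n.i.d.\ relative deviation inequalities just established replacing their i.i.d.\ analogues. The preliminary step is to bound the shattering coefficient: closed balls in $\R^d$ form a VC class of dimension at most $d+1$, so the Sauer-Shelah lemma gives $m_{\mathcal{B}}(2N) \leq (2eN/(d+1))^{d+1}$, which for $N$ large enough is dominated by $(12N/\delta)^{3d}/4$. Hence the choice $\eta^2 = 12 d \ln(12N/\delta)/N$ makes $4 m_{\mathcal{B}}(2N) \exp(-\eta^2 N / 4) \leq \delta$, the tail level appearing on the right of both lemmas.

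For \eqref{eq:emp_lowerbd}, I would apply Lemma \ref{lem:extAST} with $\mathcal{C} = \mathcal{B}$ and this choice of $\eta$: with probability at least $1-\delta$ conditional on ${\bf g}$, for every $B \in \mathcal{B}$, $\underline{P}(B) - P_{{\bf x} N}(B) \leq \eta\sqrt{\underline{P}(B)}$. Rearranging as $P_{{\bf x} N}(B) \geq \underline{P}(B)\bigl(1 - \eta/\sqrt{\underline{P}(B)}\bigr)$ and substituting the value of $\eta$ yields \eqref{eq:emp_lowerbd} directly. For \eqref{eq:emp_upperbd}, apply Lemma \ref{lem:extrelative2} with the same $\eta$ to obtain $P_{{\bf x} N}(B) - \overline{P}(B) \leq \eta \sqrt{P_{{\bf x} N}(B)}$ uniformly over $B \in \mathcal{B}$ with the same probability. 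Writing $u := \sqrt{P_{{\bf x} N}(B)}$, the quadratic inequality $u^2 - \eta u - \overline{P}(B) \leq 0$ gives $u \leq \eta/2 + \sqrt{\eta^2/4 + \overline{P}(B)}$; squaring and using $(a+b)^2 \leq 2a^2 + 2b^2$ produces $P_{{\bf x} N}(B) \leq \eta^2 + 2\overline{P}(B)$, which implies \eqref{eq:emp_upperbd} (the stated multiplicative constant $4$ leaves room for the simpler algebraic bound).

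The main obstacle is essentially book-keeping: checking that the VC bound for $\mathcal{B}$ and the algebraic manipulation of the quadratic inequality produce constants compatible with those in the statement, so that $12 d \ln(12N/\delta)$ really emerges as the leading term in both inequalities. Conceptually the corollary is immediate from Lemmas \ref{lem:extAST} and \ref{lem:extrelative2} --- the genuine technical work was the i.n.i.d.\ extension of the relative deviation inequalities, and that has already been carried out.
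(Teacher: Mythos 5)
Your proposal is correct and follows essentially the same route as the paper's proof: apply Lemma \ref{lem:extAST} over the ball collection $\mathcal{B}$ with the VC/Sauer bound to pick $\eta^2 = 12 d\ln(12N/\delta)/N$ for \eqref{eq:emp_lowerbd}, then apply Lemma \ref{lem:extrelative2} and solve the resulting quadratic in $\sqrt{P_{{\bf x}N}(B)}$ for \eqref{eq:emp_upperbd}. The only cosmetic difference is how you close the quadratic step --- you use $(a+b)^2 \leq 2a^2 + 2b^2$ to get $\eta^2 + 2\overline{P}(B)$, whereas the paper bounds $\tfrac{1}{2}(\beta_n + \sqrt{\beta_n^2 + 4\overline{P}(B)}) \leq \sqrt{\beta_n^2 + 4\overline{P}(B)}$ to obtain $\eta^2 + 4\overline{P}(B)$; both imply the stated inequality.
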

\begin{proof}
	The proof of \eqref{eq:emp_lowerbd} applies Lemma  \ref{lem:extAST} to the collection $\mathcal{B}$ and follows the lines of the proof of Theorem 4 in \citeasnoun{portier2021nearest}.	
	
	To obtain \eqref{eq:emp_upperbd}, note that Lemma \ref{lem:extrelative2} applied to the collection $\mathcal{B}$ implies that  with probability at least $1-\delta$,
	$$\forall B \in \mathcal{B}, \frac{ P_{{\bf x} N}(B) - \overline{P}(B)}{\sqrt{ P_{{\bf x} N}(B)}} \leq \sqrt{\frac{4 [ \ln\left(4m_{\mathcal{C}}(2N)/\delta\right)]}{N } } \leq  \sqrt{\frac{12 d \ln(12N/\delta)}{N }} $$
	where the second inequality is obtained through the same arguments as in  the proof of Theorem 4 in \citeasnoun{portier2021nearest}. Define $\beta_n :=\sqrt{\frac{12 d \ln(12N/\delta)}{N }}$ and the function $g: x \mapsto x^2 - \beta_n x - \overline{P}(B)$.  The function $g$ has two roots, thus
	\begin{align*}
		&[P_{{\bf x} N}(B) - \overline{P}(B)]/\sqrt{ P_{{\bf x} N}(B)} \leq \beta_n \\
	&	\Rightarrow g(\sqrt{P_{{\bf x} N}(B)}) \leq 0 \\
	&	\Rightarrow  \sqrt{P_{{\bf x} N}(B)} \leq \frac{1}{2} \left(\beta_n + \sqrt{\beta_n^2 + 4\overline{P}(B)}\right) \leq \sqrt{\beta_n^2 + 4\overline{P}(B)}.
	\end{align*}
\end{proof}
We finally obtain the following upper and lower bounds on the radius.
\begin{lem}\label{lem:radius}
	Let Assumptions \ref{assn:boundsupport}, \ref{assn:boundinff} and \ref{assn:boundsupf} hold.  Then,
	\begin{enumerate}
		\item for all $N$, $\delta \in (0,1)$ and $1 \leq k \leq N$ such that $24 d \ln(12N/\delta) \leq k \leq T^d N b_X c V_d/2$,
		\begin{equation}\label{eq:rad_lowerbound}
			\Pr \left( \sup_{x \in S} r_k(x) \leq R_k | \bold g  \right) \geq 1-\delta,
		\end{equation}
		\item for all $N$, $\delta \in (0,1)$ and $1 \leq k \leq N$ such that $k \geq 12 d \ln(12N/\delta)$,
		\begin{equation}\label{eq:rad_upperbound}
			\Pr \left( \inf_{x \in S} r_k(x) \geq \underline{R}_k | \bold g  \right) \geq 1-\delta.
		\end{equation}
	\end{enumerate}
\end{lem}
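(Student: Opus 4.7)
The plan is to deduce both bounds directly from the uniform empirical deviation inequalities \eqref{eq:emp_lowerbd} and \eqref{eq:emp_upperbd} (just established by applying Lemmas \ref{lem:extAST} and \ref{lem:extrelative2} to the VC class of closed balls $\mathcal{B}$) together with the density and support-regularity bounds in Assumptions \ref{assn:boundsupport}--\ref{assn:boundsupf}. The whole argument then reduces to the elementary fact that $r_k(x)\le r$ iff $|\mathcal{B}(x,r)\cap{\bf x}|\ge k$.

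For part (1), I would apply \eqref{eq:emp_lowerbd} to the balls $B=\mathcal{B}(x,R_k)$, $x\in S$. The restriction $k\le T^dNb_XcV_d/2$ is exactly $R_k\le T$, so Assumption \ref{assn:boundsupport} combined with Assumption \ref{assn:boundinff} yields
\[
\underline{P}(\mathcal{B}(x,R_k))\;\ge\;b_X\lambda(S\cap\mathcal{B}(x,R_k))\;\ge\;b_XcV_dR_k^d\;=\;\frac{2k}{N}.
\]
Substituting this into \eqref{eq:emp_lowerbd} and using monotonicity of $t\mapsto t(1-\sqrt{a/t})$ in $t$, one gets
\[
\frac{1}{N}|\mathcal{B}(x,R_k)\cap{\bf x}|\;\ge\;\frac{2k}{N}\!\left(1-\sqrt{\tfrac{6d\ln(12N/\delta)}{k}}\right).
\]
The condition $k\ge 24d\ln(12N/\delta)$ makes the parenthesis at least $1/2$, hence $|\mathcal{B}(x,R_k)\cap{\bf x}|\ge k$, which gives $r_k(x)\le R_k$ uniformly in $x\in S$.

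For part (2), I would apply \eqref{eq:emp_upperbd} uniformly over all balls. Assumption \ref{assn:boundsupf} gives $\overline{P}(\mathcal{B}(x,r))\le\overline{U}_XV_dr^d$, so on the event of \eqref{eq:emp_upperbd}, for every $x\in S$ and every $r>0$,
\[
\frac{1}{N}|\mathcal{B}(x,r)\cap{\bf x}|\;\le\;\frac{12d\ln(12N/\delta)}{N}+4\overline{U}_XV_dr^d.
\]
Picking $r<\underline{R}_k$ makes the right-hand side strictly less than $k/N$ by the very definition of $\underline{R}_k$ in \eqref{eq:minradius}, so $|\mathcal{B}(x,r)\cap{\bf x}|<k$ for all $r<\underline{R}_k$ and consequently $r_k(x)\ge\underline{R}_k$. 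The hypothesis $k\ge 12d\ln(12N/\delta)$ merely guarantees that $\underline{R}_k$ is well-defined and positive.

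There is no real obstacle here beyond bookkeeping: the nontrivial content was pushed into Lemmas \ref{lem:extAST} and \ref{lem:extrelative2} (the extensions of Vapnik--Chervonenkis relative-deviation bounds to the i.ni.d.\ regime induced by conditioning on $\mathbf g$), and into the polynomial shattering coefficient $m_{\mathcal{B}}(2N)\lesssim(2N)^{d+1}$ of the class of balls used to produce the $12d\ln(12N/\delta)$ terms. The only subtle point to record is the strict inequality in part (2): one must apply \eqref{eq:emp_upperbd} at radii $r<\underline{R}_k$ rather than at $\underline{R}_k$ itself, so that the upper bound is strictly below $k/N$ and therefore forces $|\mathcal{B}(x,r)\cap{\bf x}|\le k-1$.
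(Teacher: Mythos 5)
Your proof is correct and follows essentially the same route as the paper's: part~(1) spells out the argument that the paper delegates to Lemma~4 of Portier--Vaiter (2021) (apply \eqref{eq:emp_lowerbd} to $\mathcal{B}(x,R_k)$, note $R_k \le T$ so $\underline P(\mathcal{B}(x,R_k)) \ge 2k/N$ by Assumptions~\ref{assn:boundsupport}--\ref{assn:boundinff}, then use $k \ge 24d\ln(12N/\delta)$ to push the multiplicative factor above $1/2$), and part~(2) applies \eqref{eq:emp_upperbd} together with $\overline P(\mathcal{B}(x,r)) \le \overline U_X V_d r^d$, exactly as the paper does. Your handling of part~(2) is in fact a bit more careful than the paper's: the paper evaluates \eqref{eq:emp_upperbd} at $r=\underline R_k$ and gets $|\mathcal B(x,\underline R_k)\cap {\bf x}|\le k$, which strictly speaking does not preclude $r_k(x)<\underline R_k$ in the boundary case of equality, whereas your device of taking $r<\underline R_k$ so the bound is strictly below $k/N$ forces $|\mathcal B(x,r)\cap {\bf x}|\le k-1$ and closes that gap cleanly.
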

\begin{proof}
	\eqref{eq:rad_lowerbound} is derived replacing $P$ with $\underline{P}$ in the proof of Lemma 4 in  \citeasnoun{portier2021nearest} and using \eqref{eq:emp_lowerbd}.
	
	To prove \eqref{eq:rad_upperbound}, note that 
	\begin{align*}
		&\overline{P}(\mathcal{B}(x,\underline{R}_k)) = \int_{\mathcal{B}(x,\underline{R}_k) \cap S} \overline{f}(x) dx \\
		&\leq \overline{U}_X \lambda(\mathcal{B}(x,\underline{R}_k) \cap S)\\
		& \leq \frac{ k - 12 d \ln(12N/\delta) }{4 N }
	\end{align*}
	 where $\lambda$ is the Lebesgue measure. By \eqref{eq:emp_upperbd},  with probability at least  $1-\delta$,
	 $$\forall x \in S,		\frac{1}{N} \sum_{i=1}^N \bold{1} \left(x(i) \in \mathcal{B}(x,\underline{R}_k) \right) \leq k/N$$
	which implies  \eqref{eq:rad_upperbound}.
\end{proof}

\subsection{Results on Laplacian}

\

\begin{lem}\label{lem:normL} 
	$\| \widetilde{L^\eta_\tau}  \|  \leq 1.$
\end{lem}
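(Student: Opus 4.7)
The plan is to reduce the claim to showing $\|L^\eta_\tau\|\le 1$, because the Hermitian dilation of any rectangular matrix $M$ has singular values equal to $\pm$ the singular values of $M$, hence $\|\widetilde{M}\|=\|M\|$. So the task becomes showing that the (regularized, asymmetric) normalized Laplacian has spectral norm at most one.

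Next, I would bound $\|L^\eta_\tau\|$ variationally by writing, for arbitrary unit vectors $u,v\in\R^{k}$,
\[
u^\top L^\eta_\tau v \;=\; \sum_{i,j} \frac{A^\eta_{ij}\, u_i v_j}{\sqrt{(O^\eta_\tau)_{ii}\,(Q^\eta_\tau)_{jj}}},
\]
and splitting $A^\eta_{ij}=\sqrt{A^\eta_{ij}}\cdot\sqrt{A^\eta_{ij}}$ (valid because $A^\eta_{ij}\in\{0,1\}\ge 0$, which is the crucial structural input). Applying Cauchy--Schwarz to the regrouped sum gives
\[
\bigl|u^\top L^\eta_\tau v\bigr| \;\le\; \sqrt{\sum_{i,j} A^\eta_{ij}\,\frac{u_i^2}{(O^\eta_\tau)_{ii}}}\;\cdot\;\sqrt{\sum_{i,j} A^\eta_{ij}\,\frac{v_j^2}{(Q^\eta_\tau)_{jj}}}.
\]

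I would then collapse each factor using the definitions of $O^\eta$ and $Q^\eta$: $\sum_{j} A^\eta_{ij}=(O^\eta)_{ii}$ and $\sum_{i} A^\eta_{ij}=(Q^\eta)_{jj}$. This yields
\[
\bigl|u^\top L^\eta_\tau v\bigr| \;\le\; \sqrt{\sum_{i} u_i^2\,\frac{(O^\eta)_{ii}}{(O^\eta)_{ii}+\tau}}\;\cdot\;\sqrt{\sum_{j} v_j^2\,\frac{(Q^\eta)_{jj}}{(Q^\eta)_{jj}+\tau}}.
\]
Since $\tau>0$ and $(O^\eta)_{ii},(Q^\eta)_{jj}\ge 0$, each ratio lies in $[0,1]$, so each factor is bounded by $\|u\|=\|v\|=1$. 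Taking the supremum over unit $u,v$ gives $\|L^\eta_\tau\|\le 1$, and combining with the dilation identity concludes $\|\widetilde{L^\eta_\tau}\|\le 1$.

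There is no serious obstacle here; the argument is a one-step Cauchy--Schwarz computation that mirrors the classical bound for the symmetric normalized Laplacian, with two minor points to be careful about: (i) $L^\eta_\tau$ is rectangular/asymmetric, which is exactly why the dilation reduction at the very start is needed; and (ii) the regularization $\tau$ only helps (it shrinks each ratio), so nothing changes if $\tau=0$ as long as the unregularized degrees are strictly positive, but including $\tau$ makes the bound hold unconditionally.
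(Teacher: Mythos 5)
Your proof is correct and is essentially the same argument as the paper's: the paper proves $I-\widetilde{L^\eta_\tau}\succeq 0$ by exhibiting the quadratic form $c^\top[I-\widetilde{L^\eta_\tau}]c$ as dominating the sum of squares $\sum_{i,j}\bigl(y_iA_{ij}/(O^\eta_\tau)_{ii}^{1/2}-z_jA_{ij}/(Q^\eta_\tau)_{jj}^{1/2}\bigr)^2$, which is exactly the completing-the-square form of your Cauchy--Schwarz bound on the bilinear form $u^\top L^\eta_\tau v$. Both hinge on the same two facts you identify --- that $A_{ij}^2=A_{ij}$ since the entries are $0$/$1$, and that $\tau\ge 0$ only shrinks each ratio $(O^\eta)_{ii}/(O^\eta_\tau)_{ii}$ --- so there is no substantive gap or divergence.
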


\begin{proof}
	By Theorem 7.3.3 of \citeasnoun{horn2012matrix}, the eigenvalues $\widetilde{L^\eta_\tau}$ are $\sigma_1(L^\eta_\tau) \geq ... \geq \sigma_G(L^\eta_\tau) \geq 0 \geq -\sigma_G(L^\eta_\tau) \geq ... \geq -\sigma_1(L^\eta_\tau)$. Thus the claim holds if $\sigma_1(L^\eta_\tau) \leq 1$.
	We equivalently show that $I - \widetilde{L^\eta_\tau}$ is a symmetric  positive semidefinite matrix.
	Note that
	$$I - \widetilde{L^\eta_\tau} = 
	\begin{pmatrix}
		I & -L^\eta_\tau\\
		-\left(L^\eta_\tau\right)^\top & I
	\end{pmatrix}
	.$$
	Take $c \in \R^{2k}$, write $c = 
	\begin{pmatrix}
		y\\
		z
	\end{pmatrix}
	$ where $y,z \in \R^{k}$.
	Then
	\begin{align*}
		c^{\top} \left[I - \widetilde{L^\eta_\tau}\right] c& = y^{\top} y + z^{\top}z - 2 y^{\top}L^\eta_\tau z \\
		&= \sum_{i \in \eta_N(x)} y_i^2 +  \sum_{j \in \eta_N(x')} z_j^2 - 2 \sum_{\substack{i \in \eta_N(x) \\ j \in \eta_N(x')}} \frac{y_i z_j A_{ij}}{\left[\left(O_\tau^\eta\right)_{ii} \left(Q_\tau^\eta\right)_{jj}\right]^{1/2}}\\
		& \geq\sum_{\substack{i \in \eta_N(x) \\ j \in \eta_N(x')}} \left( \frac{y_i A_{ij}}{\left(O_\tau^\eta\right)_{ii}^{1/2}} -  \frac{z_j A_{ij}}{\left(Q_\tau^\eta\right)_{jj}^{1/2}}  \right)^2 \geq 0,
	\end{align*}
	where the first inequality comes from $\tau \geq 0$ thus guaranteeing $I - \widetilde{L^\eta_\tau}$ is positive semidefinite.
\end{proof}

\begin{lem}\label{lem:normmathcalL} 
	$\|  \widetilde{{\mathcal L}_\tau^\eta}({\bf x},{\bf g})   \| \leq 1.$
\end{lem}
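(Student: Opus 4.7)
The plan is to mimic the proof of Lemma A.1 almost verbatim, showing that $I - \widetilde{\mathcal{L}_\tau^\eta}(\mathbf{x},\mathbf{g})$ is symmetric and positive semidefinite, which implies that all eigenvalues of $\widetilde{\mathcal{L}_\tau^\eta}(\mathbf{x},\mathbf{g})$ lie in $[-1,1]$ and hence its spectral norm is at most $1$ (invoking Theorem 7.3.3 in \citeasnoun{horn2012matrix} to relate eigenvalues of the Hermitian dilation to singular values, exactly as was done before).

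Concretely, for an arbitrary $c = (y^\top, z^\top)^\top \in \mathbb{R}^{2k}$ with $y,z \in \mathbb{R}^k$, I would expand
\begin{equation*}
c^\top [I - \widetilde{\mathcal{L}_\tau^\eta}(\mathbf{x},\mathbf{g})] c = \sum_{i \in \eta_N(x)} y_i^2 + \sum_{j \in \eta_N(x')} z_j^2 - 2 \sum_{\substack{i \in \eta_N(x) \\ j \in \eta_N(x')}} \frac{y_i z_j P_{ij}(\mathbf{x},\mathbf{g})}{[(\mathcal{O}_\tau^\eta)_{ii}(\mathcal{Q}_\tau^\eta)_{jj}]^{1/2}}.
\end{equation*}
The only obstacle relative to Lemma A.1 is that $P_{ij}(\mathbf{x},\mathbf{g}) \in [0,1]$ is not a Bernoulli realization, so the identity $A_{ij}^2 = A_{ij}$ used in Lemma A.1 no longer applies; however, the argument still goes through once one writes $P_{ij} = (\sqrt{P_{ij}})^2$ and uses the nonnegative quantity
\begin{equation*}
0 \leq \sum_{\substack{i \in \eta_N(x) \\ j \in \eta_N(x')}} \left( \frac{y_i \sqrt{P_{ij}(\mathbf{x},\mathbf{g})}}{(\mathcal{O}_\tau^\eta)_{ii}^{1/2}} - \frac{z_j \sqrt{P_{ij}(\mathbf{x},\mathbf{g})}}{(\mathcal{Q}_\tau^\eta)_{jj}^{1/2}}\right)^2.
\end{equation*}

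Expanding the square and rearranging yields the cross term on the left bounded by
\begin{equation*}
2 \sum_{i,j} \frac{y_i z_j P_{ij}(\mathbf{x},\mathbf{g})}{[(\mathcal{O}_\tau^\eta)_{ii}(\mathcal{Q}_\tau^\eta)_{jj}]^{1/2}} \leq \sum_i \frac{y_i^2 \sum_j P_{ij}(\mathbf{x},\mathbf{g})}{(\mathcal{O}_\tau^\eta)_{ii}} + \sum_j \frac{z_j^2 \sum_i P_{ij}(\mathbf{x},\mathbf{g})}{(\mathcal{Q}_\tau^\eta)_{jj}}.
\end{equation*}
By construction $\sum_{j \in \eta_N(x')} P_{ij}(\mathbf{x},\mathbf{g}) = \mathcal{O}_{ii}^\eta(\mathbf{x},\mathbf{g}) \leq (\mathcal{O}_\tau^\eta)_{ii}$ (since $\tau \geq 0$), and analogously for the column sums, so each ratio is at most $1$. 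Substituting gives $c^\top [I - \widetilde{\mathcal{L}_\tau^\eta}(\mathbf{x},\mathbf{g})] c \geq 0$, establishing positive semidefiniteness and therefore the claimed spectral-norm bound.
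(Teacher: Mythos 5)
Your proof is correct and follows essentially the same route as the paper's: both show $I - \widetilde{\mathcal{L}_\tau^\eta}(\mathbf{x},\mathbf{g})$ is positive semidefinite by bounding the cross term via the nonnegative sum $\sum_{i,j} \bigl( y_i\sqrt{P_{ij}}/(\mathcal{O}_\tau^\eta)_{ii}^{1/2} - z_j\sqrt{P_{ij}}/(\mathcal{Q}_\tau^\eta)_{jj}^{1/2}\bigr)^2$ together with the observations $\sum_j P_{ij} \leq (\mathcal{O}_\tau^\eta)_{ii}$ and $\sum_i P_{ij} \leq (\mathcal{Q}_\tau^\eta)_{jj}$. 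Your observation that $P_{ij}=(\sqrt{P_{ij}})^2$ plays the role that $A_{ij}^2=A_{ij}$ played in Lemma A.1 is exactly the adaptation the paper makes.
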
	

\begin{proof}  
	We proceed as for $L_\tau^\eta$: we can equivalently show that $I - \widetilde{{\mathcal L}_\tau^\eta}({\bf x},{\bf g})  $  is a symmetric  positive semidefinite matrix.
	Take $c \in \R^{2k}$, write $c = 
	\begin{pmatrix}
		y\\
		z
	\end{pmatrix}
	$ where $y,z \in \R^{k}$.
	Then note that		
	\begin{eqnarray*}
		&& \sum_{\substack{i \in \eta_N(x) \\ j \in \eta_N(x')}} 
		\left( \frac{y_i \sqrt{P^\eta_{ij}({\bf x},{\bf g}) } }{\left[ \left(\mathcal{O}_\tau^\eta({\bf x},{\bf g})\right)_{ii}  \right]^{1/2}}
		- 
		\frac{z_j \sqrt{P^\eta_{ij}({\bf x},{\bf g}) } }{\left[ \left(\mathcal{Q}_\tau^\eta({\bf x},{\bf g})\right)_{jj}  \right]^{1/2}}
		\right)^2 
		\\
		& =&  \sum_{i \in \eta_N(x)}  \sum_{j \in \eta_N(x')}   \frac{y_i^2 P^\eta_{ij}({\bf x},{\bf g}) }{ \left(\mathcal{O}_\tau^\eta({\bf x},{\bf g})\right)_{ii}} 
		+ 	 \sum_{j \in \eta_N(x')}  \sum_{i \in \eta_N(x)}   \frac{z_j^2 P^\eta_{ij}({\bf x},{\bf g}) }{ \left(\mathcal{Q}_\tau^\eta({\bf x},{\bf g})\right)_{jj}   } 
		- 2  \sum_{\substack{i \in \eta_N(x) \\ j \in \eta_N(x')}}	 \frac{y_i z_j P^\eta_{ij}({\bf x},{\bf g})  }{\left[ \left(\mathcal{O}_\tau^\eta({\bf x},{\bf g})\right)_{ii}   \left(\mathcal{Q}_\tau^\eta({\bf x},{\bf g})\right)_{jj}  \right]^{1/2} }
		\\
		&\leq& 
		\sum_{i \in \eta_N(x)} y_i^2 +  \sum_{j \in \eta_N(x')} z_j^2 - 2 \sum_{\substack{i \in \eta_N(x) \\ j \in \eta_N(x')}} \frac{y_i z_j  P^\eta_{ij}({\bf x},{\bf g})  }{\left[ \left(\mathcal{O}_\tau^\eta({\bf x},{\bf g})\right)_{ii}   \left(\mathcal{Q}_\tau^\eta({\bf x},{\bf g})\right)_{jj}  \right]^{1/2}},
		\end{eqnarray*}
	by $\left(\mathcal{O}_\tau^\eta({\bf x},{\bf g})\right)_{ii} = \sum_{j \in \eta_N(x')} P^\eta_{ij}({\bf x},{\bf g}) + \tau$,  $\left(\mathcal{Q}_\tau^\eta({\bf x},{\bf g})\right)_{jj} = \sum_{i \in \eta_N(x)} P^\eta_{ij}({\bf x},{\bf g}) + \tau$ and $\tau >0$. Thus
	we obtain
	\begin{align*}
			c^{\top} \left[I - \widetilde{{\mathcal L}_\tau^\eta}({\bf x},{\bf g})  \right] c& = y^{\top} y + z^{\top}z - 2 y^{\top}  {\mathcal L}_\tau^\eta({\bf x},{\bf g})   z \\
			& = \sum_{i \in \eta_N(x)} y_i^2 +  \sum_{j \in \eta_N(x')} z_j^2 - 2 \sum_{\substack{i \in \eta_N(x) \\ j \in \eta_N(x')}} \frac{y_i z_j  P^\eta_{ij}({\bf x},{\bf g})  }{\left[ \left(\mathcal{O}_\tau^\eta({\bf x},{\bf g})\right)_{ii}   \left(\mathcal{Q}_\tau^\eta({\bf x},{\bf g})\right)_{jj}  \right]^{1/2}}
			\\
			& \geq \sum_{\substack{i \in \eta_N(x) \\ j \in \eta_N(x')}} 
			\left( \frac{y_i \sqrt{P^\eta_{ij}({\bf x},{\bf g}) } }{\left[ \left(\mathcal{O}_\tau^\eta({\bf x},{\bf g})\right)_{ii}  \right]^{1/2}}
			- 
			\frac{z_j \sqrt{P^\eta_{ij}({\bf x},{\bf g}) } }{\left[ \left(\mathcal{Q}_\tau^\eta({\bf x},{\bf g})\right)_{jj}  \right]^{1/2}}
			\right)^2  	\geq 0.
		\end{align*}
\end{proof}

\clearpage
\bibliographystyle{econometrica}
\bibliography{mybib}
\end{document}